\NewDocumentEnvironment{When}{m +b}{%
  \providebool{#1}%
  \ifbool{#1}{#2}{}%
}{}%
\NewDocumentEnvironment{Unless}{m +b}{%
  \providebool{#1}%
  \ifbool{#1}{}{#2}%
}{}%
\crefname{invariantsi}{invariant}{invariants}
\NewDocumentEnvironment{Note}{+b}{%
  \begin{When}{bookdraft}%
    \footnote{#1}
  \end{When}
}{}%
\NewDocumentCommand{\undernote}{s O{blue} m m}{
  \IfBooleanT{#1}{\smash}%
  {\color{#2} %
    \underbrace{\normalcolor%
      #4}_{\mathclap{\text{#3}}} %
  }%
  \IfBooleanT{#1}{\vphantom{#4}}
}
\newcommand{\defterm}[1]{{\boldmath\normalfont \bfseries #1}}%
\renewcommand{\defterm}{\emph}%
\g@addto@macro\bfseries{\boldmath}
\NewDocumentEnvironment{rightfigmp}{O{1em} m m}{
  \noindent%
  \begin{minipage}[t]{\linewidth}%
    \noindent%
    \begin{minipage}[t]{\linewidth - #2 - #1}%
    }{
    \end{minipage}%
    \hfill%
    \begin{minipage}[t]{#2}%
      \includegraphics[width=#2,valign=t]{figures/#3}%
    \end{minipage}
  \end{minipage}
}
\titlespacing*{\paragraph}{%
  0pt}{
  {\medskipamount}}{
  1em}
\providecommand{\wrt}{with respect to\xspace}%
\NewDocumentEnvironment{Inset}{}{
  \fontsize{10pt}{12pt}\selectfont
  \renewcommand{\paragraph}[1]{

    \medskip\noindent\textbf{##1}}%
  \providecommand{\subparagraph}[1]{

    \medskip\noindent\textit{##1}}%
  \renewcommand{\subparagraph}[1]{

    \medskip\noindent\textit{##1}}%

  \noindent%
  \begin{oframed}%
  }{
  \end{oframed}%
}
\newcommand{\chapter}[1]{}
\definecolor{calypso}{RGB}{50, 104, 145}
\definecolor{almostblack}{RGB}{18, 18, 18} %
\newcommand{\FM}{\cite{frank-miklos}\xspace}%
\newcommand{\MatroidUnionTime}{\bigO{n + \opt r \log{kr}}}%
\newcommand{\ApxMatroidUnionTime}{\bigO{n + \opt \log{k \therank} / \eps}}
\newcommand{\ApxBasePackingTime}{\bigO{n + k \therank \log{n} \log{k
    \therank} / \eps}}%
\newcommand{\RandomizedApxBasePackingTime}{\bigO{n +  r \log{n}^2
    \parof{\log \log{n} + \log{1/\eps}} /
    \eps^3}}%
\newcommand{\ApxBaseCoveringTime}{\bigO{n \log{n} \log{kr} /
    \eps}}%
\newcommand{\RandomizedApxCoveringTime}{\bigO{n + \therank \ln{n} \ln{\therank / \eps} / \eps^3}}
\newcommand{\CunninghamMatroidUnionTime}{\bigO{\parof{\opt^{3/2} + k}
    n Q + \opt^{1/2} k n}}
\newcommand{\CunninghamOptimalReinforcementTime}{\bigO{\parof{(kr)^{3/2}
      + k} n^2 Q + k^{3/2} r n^2}}
\newcommand{\RandomizedApxMatroidUnionTime}{\bigO{n + \parof{\opt /
      k} \log{n} \log{\therank \log{n} / \eps^2} /
    \eps^3}}
\newcommand{\KargerBasePackingTime}{\bigO{n + \therank^{3/2}
    \numM^{5/2} \log{k\therank}^{3/2}}} %
\newcommand{\KargerDeterministicBasePackingTime}{\bigO{n +
    \parof{\therank k}^3 \log{kr}^{\bigO{1}}}} %
\newcommand{\ForestUnionTime}{\bigO{m \ack{n} + n \opt \parof{\log{n}
      + \log{k} \ack{n}}}} %
\newcommand{\ForestUnionTimeB}{\bigO{m \ack{n} + \opt^{3/2} \log{n k}^2 \ack{n}^2}} %
\newcommand{\ApxForestUnionTime}{\bigO{m \ack{n} + \opt \log{n}
    \parof{\log{n} + \log{k} \ack{n}} / \eps}}
\newcommand{\RandomizedApxForestUnionTime}{\bigO{m \ack{n} + n
    \log{n}^2 \parof{\log{n} + \log{k} \ack{n}} / \eps^3}}
\newcommand{\ApxTreePackingTime}{\bigO{n + \parof{\opt / k} \log{n}
    \log{\therank \log{n} / \eps^2}}} %
\newcommand{\RandomizedApxTreePackingTime}{\bigO{n + \parof{\opt / k}
    \log{n} \log{\therank \log{n} / \eps^2}}}
\newcommand{\ApxTreeCoveringTime}{\bigO{m \ack{n} + n k \log{n}
    \parof{\log{n} + \log{k} \ack{n}} / \eps}} %
\newcommand{\RandomizedApxTreeCoveringTime}{ \bigO{m \ack{n} + n
    \ln{n} \parof{\log{n} + \log{\log{n} / \eps} \ack{n} } / \eps^3}
} %
\newcommand{\RandomizedApxMatroidStrengthTime}{%
  \bigO{\parof{n + r \log{n} \log{\therank}} \log{n U / \therank} +
    \therank \ln{n} \ln{\therank / \eps} / \eps^3 } %
}                                                   %
\newcommand{\MatroidUnionWithAugmentingPathsTime}{\bigO{n + \opt
    \sqrt{n' \log{kr}}}}
\newcommand{\MatroidUnionWithAugmentingPathsNotation}{n' = \min{n +
    \opt \log{\therank}, \therank k \log{k \therank}}}
\begin{document}

\title{Faster exact and approximation algorithms for
  packing~and~covering~matroids via push-relabel\footnote{Manuscript
    submitted to a conference in July 2022. Small edits made in a
    rebuttal phase in September 2022.}}

\author{Kent
  Quanrud\footnote{Purdue University. \algo{krq@purdue.edu}. Supported
    in part by NSF grant CCF-2129816.}}

\date{}

\maketitle

\begin{abstract}
  Matroids are a fundamental object of study in combinatorial
  optimization. Three closely related and important problems involving
  matroids are maximizing the size of the union of $k$ independent
  sets (that is, \emph{$k$-fold matroid union}), computing $k$
  disjoint bases (a.k.a.\ \emph{matroid base packing}), and covering
  the elements by $k$ bases (a.k.a.\ \emph{matroid base
    covering}). These problems generalize naturally to integral and
  real-valued capacities on the elements.  This work develops faster
  exact and/or approximation problems for these and some other closely
  related problems such as optimal reinforcement and matroid
  membership. We obtain improved running times both for general
  matroids in the independence oracle model and for the graphic
  matroid.  The main thrust of our improvements comes from developing
  a faster and unifying \emph{push-relabel} algorithm for the
  integer-capacitated versions of these problems, building on previous
  work by \citet{frank-miklos}. We then build on this algorithm in two
  directions. First we develop a faster augmenting path subroutine for
  $k$-fold matroid union that, when appended to an approximation
  version of the push-relabel algorithm, gives a faster exact
  algorithm for some parameters of $k$. In particular we obtain a
  subquadratic-query running time in the uncapacitated setting for the
  three basic problems listed above.  We also obtain faster
  approximation algorithms for these problems with real-valued
  capacities by reducing to small integral capacities via randomized
  rounding.  To this end, we develop a new randomized rounding
  technique for base covering problems in matroids that may also be of
  independent interest.
\end{abstract}


\section{Introduction}

Matroids are a fundamental object of study in combinatorial
optimization. Three closely related and important problems involving
matroids are maximizing the size of the union of $k$ independent sets
(also known as \emph{matroid union} where we are taking $k$ copies of
the same matroid), computing $k$ disjoint bases in bases (a.k.a.\
\emph{matroid base packing}), and covering the groundset by $k$ bases
(a.k.a.\ \emph{matroid base covering}). These problems generalize
naturally to integral and real-valued capacities as we explain later.
This work develops faster exact and/or approximation problems for
these and some other closely related problems.  The main thrust of our
improvements comes from developing a faster and unifying
\emph{push-relabel} algorithm for the integer-capacitated versions of
these problems.

The push-relabel framework is most commonly associated with maximum
flow. The push-relabel framework for flow, introduced by
\cite{Goldberg1985} and then improved and generalized by
\cite{goldberg-tarjan-push-relabel} (cf.\ \cite{goldberg-87}), is an
elegant framework that has lead to new algorithms and perspectives for
max-flow (e.g.,
\cite{ahuja-orlin-89,CheriyanMaheshwari1989,Tuncel1994}) and related
problems such as minimum cost flow \cite{GoldbergTarjan1990},
parametric flow \cite{ggt-89} and directed minimum cuts
\cite{hao-orlin,hrg-00} (to name a few). Empirical work has also shown
that push-relabel algorithms can work well in practice (e.g.,
\cite{cg-97,cgm-98}).  Loosely speaking, the push-relabel algorithm
for flow departs from previous algorithms by making local
improvements, ``pushing'' flow along one edge at a time, rather than
augmenting along paths. Rather than maintaining an $(s,t)$-flow
(conserving flow at non-terminals), the push-relabel algorithm
maintains a relaxation of flow called \emph{preflows} the allow for
positive surplus at non-terminal vertices. Vertex labels, assigning
integer levels to each vertex, are introduced to guide the push
operations and obtain polynomial running times.

The push-relabel concept has been extended (in a much more abstract
form) to more general combinatorial problems including submodular flow
\cite{FujishigeZhang1996}, submodular intersection
\cite{FujishigeZhang1992}, submodular minimization
\cite{FleischerIwata2000,FleischerIwata2003,IwataOrlin2009}, and other
related problems
\cite{IwataMurotaShigeno1997,IwataMcCormickShigeno2000}.  More
recently, \citet{frank-miklos} developed simpler strongly
polynomial-time push-relabel algorithms for abstract combinatorial
optimization problems ranging from matroid partition to submodular
flow. Their work builds on and helps unify some of these past
results. Our work directly builds on \cite{frank-miklos}. Focusing on
the class of matroid problems listed above, we contribute both
structural observations and algorithmic techniques to their framework
that accelerate exact algorithms and also extend the framework to
yield fast approximation algorithms.

We complement these techniques (that fall within the push-relabel
framework) in two different ways. The first is to design an augmenting
path subroutine specialized to integer-capacitated $k$-fold matroid
union, which can be used augment an approximate solution produced
(very quickly) by the push-relabel framework to an optimal one. We
point out that this augmenting-path subroutine requires additional
structure provided by the push-relabel framework to work. The
augmenting paths lead to faster running time for certain ranges of
parameters, including a subquadratic-query running time for the
uncapacitated setting.  The second extension is based on random
sampling, which is used to reduce approximation problems for
real-capacitated problems to approximation problems with (small)
integer capacities. Some of these randomized rounding techniques are
new while others are provided by previous work \cite{Karger1998}.  The
reduced setting with integer capacities is particularly well-suited
for the push-relabel algorithms that we develop and this leads to a
series of approximation algorithms with nearly linear oracle and
running time complexity for real-valued capacities.

\subsection{Outline of results}

\labelsection{matroid-results}
\labelsection{graph-results}
\label{results}
\labelsection{results}

We now outline the improved running times obtained in this work.  For
the sake of brevity we defer more detailed descriptions of each
problem to later in the paper when they are analyzed in full. For each
we state the new running time and only the most competitive and
directly comparable running times in the literature (that we are aware
of). We provide pointers to the full theorem statements for each
result. Additional background is described in
\cref{additional-background}.  The problems we discuss are all basic
and well-studied problems in matroid optimization and can be found in
\cite{schrijver-book}, which we refer to for additional background. We
pose these problems for general matroids in the oracle model and for
the graphic matroid (i.e., forests of an undirected graph).

We briefly mention some preliminaries needed to describe the results;
additional preliminaries including further notation and relevant
definitions are provided in \refsection{preliminaries}.  For matroid
problems, we let $n$ denote the number of elements, and $\therank$ the
rank of the matroid. We assume access to an \emph{independence oracle}
to which we can query if a set of elements $S$ is independent. We
adopt the standard and simplifying convention of counting, as part of
the independence query, the $\bigO{\sizeof{S}}$ work one typically
needs to assemble and transmit the set $S$ to the oracle. For graph
problems, we let $m$ denote the number of (distinct) edges and $n$ the
number of vertices in the graph. When the graph has integer edges
capacities, we let $U$ denote the total capacity.

The results come in one of three flavors: (1) exact algorithms for
integral capacities that produce integral solutions, (2) integral
approximation algorithms for integral capacities that produce integral
solutions, and (3) approximate decision algorithms for real-valued
capacities. Type (1) usually comes with two (incomparable) running
times --- one given directly by the push-relabel algorithm, and one
combining push-relabel with augmenting paths.  Algorithms of the
second type are typically truncated versions of the corresponding
push-relabel algorithms of type (1).  Algorithms for the third type
are typically obtained by reducing to problems of the second type via
random sampling. All the algorithms for integer capacities --- of type
(1) and type (2) --- are deterministic and construct integral and
mutually certifying primal and dual solutions. The algorithms for
real-valued capacities (type (3)) are randomized Monte Carlo
algorithms that succeed with high probability. They construct an
approximate and integral dual solution but not a primal solution.

We proceed to present the results. We will present the results for
matroids in the independence oracle model first and then a parallel
set of results for the graphic matroid.

In the uncapacitated setting, the \emph{$k$-fold matroid union}
problem asks for $k$ bases $\Bs$ maximizing the size of their
union. This naturally extends to integer capacities where we now count
each element in $\Bs$ with multiplicity up to its capacity. We first
obtain a running time of $\MatroidUnionTime$ independence queries,
where $\opt$ denotes the optimum value, via the push-relabel algorithm
(\reftheorem{matroid-union}). We also obtain a running time of
\begin{math}
  \MatroidUnionWithAugmentingPathsTime
\end{math}
independence queries, where
$\MatroidUnionWithAugmentingPathsNotation$, by combining the
push-relabel algorithm with augmenting paths
(\cref{matroid-union-with-augmenting-paths}). Note that in the
uncapacitated setting, $\opt \leq n$ and $n' \leq n \log{r}$, so the
second running time is at most $\apxO{n^{3/2}}$ queries for any choice
of parameters $k$ and $\therank$.  These running times are to be
compared to $\CunninghamMatroidUnionTime$ for the uncapacitated
setting \cite{Cunningham1986}, where $Q$ denotes the time for a query
to an independence oracle. One can also reduce uncapacitated $k$-fold
matroid union to unweighted matroid intersection (with $nk$ elements
and rank $\opt$); this yields a randomized running time of
$\apxO{n k \sqrt{\opt}}$ queries \cite{Blikstad2021}.

We also consider approximation algorithms for capacitated $k$-fold
matroid union. Let $\eps \in (0,1)$ be a given parameter; the goal is
to achieve an objective value of at least $\epsless \opt$. For integer
capacities we obtain a running time of $\ApxMatroidUnionTime$
independence queries (\cref{apx-matroid-union}). The algorithm
produces both integral primal and dual solution which mutually certify
that they are $\epspm$-approximately optimal. As per comparable
running times, while \cite{Cunningham1986} does not consider
approximations explicitly, \cite{Cunningham1986} implicitly contains
an $\epsless$-approximation algorithm for the uncapacitated setting
with time $\bigO{knQ + (\opt n Q + k n) / \eps}$. One can also reduce
uncapacitated $k$-fold matroid union to approximate matroid
intersection, yielding a running time of
$\apxO{n k \sqrt{\opt} / \eps}$ queries \cite{Blikstad2021}.  The
$k$-fold matroid union problem generalizes to real-valued capacities
where one instead optimizes over fractional combinations of $k$
bases. For this setting we develop a randomized algorithm with
randomized running time bounded by $\RandomizedApxMatroidUnionTime$
independence queries. We are not aware of comparable algorithms in the
literature.

The next problem we consider is base packing, which we describe for
integer capacities. The goal is to compute $k$ bases so that no
element appears in more bases than its capacity.  It is easy to see
that (exactly solving) the problem reduces to $k$-fold matroid union,
hence we obtain the same running times as listed above. In addition to
the \cite{Cunningham1986} result mentioned above, one can also compare
to a Las Vegas randomized algorithm for the uncapacitated setting that
runs in randomized $\KargerBasePackingTime$ independence queries with
high probability \cite{Karger1998}. \cite{Karger1998} also gives a
deterministic algorithm running in
$\KargerDeterministicBasePackingTime$.

We also consider approximate base packing. The goal is to either pack
$\epsless k$ bases or certify that there is no packing of $\epsmore k$
bases, for a parameter $\eps \in (0,1)$. Here approximate $k$-base
packing does not reduce directly to approximate $k$-fold matroid
union, although we use similar techniques. We obtain a running time of
$\ApxBasePackingTime$ (\cref{apx-base-packing}). This is to be
compared to a randomized Monte Carlo algorithm for the uncapacitated
setting running in $\apxO{n + \therank^3 k / \eps^3}$ independence
queries \cite{Karger1998}. For real-valued capacities we obtain a
randomized Monte Carlo algorithm running in
$\RandomizedApxBasePackingTime$ independence queries that succeeds
with high probability (\cref{randomized-matroid-strength},
\cpageref{randomized-matroid-strength}). This can be compared to a
deterministic $\apxO{n k / \eps^2}$-query time algorithm from
\cite{cq-17-soda} or a randomized Monte Carlo algorithm running in
$\bigO{n + \therank^3 \log{kr}^{\bigO{1}} / \eps^5}$ independence
queries \cite{Karger1998}. We believe that ideas from
\cite{Karger1998,cq-17-soda} can be combined to also obtain a
randomized Monte Carlo algorithm running in
$\apxO{n + \therank / \eps^4}$ independence queries. Our randomized
algorithm for fixed $k$ can be converted to a randomized algorithm to
compute the maximum value $k$ (which we call the \emph{matroid
  strength}) via a modified binary search. The running time we obtain
is slightly better than one gets from a straightforward application of
binary search (\cref{search-matroid-strength},
\cpageref{search-matroid-strength}).

Next we discuss base covering. Here the goal is to compute $k$ bases
so that for each element, the number of bases containing each element
is at least the capacity of that element. If that is not possible then
one expects a dual certificate of feasibility. As with packing, exact
base covering can be solved by $k$-fold matroid union so we inherit
those running times. To the best of our knowledge, the best comparable
running time is that of the $k$-fold matroid union algorithm of
\cite{Cunningham1986} for the uncapacitated setting, mentioned
above. (In particular we are not aware of developments for base
covering analogous to \cite{Karger1998} or \cite{cq-17-soda} which
focus on base packing.)  For $\epsless$-approximate base covering with
integer capacities, which we note does not reduce to approximate
$k$-fold matroid union, we obtain a $\ApxBaseCoveringTime$ query
running time (\cref{apx-base-covering}). (Note that $n \leq k r$
without loss of generality for base covering.) For real-valued
capacities we may assume without loss of generality $k = 1$. In this
case we have the \emph{matroid membership} problem where the goal is
to decide if a real-valued value is in the independent set polytope of
a matroid. \cite{Cunningham1984} gave the first strongly-polynomial
time algorithm and there is an exact algorithm running in
$\bigO{n^3 \therank^2}$ independence queries \cite{Narayanan1995}.  We
obtain a randomized Monte Carlo algorithm running in
$\RandomizedApxCoveringTime$ queries
(\cref{apx-matroid-membership}). Again there is not much literature
explicitly an approximate base covering or matroid membership with
real capacities. However we believe that the techniques in
\cite{cq-17-soda} could have also obtained a deterministic
$\apxO{n k / \eps^2}$-query time algorithm; we note that such an
algorithm does not produce an integral packing when the capacities are
integral.

The final problem we discuss for matroids is optimal reinforcement. In
the integral version, given a matroid with integer element capacities,
real-valued element costs, and an integer parameter $k$, the goal is
to compute a minimum cost extension of the capacities so that the
resulting capacitated matroid has matroid strength
$k$. \cite{Cunningham1985c} proposed and analyzed this problem in
graphic matroids. Generalized to matroids, his algorithm is a
reduction to $\apxO{n}$ calls to $k$-fold matroid union. Multiplied
against the running time for \cite{Cunningham1986} mentioned above
(here $\opt = k \therank$) gives a running time of
$\CunninghamOptimalReinforcementTime$.  (\cite{Cunningham1985c} also
considers real capacities which we do not address.) We show how to
compute the optimum reinforcement in time equal to 1 call to our first
push-relabel algorithm for $k$-fold matroid union, running in
$\MatroidUnionTime$ independence queries. More precisely, we show that
the push-relabel algorithm directly solves the optimal reinforcement
with only a minor modification to how we initialize the algorithm.

We now outline the corresponding results for the graphic
matroid. Generally speaking, we take the algorithms above for general
matroids and replace the independence oracle with appropriate data
structures that can answer queries directly. This generally replaces
the query in the running times with a polylogarithmic overhead (or
better, such as the inverse Ackermann $\alpha(n)$.) For $k$-fold
matroid union --- that is, maximizing the total size of a packing of
$k$ forests --- we obtain a running time of $\ForestUnionTime$ by
push-relabel alone (\cref{max-forests}) and $\ForestUnionTimeB$ (in
connected graphs) by combining push-relabel with augmenting paths
(\cref{theorem:forest-union-b}). (Note that $\opt \leq nk$.)
Comparable running times are (a) $\bigO{n^2 k \log k}$ and (b)
$\bigO{\min{U, nk} \sqrt{k (U + n \log n)}}$.
\cite{GabowWestermann1992}. In particular our first running time
improves (a) and our second running time improves (b) because
$\opt \leq \min{U,nk}$.  For $\epsless$-approximations, we obtain an
$\ApxForestUnionTime$ for integer capacities and randomized
$\RandomizedApxForestUnionTime$ time for real capacities.

Packing bases translates to packing spanning trees. This can be solved
by the algorithms for $k$-fold union and we obtain the same running
times. In addition to the running times listed above for $k$-fold
matroid union, one can also compare to running times of
$\bigO{U n \log{m/n}}$ \cite{GabowWestermann1992} and
$\bigO{k n \sqrt{U + n \log n}}$ \cite{Gabow1991}. For
$\epsless$-approximations, we obtain $\ApxTreePackingTime$ for
approximate integer tree packings. This can be compared to a Las Vegas
randomized algorithm that runs in $\apxO{k n^{3/2} / \eps^2}$ time
with high probability.  The maximum number of spanning trees that can
be packed into a graph is called the network strength. For real-valued
capacities we obtain a randomized $\RandomizedApxTreePackingTime$ time
for $\epspm$-approximately testing the network strength and a slightly
greater running time for approximating the network strength up to a
$\epspm$-factor.  This result result can be compared with a
deterministic $\apxO{m / \eps^2}$ time algorithm or a randomized
$\apxO{m + n / \eps^4}$ time algorithm in \cite{cq-17-soda}.

Covering by bases corresponds to covering by spanning trees. Again new
exact running time for integer capacities are obtained via the
$k$-fold union algorithms listed above. Additional comparable running
times besides $k$-fold union are $\apxO{U^{5/3}}$ and
$\bigO{U n \log{n}}$ \cite{GabowWestermann1992}.  For
$\epsless$-approximations with integer capacities, we obtain a running
time of $\ApxTreeCoveringTime$. The minimum number of spanning trees
required to cover a graph is called the \emph{arboricity}. The
arboricity can be computed exactly in $\bigO{m n \log{n^2 / m}}$ time
\cite{Gabow1998}. We obtain a randomized
$\RandomizedApxTreeCoveringTime$ time Monte Carlo algorithm for
$\epspm$-approximately testing the arboricity and slightly greater
running time for approximating the arboricity up to an $\epspm$-factor
(\cref{apx-test-arboricity,apx-search-arboricity}).  As was the case
for matroids, there are not as many developments for approximate
covering by spanning trees as for packing spanning trees. However we
believe the techniques in \cite{cq-17-soda} lead to
$\apxO{m / \eps^2}$ and randomized $\apxO{m + n / \eps^4}$-time
algorithms to $\epspm$-approximate the arboricity (value).

Lastly, by the same reduction as for matroids, the optimal
reinforcement problem in graphs is solved by a single call to the
push-relabel algorithm for $k$-fold union, running in
$\ForestUnionTime$ time where $\opt$ refers to the optimum for
$k$-fold union, and is at most $nk$. This improves
\cite{Cunningham1985c}'s reduction to $n$ calls to $k$-fold union, as
well as the running time of $\bigO{n^2 m \log{n^2 / m}}$ by
\cite{Gabow1998} for small $k$. (\cite{Gabow1998}'s algorithm and
\cite{Cunningham1985c}'s reduction also extend to real-valued
capacities.)



\subsection{Discussion and overview of technical ideas}
\labelsection{techniques}
\label{techniques}

(The reader may want to skip this section on first pass and return to
it after obtaining some technical familiarity with the algorithms
presented in the body of this article.)

As mentioned above, our improved running times start with a common
``matroid push-relabel'' algorithm building on ideas from \FM. Before
describing our enhancements, it may be helpful to first describe their
framework at a high-level, especially given the unusual perspective
for those coming for flow.

Let us informally describe their algorithm for $k$-fold matroid union
for illustrative purposes. We assume the uncapacitated setting for
simplicity; the goal is to compute a set of $k$ bases $\Bs$ maximizing
the size of their union. Call an element $e$ \emph{uncovered} if it is
not in any $\basei$, \emph{covered} if it is in some $\basei$, and
\emph{overpacked} if it is in more than one $\basei$. We want to cover
as many elements as possible, and generally speaking, overpacked
elements represent wasted slots among the bases. The push-relabel
algorithm of \FM manipulates $\Bs$ by repeatedly selecting an
uncovered element $e$ and trying to exchange it into some $\basei$ so
that it is uncovered. To make direct progress one would have to
exchange it out for an overpacked element $d$; otherwise the size of
the union stays the same. However, such a profitable exchange may not
be available, even if $\Bs$ is not yet an optimal solution. While one
can exchange $e$ for other covered (but not overpacked) elements, it
is not clear how this helps. This is analogous to pushing flow from
one non-terminal vertex with surplus to another non-terminal vertex;
it is not clear that we are making progress towards a sink. This is
where the \emph{relabel} aspect of the push-relabel framework comes
in. Each element is labeled by an integer \emph{level}, and overpacked
elements are kept at level $0$. Exchanges are restricted so that an
element $e$ is exchanged into a $\basei$ for an element $d$ that is
one level below $\basei$. In some sense, the ``excess'' represented by
$e$ being uncovered shifts down one level to $d$, and thus closer to
the overpacked elements at level $0$. These ideas eventually lead to a
more elaborate argument in \FM about why the algorithm terminates in
polynomial time.

Above is a sketch omitting details, proofs of correctness, and even a
complete description of the algorithm (deferring a more technical
treatment to later).  However it starts to form an analogy between the
familiar push-relabel framework for max flow, and the abstract version
presented by \cite{frank-miklos}. Instead of labeling vertices of a
graph, we label elements of a matroid. In flow, we push flow along
edges from one vertex to another; with matroids, we ``push'' exchanges
of one element for another that maintain feasibility. Similar to flow,
pushes are restricted to go ``down'' a level, and when no pushes are
available, there is a \emph{relabel} operation where some elements
have their level increased. Doing so may reveal a violating constraint
induced by the level sets of vertices, similar to how minimum cuts
emerge from the labels in flow.

In analyzing their push-relabel framework, for problems ranging from
matroid partition to submodular flows, \cite{frank-miklos} focuses on
demonstrating that the algorithms are strongly polynomial while
keeping the algorithms and analysis as simple as
possible. (Historically, obtaining strongly polynomial running times
was highly non-trivial for these abstract problems.)  To this end,
rather than directly bound the running time, \cite{frank-miklos} gave
worst-case polynomial bounds on the number of ``basic operations'' ---
the number of push and relabel operations --- made by the
framework. (E.g., $\bigO{n^5}$ basic operations for the $k$-fold
matroid union problem above.) One can show that it takes polynomial
time (and queries to an independence oracle) to identify and execute a
basic operation, but we caution that this is far less straightforward
to do this than for flow, due to the abstract nature of matroids and
the oracle model. For example, in flow, the edges explicitly specify
where we can ``push'', while with matroids, finding an exchangeable
pair of elements $e_1$ and $e_2$ (as above) may require nested loops
over the ground set of elements and an independence query for each
inner iteration. There are additional technical issues that
\cite{frank-miklos} addresses which have no obvious analogy for flow.

We have taken to calling this framework \emph{matroid push-relabel},
to distinguish from push-relabel for flow.  Initially we were drawn by
the conceptual appeal of \cite{frank-miklos}, and started developing
algorithms for some more specific problems hoping at best for some
simpler or more practical alternatives to existing algorithms (similar
to the role now assumed by push-relabel for flow). Given the large
bounds and high level of abstraction in \cite{frank-miklos}, it was
not at all clear that competitive bounds could be obtained from
matroid push-relabel for basic, long-studied problems where there are
alternative approaches that seem more direct.  We were surprised to
discover that, upon developing several more ideas within the matroid
push-relabel framework, one can actually improve the best known bounds
for several of these problems.

As mentioned above, the push-relabel algorithm manipulates a
collection of bases and assigns levels to each element.  The algorithm
modifies the bases and levels while obeying a set of ``push-relabel
invariants'', proposed by \cite{frank-miklos}, which impose a
discipline on the exchanges made to the bases. %
We extend these invariants slightly by introducing an integer-valued
parameter called the \emph{height}. The height is the minimum level of
any uncovered element. One motivation for the height is to facilitate
the analysis of fast approximation algorithms, as discussed below. The
running time of our core matroid push-relabel algorithm is expressed
as a function of height. The height parameter is chosen based on the
problem and whether we seek exact or approximate solutions.  %

As mentioned above, the algorithms in \cite{frank-miklos} are not very
concrete (let alone efficient), and the algorithms were analyzed to
the point of bounding the number of push/relabel operations, as
opposed to bounding the running time. We fill in the running time
analysis and introduces several more ideas to improve the running
time. Some of these ideas are somewhat subtle, detailed and local;
such as a refined bound on the height for $k$-fold matroid union, or
more careful application of data structures for spanning tree
problems. There are also some broader ideas that are easier to isolate
and which we now highlight below.

\paragraph{Level-wise decreasing order of bases.} Recall that the
push-relabel algorithms maintains a collection of $k$ bases for an
input parameter $k$. Every time we want to exchange an element $e$
into the solution (so to speak), the multitude of bases raises an
algorithmic issue quickly identifying a suitable base in which to
exchange $e$. A naive approach loops through all the bases which is
very slow, especially with large convex combinations.  An important,
new idea introduced in this work, that seems very specific both to
matroids and the push-relabel framework, is to maintain the bases in
``level-wise decreasing order''. The definition of this ordering is
based on the upper level sets of the bases induced by the labels. We
require and maintain the bases in such an order that for every level,
the level set of one base spans the level set of the next
one. Speaking intuitively and abstractly, it turns out that the
monotonic nature of this order cooperates nicely with the way that
elements are relabeled and exchanged in the push-relabel framework. In
particular, some simple and necessary greedy rules for selecting
between different choices of push and relabel operations are shown to
be sufficient for maintaining the descending order. Moreover, given
the bases in level-wise descending order, we no longer have to loop
over the bases as (loosely) described above. Instead, we can apply a
binary search to identify the right base in logarithmic time and
queries.  The descending order also allows for a binary search along
the levels when relabeling an element, which also improves the running
time.

\paragraph{Approximation via truncation.} An important technique for
obtaining fast approximations in the push-relabel framework is the
idea that a truncated height, depending primarily on the desired
accuracy, suffices to obtain an approximate solution. This is
analogous to the well-known connection; in problems such as
edge/vertex disjoint paths, bipartite matching, and matroid
intersection; between the length of the shortest augmenting path and
the quality of the current solution (e.g.,
\cite{hopcroft-karp,Cunningham1986,even-tarjan}).  In this work, the
matroid push-relabel algorithms frame matroid optimization problems in
a perspective more amenable to these types of arguments. For many
problems, we show that either $\bigO{1/\eps}$ or
$\bigO{\log{n} / \eps}$ levels suffice to obtain a $\epspm$-factor
approximation. We note that there are different arguments that lead to
either the $\bigO{1/\eps}$ or $\bigO{\log{n} / \eps}$ bounds, and some
additional analysis was required to identify which was appropriate for
each problem.

As mentioned earlier, besides the enhancements to the matroid
push-relabel algorithm, we develop two more techniques that enhance
the applicability of the push-relabel algorithm.

\paragraph{Randomized rounding}
To extend the push-relabel framework for integer capacities to
real-valued capacities, for the sake of fast approximation algorithms,
we employ randomized rounding. More specifically, we use random
sampling to discretize the capacities and effectively reduce the
capacities to small integer values (at most $\bigO{\ln{n} / \eps^2}$).
For packing problems, the randomized techniques we need are already
provided by \cite{Karger1998}. For $k$-fold matroid union and covering
problems, such techniques were not known, and the arguments from
\cite{Karger1998} did not seem to extend. We develop a new analysis
for these remaining problems. Interestingly, this analysis also
recovers the results of \cite{Karger1998}, via an arguably simpler
proof. (In particular, the new analysis does not depend on the random
contraction algorithm.)  See \cref{general-sparsification},
\refsection{sparsification}. The randomized techniques for matroid
base covering and $k$-fold matroid union may be of independent
interest.

\paragraph{Augmenting paths.} Historically the most common approach to
the problems considered here is via augmenting paths.  Augmenting
paths can also be used to extend an approximate (integral) solution to
an optimal one (which is our application).  With augmenting paths for
(say) $k$-fold matroid union, one maintains a packing of $k$
independent sets $\Is$ and tries to extend it one element at a
time. Extending such a packing is non-trivial, and may require a
complicated sequence of exchanges to open up a slot for a new element,
so to speak. A shortest path between a designated source and sink in
this graph can be shown to give an augmenting path.  Now, the most
straightforward approach is to build out the entire auxiliary graph
explicitly by testing for the presence of each possible arc, and then
running BFS in the resulting graph. However the graph is potentially
dense and building out the graph becomes the bottleneck.

We instead explore techniques that look for the desired path
implicitly. At a high-level, one recognizes that identifying all the
vertices reachable from the designated source does not require
exploring all the arcs. Indeed, we need not test the existence of an
arc where the head is already ``marked'' as explored. That said, we do
need to be able to identify arcs to ``unmarked'' auxiliary
vertices. To help us search for these useful arcs we impose additional
invariants. One invariant is to keep $\Is$ in ``decreasing order'', in
a way similar to the level-wise order of decreasing bases in the
matroid push-relabel algorithm. This
restricts the family of augmentations we allow in each iteration. The
second invariant comes within a single search for an augmenting
path. As we mark auxiliary vertices as they are explored, we also
require the subset of elements in $\Is$ corresponding to marked
auxiliary vertices to also be in ``decreasing order''. This is
addressed by introducing a ``pre-search'' subroutine that is called on
an auxiliary vertex before marking it. With this additional structure
--- keeping the $\Is$ and the marked elements of the $\Is$ (so to
speak) in descending order --- it becomes much easier to navigate the
auxiliary graph implicitly. The overall running time for one search
comes out to roughly a logarithmic number of independence queries per
vertex in the auxiliary graph. (\Cref{search-augmenting-path},
\refsection{augmenting-paths}.)

We use the augmenting path subroutine to extend an
$\epsless$-approximate solution produced by the push-relabel algorithm
to an optimum solution, for an appropriate choice of $\eps$. However,
to apply our augmenting path algorithm we require the output of the
push-relabel algorithm to satisfy certain invariants: namely, the
bases $\Bs$ should contain a packing $\Is$ of the same total capacity
and in descending order. Fortunately this is the case, and the proof
critically depends on the fact that the $\Bs$ were in level-wise
decreasing order already.

We note that some similar ideas pertaining to implicitly navigating
the auxiliary graph have been applied recently to accelerating
augmenting path algorithms for matroid intersection
\cite{Nguyen2019,CLSSW2019,BBMN2021,Blikstad2021}, which is a closely
related problem. Still, additional ideas specific to $k$-fold matroid
union as well as the added structure from the matroid push-relabel
algorithm are required to obtain our running time.

\paragraph{Future work.}
We recognize that some of the ideas here can be useful for exact
push-relabel algorithms for real capacities and defer this to future
work.  We have also continued to develop algorithms from the
push-relabel perspective for more abstract problems such as
polymatroid intersections and submodular flow. These topics require
much more abstract machinery and the conceptual focus is different
from the presentation here which is more specialized to matroids. We
defer these developments to future work.


\subsection{Organization.}
The rest of this work is organized as follows.
\begin{itemize}
\item In \cref{preliminaries} we present preliminary definitions and
  notation.
\item In \refsection{impr}, we present and analyze the matroid
  push-relabel algorithm for $k$-fold matroid union with integer
  capacities.
\item In \refsection{base-packing-covering}, we analyze the base
  packing and covering problems for integer capacities.
\item In \refsection{spanning-trees}, we implement the matroid
  push-relabel algorithms for the graphic matroid.
\item In \refsection{reinforcement} we analyze the minimum cost
  reinforcement problem.
\item In \refsection{apx-capacities} we develop the randomized
  algorithms for real-valued capacities.
\item In \refsection{augmenting-paths} we develop the augmenting path
  algorithm.
\item Additional background is given in \ref{additional-background}.
\end{itemize}


\paragraph{Acknowledgements.} We thank Chandra Chekuri for helpful
feedback. We thank the reviewers for helpful feedback and additional
pointers to references.

\section{Preliminaries}

\label{preliminaries}
\label{prelim}

\labelsection{preliminaries}

We briefly introduce matroids and refer to \cite{schrijver} for
additional background. A matroid
$\matroid =(\groundset, \independents)$ consists of a finite ground
set $\groundset$ and a collection of \emph{independent sets}
$\independents \subseteq 2^\groundset$ that satisfy three properties:
(i) $\emptyset \in \independents$ (ii) $A \in \independents$ and
$B \subset A$ implies $B \in \independents$ and (iii)
$A, B \in \independents$ and $|B| > |A|$ implies that there is
$e \in B \setminus A$ such that $A \cup \{e\} \in \independents$. The
$\rank$ function of a matroid is an integer valued function over the
subsets of $\groundset$ where $\rank{S}$ is the cardinality of the
largest independent set contained in $S$. We let
$\spn{S} = \setof{e \where \rank{S + e} = \rank{S}}$ denote the set of
elements spanned by $S$.  For an independent set
$I \in \independents$, and an element $e \in \spn{I} \setminus I$,
there is a unique minimal set in $I+ e$, called the \defterm{circuit}
of $e$ in $I$ and denoted $\circuit{I + e}$.

The \emph{independence polytope} is the set of vectors
$x \in \nnreals^{\groundset}$ that can be expressed as a convex
combination of indicator vectors of independent sets. We let $\ip$
denote the independence polytope. For $k > 0$, we let $k \ip$ scale up
$\ip$ by a factor of $k$; equivalently, $k \ip$ denotes the the set of
vectors $x$ such that $x / k \in \ip$.

In all our problems the elements are equipped with \emph{capacities}
$\capacity \in \preals^{\groundset}$. The capacities are usually
integral except in \refsection{apx-capacities} where we consider
real-valued capacities. Abusing notation, for a set
$S \subseteq \groundset$, we denote the sum of capacities over $S$ by
\begin{align*}
  \capacity{S} \defeq \sum_{e \in S} \capacity{e}.
\end{align*}

\label{def-num-covered}

Many of our problems manipulate a set of $k$ bases $\Bs$ which may
have overlapping elements. In such a context, for an element
$e \in \groundset$, let
\begin{align*}
  \numCovered{e} \defeq \sizeof{\setof{i \in [\numB] \where e \in B_i}}
\end{align*}
We say that an element is \emph{uncovered} if
$\numCovered{e} < \capacity{e}$, \emph{covered} if
$\numCovered{e} \geq \capacity{e}$, \emph{feasibly packed} if
$\numCovered{e} \leq \capacity{e}$, and \emph{overpacked} if
$\numCovered{e} > \capacity{e}$.


\section{Matroid push-relabel with integer capacities}

\labelsection{impr}%

\newcommand{\aux}[2]{#1^{\smash{(#2)}}}%

This section presents the matroid push-relabel algorithm. Our discussion
centers on the $k$-fold matroid union problem, presenting an algorithm
that accelerates an algorithm for $k$-fold matroid union presented in
\cite{frank-miklos}. The $k$-fold matroid union problem was briefly
introduced in \cref{results} and we reintroduce the problem here.

The input consists of a matroid
$\matroid = (\groundset, \independents)$, integer capacities
$\capacity: \groundset \to \naturalnumbers$, and an integer $k$.  This
input defines the following dual min-max problems shown by
\cite{NashWilliams1967} to have equal objective values:
\begin{gather}
  \text{maximize } \sum_{e \in \groundset} \min{\capacity{e},
    \numCovered{e}} \text{ over $k$
    bases }
  B_1,\dots,B_k \in \independents; \labelthisequation{max-imp} \\
  \text{minimize } k \rank{S} + \sum_{e \in \bar{S}} \capacity{e} \text{ over all sets
  } S \subseteq \groundset. \labelthisequation{min-imp}
\end{gather}
($\numCovered{e}$ is defined in \cref{preliminaries}.)
\refequation{max-imp} is called the \emph{$k$-fold matroid union}
problem; we refer to \refequation{min-imp} simply as its dual problem.
As mentioned earlier, \cite{Cunningham1986} gave a
$\bigO{(\opt^{3/2} + k) n Q + \opt^{1/2} k n}$-time algorithm for the
uncapacitated version (i.e., $\capacity{e} = 1$ for all $e$) of the
problem. Additionally, \FM gave a bound of $\bigO{n^5}$ ``basic
operations'' (which are defined below) in the uncapacitated
setting. We note that both \cite{Cunningham1986,frank-miklos} consider
the more general matroid union setting, where each $B_i$ is a base in
a different matroid (over the same ground set).

We first introduce the high-level components of the matroid
push-relabel framework, based on \FM, in \cref{imp-framework}. We then
analyze how optimality is obtained with the framework, improving
bounds in \FM as well as extending the analysis to approximations, in
\cref{imp-analysis}. Finally we present and analyze the faster
algorithm in \cref{imp-algorithm}.

\subsection{Components of the matroid push-relabel framework}

\label{imp-framework}

The matroid push-relabel maintains a map $\deflevels$ assigning
\emph{levels} to each element, initially set uniformly to $0$. The
algorithm also maintains $k$ bases $B_1,\dots,B_k \in \bases$, all
initialized to be the same base chosen arbitrarily.

Our discussion frequently groups elements by their levels and to this
end it is convenient to introduce the following notation.  For a fixed
level $j$, we let $\level_j$ denote the set of elements at level
$j$. We also let $\level_{\leq j}$ denote the set of elements at level
less than or equal to $j$; similarly we have $\level_{< j}$,
$\level_{> j}$, and $\level_{\geq j}$.

As the algorithm updates the bases $\Bs$ by inserting elements, for
each base $\basei$ and each element $e \in \basei$, the algorithm
tracks the level of $e$ when it was inserted into $\basei$. For
$i \in [k]$ and $j \in \nnintegers$, we let
$\basei_j \subseteq \basei$ be the subset of elements $e$ in $\basei$
that were inserted into $\basei$ when $e$ was at level $j$. (Note that
$\basei_j$ does \emph{not} equal $\basei \cap \level_j$.) We let
$\basei_{\geq j}$ denote the set of elements $e$ inserted into
$\basei$ when $\level{e}$ was at least $j$. Similarly we have
$\basei_{\leq j}$, $\basei_{> j}$, and $\basei_{< j}$.

In the matroid push-relabel algorithm, the $k$ bases $B_1,\dots,B_k$
form a candidate solution for the maximization problem
\refequation{max-imp}. The sub-level sets $\level_{\leq j}$ (where
$j \in \naturalnumbers$) represent candidate solutions for the
minimization problem \refequation{min-imp}.    The framework is designed
to ensure that
\begin{align*}
  \sum_{e \in \groundset} \min{\capacity{e}, \numCovered{e}} %
  \leq                                                       %
  k \rank{\level_{\leq j}} + \capacity{\level_{> j}}
\end{align*}
for all $j$. As shown in \cref{imp-optimality} below, at termination,
the algorithm identifies a level $j$ for which the inequality above is
(exactly or approximately) tight.  Thereby $B_1,\dots,B_k$ and
$\level_{\leq j}$ certify one another to be (exactly or approximately)
optimal for their respective problems.

The matroid push-relabel algorithm obeys the following invariants
proposed by \cite{frank-miklos}.  Here we say that an element $e$ is
\defterm{covered} if it is contained in at least $\capacity{e}$ bases,
and otherwise \defterm{uncovered}.
\begin{invariants}
\item \label{imp-disjoint}\label{imp-level-0} $\level{e} = 0$ for all
  elements $e$ that are in (strictly) more than $\capacity{e}$ bases.
\item \label{imp-span} For $i = 1,\dots,k$, and all levels $j$,
  $\basei_{\geq j}$ spans $\level_{> j}$.
\item \label{imp-covered} \label{imp-uncovered} All uncovered elements
  $e$ have $\level{e} \geq \Height$ for a parameter
  $\Height \in \nnintegers$.
\end{invariants}
Given a configuration of bases and level assignments, the
\defterm{height} is defined as the minimum level of any uncovered
element. That is, the height is the minimum value of $h$ satisfying
\cref{imp-covered}.

\newcommand{\IMPInvariants}{\cref{imp-disjoint,imp-span,imp-covered}\xspace} %

The matroid push-relabel algorithm is composed of essentially two
operations, which \FM calls \emph{basic operations}.
\begin{enumerate}
\item \emph{Push (exchange):} Given an uncovered element $e$, a base
  $B_i$, and an element $d \in B_{i,\level{e} - 1}$, such that
  $B_i - d + e \in \independents$, replace $B_i$ with
  $B_i - d + e$.\footnote{Note that $d$ may be the same as $e$,
    inserted earlier when $e$ was at a lower level.}
\item \emph{Relabel:} Given an uncovered element $e$, increase
  $\level{e}$ to $\level{e} + 1$.
\end{enumerate}

To preserve \cref{imp-span}, an uncovered element $e$ can only be
relabeled if there is no push operation available. As mentioned in
\refsection{techniques}, identifying a feasible push for an element
$e$ is a computational bottleneck, in contrast to
flow. \cite{frank-miklos} proved that the push and relabel operations
preserve \IMPInvariants which we assume as a fact.

\subsection{Optimality via the matroid push-relabel invariants}
\label{imp-analysis}\label{imp-optimality}

We will eventually develop an algorithm that tries to obtain, as
quickly as possible, a configuration of bases and levels satisfying
\cref{imp-disjoint,imp-span,imp-covered} above for a given height
parameter $\Height \in \naturalnumbers$.  First we show how particular
values of $\Height$ correlate with good solutions to the dual min-max
problems in \cref{equation:max-imp,equation:min-imp}. Here we have
analyses for both exact and approximate solutions.

\subsubsection{Exact solutions} We first consider exact solutions to
\cref{equation:max-imp,equation:min-imp}. Previously,
\cite{frank-miklos} showed that height $\bigOmega{n}$ suffices to
derive an exact solution in the more general setting of matroid union
(with different matroids). For the specific case of $k$-fold matroid
union we have the following stronger bound of $\therank + 2$.
\begin{lemma}
  \labellemma{imp-exact-analysis}\label{imp-exact-analysis} Suppose $\firstB,\dots,\lastB$ and
  $\level: \groundset \to \nnreals$ satisfy the invariants
  \IMPInvariants with height $\Height > \therank + 2$.  Then there
  exists a level $j$ such that
  \begin{align*}
    \sum_{e \in \groundset} \min{\capacity{e}, \numCovered{e}}
    \geq                        %
    \capacity{\level_{< j}} + k \rank{\level_{\geq j}}.
  \end{align*}
  This certifies that $\firstB \cup \cdots \cup \lastB$ is a maximum
  solution and that $\level_{\geq j}$ is a minimum dual solution.
\end{lemma}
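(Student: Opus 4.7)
The plan is to exhibit a single level $j$ at which the primal value $\sum_{e} \min{\capacity{e}, \numCovered{e}}$ meets the dual value $\capacity{\level_{<j}} + k \, \rank{\level_{\geq j}}$; combined with weak duality between the two dual problems stated above, this pins both sides down as optimal.

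First I will find a plateau level in the right range. Consider the non-increasing, integer-valued sequence $r_j = \rank{\level_{\geq j}}$, which at $j=0$ equals $\rank{\groundset} \leq \therank$ and is eventually $0$, so the sequence has at most $\therank$ strict decreases in total. Restricted to the range $j \in \{1,2,\ldots,\Height\}$ of size $\Height > \therank + 2$, at most $\therank$ of these indices can be strict-decrease positions, leaving at least $\Height - \therank \geq 3$ plateau indices at which $\rank{\level_{\geq j}} = \rank{\level_{>j}}$. Fix any such $j$.

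Next I will split the primal sum using the height and overpacking invariants. Because $1 \leq j \leq \Height$, invariant \cref{imp-uncovered} forces every element of $\level_{<j}$ to be covered, so $\min{\capacity{e}, \numCovered{e}} = \capacity{e}$ there; meanwhile invariant \cref{imp-level-0} forbids overpacking at any level above $0$, so every $e \in \level_{\geq j}$ satisfies $\numCovered{e} \leq \capacity{e}$ and hence $\min{\capacity{e}, \numCovered{e}} = \numCovered{e}$. Consequently
\[
  \sum_e \min{\capacity{e}, \numCovered{e}}
  \;=\; \capacity{\level_{<j}} + \sum_{e \in \level_{\geq j}} \numCovered{e}
  \;=\; \capacity{\level_{<j}} + \sum_{i=1}^{k} |B_i \cap \level_{\geq j}|.
\]

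Finally I will lower bound each $|B_i \cap \level_{\geq j}|$ via invariant \cref{imp-span}. Since an element's level can only increase after insertion, the recorded-insertion-level subset $B_{i,\geq j}$ satisfies $B_{i,\geq j} \subseteq B_i \cap \level_{\geq j}$; and invariant \cref{imp-span} says $B_{i,\geq j}$ spans $\level_{>j}$, so $|B_{i,\geq j}| \geq \rank{\level_{>j}}$, which by the plateau choice of $j$ equals $\rank{\level_{\geq j}}$. Summing over $i$ and substituting into the display yields the claimed inequality. The crux of the argument, and the reason the hypothesis demands $\Height$ strictly exceed the rank of the matroid, is precisely the mismatch between $\rank{\level_{>j}}$ (which invariant \cref{imp-span} delivers) and $\rank{\level_{\geq j}}$ (which the dual objective uses); the pigeonhole over plateau levels is exactly what closes this gap.
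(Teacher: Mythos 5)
Your proof is correct and follows essentially the same route as the paper's: pigeonhole over the at most $\therank$ strict decreases of $j \mapsto \rank{\level_{\geq j}}$ to find a plateau level $j$ with $\rank{\level_{\geq j}} = \rank{\level_{> j}}$, then combine invariant \ref{imp-span} (giving $\sizeof{\basei_{\geq j}} \geq \rank{\level_{>j}}$), invariant \ref{imp-disjoint} (no overpacking at levels $\geq j \geq 1$), and invariant \ref{imp-covered} (everything below level $j$ is covered). Your explicit remark that $B_{i,\geq j} \subseteq B_i \cap \level_{\geq j}$ because levels only increase after insertion is a detail the paper leaves implicit, but the argument is the same.
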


\begin{proof}
  As a function of $j \in \nnintegers$, $\rank{\level_{\geq j}}$ is
  integral and nondecreasing from $0$ to $\therank$.  By the
  pigeonhole principle, there exists a level
  $j \in \setof{1,\dots,\Height}$ such that
  \begin{math}
    \rank{\level_{\geq j}} = \rank{\level_{> j}}.
  \end{math}
  We will prove the claim for this choice of $j$. By \cref{imp-span}),
  $\basei{\geq j}$ spans $\level_{> j}$, hence
  \begin{align*}
    \sum_{i=1}^{\numM} \sizeof{\basei{\geq j}}
    \geq k \rank{\level_{\geq j}} = k \rank{\level_{> j}}.
  \end{align*}
  By \cref{imp-disjoint}, no element in $\groundset_{\geq j}$ is
  overpacked, so
  \begin{align*}
    \sum_{i=1}^{\numM} \sizeof{\basei{\geq j}}
    \leq                        %
    \sum_{i=1}^{\numM} \sizeof{\basei \cap \groundset_{\geq j}}
    =
    \sum_{e \in \groundset_{\geq j}} \min{\numCovered{e}, \capacity{e}}.
  \end{align*}
  Lastly,
  by \cref{imp-covered}, all elements in $\groundset_j$ are covered, so
  \begin{align*}
    \capacity{\level_{< j}}
    =                           %
    \sum_{e \in \groundset_{< j}} \min{\capacity{e}, \x{e}}.
  \end{align*}
  Altogether we have
  \begin{align*}
    \capacity{\level_{< j}} + \numM \rank{\level_{\geq j}}
    \leq                        %
    \capacity{\level_{< j}} + \sum_{i=1}^{\numM} \sizeof{\basei{\geq
    j}}
    \leq                        %
    \sum_{e \in \groundset} \min{\capacity{e}, \x{e}},
  \end{align*}
  as desired.
\end{proof}

\subsubsection{Approximate solutions}

We now turn to approximations. Here we show that height $\bigO{\reps}$
suffices to obtain an $\epsmore$-approximations for both the primal
and dual problems.
\begin{lemma}\label{imp-apx-analysis}
  \labellemma{imp-apx-analysis} Let $\eps \in (0,1)$.  Suppose
  $\firstB,\dots,\lastB$ and $\level: \groundset \to \nnreals$ satisfy
  \IMPInvariants with height $\Height > \reps + 2$.  Then there exists
  a level $j$ such that
  \begin{align*}
    \capacity{\level_{< j}} + k \rank{\level_{\geq j}}
    \leq
    \epsmore
    \sum_{e \in \groundset} \min{\capacity{e}, \numCovered{e}}.
  \end{align*}
  This certifies that $\firstB,\dots,\lastB$ is a
  $1/\epsmore$-approximately maximum solution, and that
  $\level_{\geq j}$ is a $\epsmore$-approximately minimum dual
  solution.
\end{lemma}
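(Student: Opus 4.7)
The plan is to closely follow the proof of \cref{imp-exact-analysis}, replacing the integrality-based pigeonhole with an averaging argument. In the exact case, $\Height > \therank + 2$ allowed one to find a level $j$ with $\rank{\level_{\geq j}} = \rank{\level_{>j}}$. In the approximate regime we cannot expect exact equality; instead, we aim for a level where the rank gap $\Delta_j := \rank{\level_{\geq j}} - \rank{\level_{>j}}$ is a small fraction of the primal objective $P := \sum_{e} \min{\capacity{e}, \numCovered{e}}$.

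\medskip

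First, observe that $j \mapsto \rank{\level_{\geq j}}$ is nonincreasing, so the $\Delta_j$'s telescope: $\sum_{j=1}^{\Height} \Delta_j \leq \rank{\level_{\geq 1}} =: r'$. By averaging, there exists $j \in \{1, \ldots, \Height\}$ with $\Delta_j \leq r'/\Height$. For this $j$ I would replay the three inequalities from the exact proof, but with $\rank{\level_{>j}}$ in place of $\rank{\level_{\geq j}}$ wherever the rank of $\basei{\geq j}$ is estimated: invariant \cref{imp-span} gives $\sum_i \sizeof{\basei{\geq j}} \geq \numM \rank{\level_{>j}}$; invariant \cref{imp-disjoint} gives $\sum_i \sizeof{\basei{\geq j}} \leq \sum_{e \in \level_{\geq j}} \min{\capacity{e}, \numCovered{e}}$; and invariant \cref{imp-covered} together with $j \leq \Height$ gives $\capacity{\level_{<j}} = \sum_{e \in \level_{<j}} \min{\capacity{e}, \numCovered{e}}$. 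Chaining these yields $\capacity{\level_{<j}} + \numM \rank{\level_{>j}} \leq P$. The same chain at $j = 1$ gives $\numM r' \leq P$, which is the additional lower bound needed to compare $\numM \Delta_j$ to $P$.

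\medskip

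Combining the pieces,
\[
\capacity{\level_{<j}} + \numM \rank{\level_{\geq j}}
= \capacity{\level_{<j}} + \numM \rank{\level_{>j}} + \numM \Delta_j
\leq P + \frac{\numM r'}{\Height}
\leq \left(1 + \frac{1}{\Height}\right) P
\leq (1 + \eps) P,
\]
where the last inequality uses $\Height > 2/\eps + 2 > 1/\eps$. The main (mild) obstacle is the pigeonhole step: since we cannot force $\Delta_j = 0$ without integrality, we instead amortize the total rank budget $r'$ across the $\Height$ available levels. Taking $\Height$ large enough to dominate $1/\eps$ then suffices to drive the rank gap below an $\eps$-fraction of $P$. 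The invariants themselves are used exactly as in the exact case, with no new combinatorial ingredients required.
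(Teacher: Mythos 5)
Your instinct to replace the exact pigeonhole with an averaging argument is right, but you amortize the wrong quantity, and the auxiliary inequality $\numM r' \leq P$ on which your final chain depends is false. Note first that your own chain, instantiated at $j=1$, gives $\numM \rank{\level_{>1}} \leq P$, not $\numM \rank{\level_{\geq 1}} = \numM r' \leq P$; the two differ by $\numM \Delta_1$, which need not be small. And the stronger statement genuinely fails: take the rank-one matroid on two parallel elements $e_1, e_2$ with unit capacities and $\numM = 3$, with $B_1 = \setof{e_2}$ ($e_2$ inserted at level $1$) and $B_2 = B_3 = \setof{e_1}$ (level $0$). All of \cref{imp-disjoint,imp-span,imp-covered} hold for any $\Height$ (no element is uncovered), and $P = 2$, yet $\numM r' = 3$; more generally $\numM r'/P$ can be $\Theta(k)$. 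So the step $\numM\Delta_j \leq \numM r'/\Height \leq P/\Height$ does not follow, and with $\Height \approx 2/\eps$ the bound you actually get, $P + \numM r'/\Height$, can exceed $(1+\eps)P$ by roughly a factor $\eps k$ in the worst case. The underlying issue is that $k$ times a unit of rank is not worth a unit of the primal objective.

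The paper's proof amortizes a different budget: it chooses $j$ with $\sum_i \sizeof{\basei{j}} \leq \eps \sum_i \sizeof{\basei{\geq 1}}$, which exists because the level sets $\basei{j}$ for $j \geq 1$ partition $\basei{\geq 1}$ and $\Height - 1 > 1/\eps$. The point is that this budget, unlike $\numM r'$, \emph{is} bounded by $P$: elements inserted at positive levels stay at positive levels, hence are not overpacked by \cref{imp-disjoint}, so $\sum_i \sizeof{\basei{\geq 1}} \leq \x{\level_{\geq 1}} = \sum_{e \in \level_{\geq 1}} \min{\capacity{e}, \numCovered{e}} \leq P$. One then never introduces $\Delta_j$ at all; instead $\numM\rank{\level_{>j}} \leq \sum_i \sizeof{\basei{\geq j}} = \sum_i \sizeof{\basei{> j}} + \sum_i \sizeof{\basei{j}} \leq \x{\level_{>j}} + \eps P$, and the statement follows after shifting the index. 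If you wanted to rescue your rank-gap formulation you would need something like $\numM \Delta_j \leq \sum_i \sizeof{\basei{j}}$, which the same two-element example refutes ($\numM\Delta_1 = 3$ versus $\sum_i\sizeof{\basei{1}} = 1$). So the gap is not cosmetic: the correct proof must charge against base slots per level rather than against rank per level scaled by $k$.
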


\begin{proof}
  Since $\Height > \reps + 2$, there exists an index
  $j \in \setof{1,\dots,\Height-1}$ such that
  \begin{align*}
    \sizeof{\firstB_j}+ \cdots + \sizeof{\lastB_j} \leq %
    \eps \parof{\sizeof{\firstB{\geq 1}} + \cdots +
    \sizeof{\lastB{\geq 1}}}.
    \labelthisequation{apx-imp-small-level}
  \end{align*}
  We will prove the claim for this choice of $j$. Since each
  $\basei{\geq j}$ spans $\groundset_{> j}$ (per \ref{imp-span}),
  \begin{align*}
    k \rank{\level_{> j}}
    \leq                        %
    \sum_{i=1}^k \sizeof{\basei{\geq j}}.
  \end{align*}
  Additionally, by choice of $j$ per \cref{equation:apx-imp-small-level}, we have
  \begin{align*}
    \sum_{i=1}^k \sizeof{\basei{\geq j}}
    \leq                        %
    \sum_{i=1}^k \sizeof{\basei{> j}} + \eps \sum_{i=1}^k
    \sizeof{\basei{\geq 1}}
    \leq \x{\level_{> j}} + \eps \x{\level_{\geq 1}}.
  \end{align*}
  Now, since $\x{e} \leq \capacity{e}$ for all $e \in \level_{> 0}$
  and $\capacity{e} \leq \x{e}$ for all $e \in \level_{< \Height}$ by
  \cref{imp-covered,imp-disjoint}, we have
  \begin{align*}
    \capacity{\level_{\leq j}} + k \rank{\level_{> j}}
    \leq                        %
    \capacity{\level_{\leq j}} + \x{\level_{> j}} + \eps
    \x{\level_{\geq 1}}
    \leq                        %
    \epsmore \sum_{e \in \groundset} \min{\capacity{e}, \numCovered{e}},
  \end{align*}
  as desired.
\end{proof}

\subsection{A faster matroid push-relabel algorithm}

\label{impr-algorithm} \label{imp-algorithm}

Having now established how the push-relabel \IMPInvariants imply exact
or approximate, we turn to the algorithmic question of computing a
configuration that satisfies the invariants for a prescribed value of
height $\Height$.  The running times we obtain, as a function of
$\Height$, are described in the following theorem.  Below, we let
$\opt$ denote the common optimum value of
\cref{equation:max-imp,equation:min-imp}.
\begin{theorem}\label{impr}%
  \labeltheorem{impr} Given a matroid
  $\matroid = (\groundset, \independents)$ and parameters
  $k,\Height \in \naturalnumbers$, there is an algorithm that, in
  running time bounded by $\bigO{n + \Height \opt \log{k \therank}}$
  calls to an independence oracle, computes bases $\Bs$ and levels
  $\deflevels$ that satisfy \IMPInvariants with height $\Height$.
\end{theorem}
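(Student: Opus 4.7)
The plan is to iterate push and relabel operations starting from an initial state where each $\basei$ equals a single base $B$ (computed greedily in $\bigO{n}$ queries) and all levels are $0$. Push and relabel preserve \cref{imp-disjoint,imp-span} by \cite{frank-miklos}, so termination once no uncovered element has level below $\Height$ guarantees all three invariants. To complete the proof, I must (a) bound the total number of basic operations by $\bigO{\Height \opt}$ and (b) show that each such operation can be executed using $\bigO{\log(k\therank)}$ independence queries.

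For the operation count in (a), the plan is a potential-function argument refining the worst-case bounds from \cite{frank-miklos}. Classify each push as a \emph{progress push} (in which the displaced element $d$ was overpacked, so the objective $\sum_e \min{\capacity{e}, \numCovered{e}}$ strictly increases) or a \emph{non-progress push} (in which $d$ becomes uncovered). Progress pushes number at most $\opt$, since the objective starts non-negative and ends at most $\opt$. Each non-progress push transfers the ``uncovered excess'' from $e$ at level $\level{e}$ to $d$ at a strictly smaller level, while each relabel raises the level of an uncovered element by one; amortizing these against an initial uncovered excess of size $\bigO{\opt}$ times the level budget $\Height$ then yields a bound of $\bigO{\Height \opt}$ on their combined count.

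For the per-operation cost in (b), the plan is to maintain the additional structural invariant that $\Bs$ is in \emph{level-wise decreasing order}: for every level $j$ and every $i \in [k-1]$, the upper-level portion $\basei_{\geq j}$ spans $\basei[i+1]_{\geq j}$. With this ordering in hand, locating a base $\basei$ that admits a feasible exchange $\basei - d + e$ with $d \in \basei_{\level{e}-1}$ reduces to a binary search on the index $i \in [k]$ using one independence query per step, and finding the largest level at which a push for $e$ exists — or certifying that none exists and a relabel is required — reduces to a binary search across the relevant levels. Both searches cost $\bigO{\log(k\therank)}$ queries. Greedy tie-breaking rules, such as always pushing into the smallest-index base and into the highest available level, are needed to preserve the ordering.

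The main obstacle is verifying that level-wise decreasing order is indeed preserved by push and relabel under these tie-breaking rules --- a case analysis appealing to the matroid exchange axiom and to the spanning containment forced by \cref{imp-span} --- and carrying through the potential-function bookkeeping sharply enough to obtain the $\bigO{\Height \opt}$ operation count in place of the weaker $\bigO{\Height n}$ that naive counting would give. Once both (a) and (b) are established, combining them with the $\bigO{n}$ initialization cost yields the claimed $\bigO{n + \Height \opt \log(k\therank)}$ bound on independence queries.
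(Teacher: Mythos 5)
Your overall architecture matches the paper's: greedy initialization, maintaining the bases in level-wise decreasing order, and exploiting that order to binary search over the $k$ bases and over the levels at $\bigO{\log(k\therank)}$ queries per expensive step; the verification you flag as the main obstacle (that the greedy tie-breaking preserves the decreasing order) is indeed the bulk of the paper's proof. The genuine gap is in part (a), the operation count. The premise that the initial uncovered excess is $\bigO{\opt}$ is false: with all bases initialized to one base $B$, the uncovered excess is roughly $\capacity{\groundset \setminus B}$, which can exceed $\opt$ by a factor of $n$. Concretely, for a rank-one matroid with $n$ unit-capacity elements and $k=1$ we have $\opt = 1$, yet at termination all $n-1$ uncovered elements must sit at level $\geq \Height$, so the number of unit relabel operations is $\Theta(n\Height)$ --- no potential argument can bound the relabel count by $\bigO{\Height\opt}$. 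The paper circumvents this not by counting relabels but by never executing them one at a time: for each uncovered element it first spends a \emph{single} query testing whether $e \in \spn{B_{k,\geq \Height-1}}$, in which case $e$ jumps directly to level $\Height$ (at most $n$ such events, one query each); otherwise it locates the insertion level by a doubling search whose cost is charged to the push that follows. Your phrasing ``certifying that none exists and a relabel is required'' suggests unit relabels, which would reintroduce a term of order $n\Height\log(k\therank)$.

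The push count is also not established by the progress/non-progress dichotomy. A non-progress push removes a unit of excess from $e$ but creates a fresh unit at $d$ only one level lower (at best), while relabels pump levels back up; the resulting potential argument telescopes against the total relabel increase, which as above can be $\Theta(n\Height)$, not against anything of size $\bigO{\Height\opt}$. The paper's count is different and more direct: every push inserts $e$ into some $B_{i,j+1}$ and thereby increases $\sizeof{B_{i,>j}}$ by exactly one; these quantities never decrease under either operation, and for each fixed $j$ the sets $B_{1,>j},\dots,B_{k,>j}$ contain no overpacked elements (they were inserted at level $\geq 1$) and hence constitute a feasible primal solution of total size at most $\opt$. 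Summing over the $\Height$ levels bounds the number of pushes by $\Height\opt$ with no case analysis. Two smaller omissions: you also need an $\bigO{\log\therank}$ search within $B_{i,j}$ to locate the element $d$ to evict, and you need the observation that a push of $e$ into $\basei$ exists if and only if $e \notin \spn{\basei_{\geq \level{e}}}$ (which is what makes both binary searches monotone); both are present in the paper and would need to appear in a complete write-up.
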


As a point of comparison, \FM showed that $\bigO{n \Height}$ basic
operations suffice to obtain height $\Height$ for the unweighted
setting (which does not account for other computational factors such
as identifying basic operations). The rest of this subsection is
devoted to proving \reftheorem{impr}.

As mentioned above, the algorithm initializes $\Bs$ to be any
arbitrary base. Here the greedy algorithm can compute a base in time
proportional to $\bigO{n}$ independence queries. Initially we set
$\level{e} = 0$ for all elements $e$.  In addition to \IMPInvariants,
we impose the following monotonicity condition on $\Bs$ and $\level$.

\begin{definition}
  \label{def:decreasing-bases}\label{decreasing-bases}
  Let $\Bs$ be a sequence of independent sets, and let $\deflevels$
  assign integer levels to each element. We say that $\Bs$ is
  \defterm{monotone decreasing} (or just \defterm{decreasing}) if
  $\basei{\geq j}$ spans $\basei[i+1]{\geq j}$ for all indices
  $i = 1,\dots, \numM - 1$ and all levels $j \in \nnintegers$.
\end{definition}

Given $\Height$, the goal of the algorithm is to reach a configuration
where $\level{e} \geq \Height$ for all uncovered elements $e$. To this
end, the push-relabel algorithm (described by \FM) repeatedly selects an uncovered
element $e$ with $\level{e} < h$, and in principle, wants to either
push $e$ into some $\basei$ to cover $e$, or relabel $e$ and bring
$\level{e}$ closer to $\Height$. For the faster algorithm, we describe
a new procedure, called \emph{greedy insertion}, that employs binary
search along both the levels and the bases, to more aggressively place
$e$ in the first available base (so to speak). Some justification is
required to argue that this process simulates a legal sequence of
push-relabel operations; we prove this after describing the procedure.
\begin{quote}
  \emph{Greedily inserting an element $e$:} Let $e \in \groundset$ be
  uncovered with $\level{e} < \Height$.
  \begin{enumerate}
  \item If $e$ is spanned by $\lastB{\Height - 1}$ then set
    $\level{e} = \Height$ and return.
  \item \label{imp-greedy-1} Otherwise identify the first level $j$
    such that $\lastB{> j}$ does not span $e$.
  \item \label{imp-greedy-2} Otherwise search for the first index $i$
    such that $B_{i,>j}$ does not span $e$. Set $\level{e} = j + 1$,
    and exchange $e$ into $B_{i}$ for an element $d \in B_{i,j}$ such
    that $B - d + e \in \independents$.
  \end{enumerate}
\end{quote}

All put together, the overall algorithm is as follows. We initially
set all bases $\Bs$ to an arbitrary base, and $\level{e} = 0$ for all
$e$. As long as there is an uncovered element $e$ with
$\level{e} < h$, we greedily insert $e$, which either places $e$ in a
base, or sets $\level{e} = h$.

It remains to analyze both the correctness and the running time of
this algorithm. We start with correctness, and in particular, we first
show that \algo{greedy-insertion} maintains \IMPInvariants. The proof
will require the fact that the bases were in decreasing order prior to
greedy insertion; hence we also prove that greedy insertion maintains
the decreasing order of bases.

\begin{lemma}
  \labellemma{impr-greedy-update} Suppose $\Bs$ and $\deflevels$ satisfy
  \IMPInvariants, and $\Bs$ are in descending order.  Then greedily
  inserting an element $e$ maintains \IMPInvariants and keeps the
  bases in decreasing order.
\end{lemma}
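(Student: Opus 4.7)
The plan is to proceed by case analysis on the branch of greedy insertion taken, and in each case to exhibit the update as a legal sequence of the basic push and relabel operations. Since \cite{frank-miklos} already verifies that the basic operations preserve \IMPInvariants, once each simulated operation is justified the preservation of \cref{imp-disjoint,imp-span,imp-covered} follows automatically; the genuinely new content is the preservation of the descending order of the bases. Throughout, I write $\ell_0 = \level{e}$ for the level of $e$ just before the insertion, and note that $e$ is uncovered with $\ell_0 < \Height$.

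\smallskip\noindent\textbf{Case 1 (relabel only).} The update is a sequence of $\Height - \ell_0$ relabels of $e$ raising $\level{e}$ step by step to $\Height$. A relabel at current level $\ell$ is legal provided no push is available there, i.e., $\circuit{B_{i'} + e} \cap B_{i', \ell - 1} = \emptyset$ for every $i'$. I will verify the stronger statement that $e \in \spn{B_{i', \geq \ell}}$ for every $i'$ and every $\ell \in \{\ell_0, \dots, \Height - 1\}$, which implies $\circuit{B_{i'} + e} \subseteq B_{i', \geq \ell} + e$ and hence disjointness from $B_{i', \ell - 1}$. The step-1 hypothesis says that $e$ lies in the span of the top layer of $\lastB$, i.e., $e \in \spn{B_{k, \geq \Height - 1}}$; the inclusion $B_{k, \geq \Height - 1} \subseteq B_{k, \geq \ell}$ promotes this to $e \in \spn{B_{k, \geq \ell}}$; and the descending-order inclusion $\spn{B_{k, \geq \ell}} \subseteq \spn{B_{i', \geq \ell}}$ spreads it across all bases. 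Since no base is modified in this branch, the descending order is preserved trivially.

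\smallskip\noindent\textbf{Case 2 (push).} Let $j$ and $i$ be the indices selected by steps~2--3. The update amounts to $j + 1 - \ell_0$ relabels of $e$ followed by a single push of $e$ into $B_i$ at level $j + 1$ displacing some $d \in B_{i, j}$. The intermediate relabels are justified as in Case 1, now using the minimality of $j$: for every $\ell \leq j$, $B_{k, > \ell - 1}$ spans $e$, so by the descending order every $B_{i', \geq \ell}$ does as well. For the final push I need some $d \in B_{i, j}$ with $B_i - d + e \in \independents$; the plan is to show that $\circuit{B_i + e}$ meets $B_{i, j}$, and then take $d$ to be any element of the intersection. Indeed, by the descending order $B_{i, \geq j}$ spans $e$, so $\circuit{B_i + e} \subseteq B_{i, \geq j} + e$; on the other hand $B_{i, > j}$ does not span $e$ by the minimality of $i$, so $\circuit{B_i + e}$ cannot lie inside $B_{i, > j} + e$, forcing it to meet $B_{i, j}$.

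\smallskip\noindent\textbf{Descending order in Case 2, and the main obstacle.} The only layer-sets that change are $B_{i, \geq j'}$ for $j' \leq j + 1$: for $j' \leq j$, the element $d$ is removed and $e$ is added at insertion-level $j + 1$, but the span of $B_{i, \geq j'}$ is unchanged because $d$ and $e$ lie on a common circuit; for $j' = j + 1$, the set only gains $e$. I then check the nested-span inclusions pair by pair. Pairs $(i', i' + 1)$ with $i' \notin \{i - 1, i\}$, and all pairs at levels $j' > j + 1$, are untouched. The pair $(i, i + 1)$ is preserved because the new $B_{i, \geq j'}$ has no smaller span than the old one, which already spanned $B_{i + 1, \geq j'}$. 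The pair $(i - 1, i)$ at $j' \leq j$ is preserved for the same reason. The main obstacle is the pair $(i - 1, i)$ at level $j' = j + 1$, where I need the unchanged $B_{i - 1, \geq j + 1} = B_{i - 1, > j}$ to span the augmented $B_{i, > j} \cup \{e\}$. This is precisely where the minimality of $i$ is essential: when $i > 1$, $B_{i - 1, > j}$ spans $e$ by the very choice of $i$ and spans $B_{i, > j}$ by the old descending order, so it spans the union; when $i = 1$ there is no such pair to check.
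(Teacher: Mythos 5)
Your proof is correct and follows essentially the same route as the paper's: invariant preservation is reduced to exhibiting the greedy insertion as a legal sequence of the basic push/relabel operations (with the descending order propagating ``spanned by $B_{k,\geq \ell}$'' to all earlier bases, so that relabels are legal and the first exchangeable base/level can be characterized by spans), and the descending order itself is re-verified only for the neighboring pairs $(i-1,i)$ and $(i,i+1)$, using the minimality of $i$ for the former and the fact that $d$ lies on the circuit of $e$ inside $B_{i,\geq j}+e$ for the latter. The only cosmetic difference is that you justify each intermediate relabel directly by circuit containment rather than phrasing the whole update as simulating the ``relabel until a push is available, then push into the first available base'' process; this is an equivalent packaging of the same facts.
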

\begin{proof}
  We first show that greedy insertion maintains the descending order.
  We need only consider the case where we execute an exchange in step
  \ref{imp-greedy-2}, as otherwise there is no change to the bases.
  Thus, suppose we exchange an element $e$ into a base $B_i$ at level
  $j$, in exchange for an element $d$ at level $j-1$. We let $\Bs$
  denote the bases before the exchange. We let $B_i' = B_i - d + e$
  denote the updated base after the exchange; this is the only change
  to the sequence of bases.  It suffices to compare $B_{i}'$ to the
  bases $B_{i-1}$ and $B_{i+1}$ that precede and succeed $B_i$
  (assuming $B_i$ is not the first or last base, respectively). There
  is no need to verify levels $j'$ strictly larger than $j$ since
  these level sets do not change.

  Consider first $B_{i-1}$ (when $i > 1$). For any level $j' \leq j$, we have
  \begin{align*}
    B_{i, \geq j'}' \subseteq B_{i, \geq j'} \cup \setof{e}
    \tago{\subseteq} \spn{B_{i-1, \geq j'}},
  \end{align*}
  and taking the span of both sides gives the desired subset
  inequality.  Here \tagr is because
  $B_{i, \geq j'} \subseteq \spn{B_{i-1, \geq j'}}$ by the fact that
  $\Bs$ is in descending order, and also because
  $e \in \spn{B_{i-1,\geq j}}$ by choice of the index $i$.

  Consider now $B_{i+1}$ (when $i < \numM$).  At level $j$, we have
  \begin{align*}
    B_{i + 1, \geq j}       %
    \tago{\subseteq}
    \spn{B_{i, \geq j}}
    \tago{\subseteq}
    \spn{B_{i,\geq j}'},
  \end{align*}
  and taking the span of both sides gives the desired subset
  inequality. Here \tagr is by the existing descending order. \tagr is
  because $B_{i, \geq j}' = B_{i, \geq j} + e$.  For levels $j' < j$,
  we have
  \begin{align*}
    B_{i + 1, \geq j'}       %
    \tago{\subseteq}                      %
    \spn{B_{i, \geq j'}}           %
    \subseteq                      %
    \spn{B_{i, \geq j'} + e}       %
    =                              %
    \spn{B_{i, \geq j'}' + d}
    \tago{=}                      %
    \spn{B_{i, \geq j'}'},          %
  \end{align*}
  and taking the span of both sides gives the desired subset
  inequality. Here \tagr is by the descending order, \tagr is by
  monotonicity, \tagr is because
  $B_{i, \geq j'}' + d = B_{i, \geq j'} + e$, and \tagr is because $d$
  is spanned by $B_{i, \geq j'}' = B_{i, \geq j'} + e - d$.

  Next we show that greedy insertion maintains the push-relabel
  \IMPInvariants. More directly, we will show that greedy insertion
  simulates a legal sequence of basic operations; as mentioned, \FM
  has already proven that basic operations maintain the invariants.

  Given $e$, suppose we repeatedly try to push or relabel $e$ until we
  either (a) execute a push, or (b) increase $\level{e}$ to
  $\Height$. In event (a), of all possible choices of base $\basei$ in
  which to exchange $e$ at a fixed level, we specifically select the
  base $\basei$ with the smallest index $i$. This describes a valid
  sequence of basic operations, hence would preserve
  \IMPInvariants. We will show that greedy insertion simulates this
  process.

  Fix an uncovered element $e$ with $\level{e} = j$.  First, we claim
  that an element $e$ can be pushed into a base $\basei$ (at that
  level) iff $e$ is not spanned by $\basei_{\geq j}$. Indeed, if
  $\basei_{\geq j}$ spans $e$, then all elements $d \in \basei$ that
  could be exchanged for $e$ are in $\basei_{\geq j}$, and in
  particular, not in $\basei_{j - 1}$ as required for a
  push. Conversely, if $\basei_{\geq j}$ does not span $e$, then the
  unique circuit of $\basei + e$ must contain an element $d$ with
  $d \in \basei_{< j}$. Moreover, by \cref{imp-span},
  $\basei_{\geq j-1}$ spans $e$, so this circuit is also the unique
  circuit of $\basei_{\geq j-1} + e$. This implies that
  $d \in \basei_j$, and can be exchanged for $e$.

  Second, we observe that for all $i$, if
  $e \notin \spn{\basei_{\geq j}}$, then
  $e \notin \spn{\basei[i']_{\geq j}}$ for all $i' \geq i$. This
  follows from the decreasing order of bases which implies that the
  sets $\spn{\basei_{\geq j}}$ forms a nested, descending sequence of
  sets. This observation implies the following two points. First, if
  $e \in \spn{\lastB_{\geq j}}$, then $e$ is spanned by all
  $\basei_{\geq j}$, and we can safely increase $\level{e}$. Second,
  if $e \notin \spn{\basei_{\geq j}}$ for \emph{some} $\basei$, then
  we can binary search for the base $\basei$ with smallest index $i$
  such that $e \notin \spn{\basei_{\geq j}}$.

  Putting everything together, recall that we want to argue that
  greedy insertion simulates a push/relabel process that repeatedly
  relabels $e$ until either $\level{e} = \Height$ or we can exchange
  $e$ into a base $\basei$ for an element
  $d \in \basei_{\level{e} - 1}$, at which point it makes the exchange
  into the first such $\basei$. As observed above, such an exchange is
  possible iff $e \notin \spn{\basei_{\geq \level{e}}}$. Moreover, as
  observed above, the latter is possible iff
  $e \notin \spn{\lastB_{\geq \level{e}}}$.  Since the sets
  $\lastB_{\geq j}$ forms a nested, decreasing sequence of sets in
  $j$, and $\spn{\cdots}$ is a monotonically increasing set function,
  we can binary search for the first (smallest) index $j$ such that
  $e \notin \lastB_{\geq j}$. This index $j$ is exactly the level that
  the simulated push-relabel process would have eventually set
  $\level{e}$ to.  Assuming $j < h$, the simulated push-relabel
  process would then identify the first $\basei$ into which we can
  exchange $e$. By the observations above, this base $\basei$ is the
  same as that identified via binary search in step
  \ref{imp-greedy-2}.
\end{proof}

\Reflemma{impr-greedy-update} establishes the correctness of the
algorithm via the \IMPInvariants. To complete the proof of \cref{impr}
it remains to prove the running time bound.

Each instance where we greedily insert an uncovered element $e$ can be
charged to either (a) setting $\level{e} = \Height$, or (b) increasing
the size of $B_{i, >j}$ for some $i \in [k]$ and $0 \leq j < \Height$.
Each element has its level set to $\Height$ once, so there are $n$
insertions of type (a). Each insertion of type (a) takes one
independence query. To bound the number of insertions of type (b), we
observe that for a fixed level $j > 0$, the sets $B_{i, > j}$ across
$i$ form a feasible solution to \refequation{max-imp}, hence have
total size at most $\OPT$. Since there are $\Height$ levels, we have
at most $\Height \OPT$ insertions of type (b).  So to recap, we have
at most $n$ greedy insertions of the first type and
$\bigO{\OPT \Height}$ of the second type.

The first type of greedy insertion takes one oracle call. Consider the
second type.  With binary search, the first search in step
\ref{imp-greedy-1} takes $\bigO{\log \Height}$ probes. Better yet, by
standard doubling tricks, we can adjust the binary search so that the
first search also takes at most $\bigO{1 + \ell}$ probes where $\ell$
is the number of levels the element moves forward.  We will be able to
charge these off to increasing the ranks of at least $\ell$ sets
$\basei{> j}$.  The search in step \ref{imp-greedy-2} takes
$\bigO{\log k}$ probes. In both cases, each probe takes one
independence query.

When executing an exchange, we also need to identify an element $d$ to
remove quickly. To this end, we can maintain a balanced binary tree
over $\basei_{j}$ in insertion order, and use binary search to quickly
identify the last possible choice of $d$. (This is the first element
$d$ such that all the elements in $\basei_{j-1}$ before $d$, along
with the elements in $\basei_{\geq j}$, do not span $e$). This takes
$\log{\therank}$ independence queries.
We note that step \ref{imp-greedy-2} is only invoked for one of the
$\OPT \Height$ queries of type (b). This gives the total running time.

This completes the presentation of the faster matroid push-relabel
algorithm for integer capacities.

\subsection{Putting it all together}

By combining the running time of \reftheorem{impr} with the required
heights per \cref{imp-exact-analysis,imp-apx-analysis}, we obtain the
following running times for exact and approximate $k$-fold matroid
union with integer capacities.

\begin{theorem}
  \labeltheorem{mp-exact} \labeltheorem{matroid-union} For integer
  capacities, a maximum $k$-fold matroid union and a dual solution can
  be computed in $\MatroidUnionTime$
  independence queries.
\end{theorem}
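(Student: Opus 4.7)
The plan is to combine \reftheorem{impr} with \reflemma{imp-exact-analysis} by choosing the height parameter just large enough to guarantee an exact primal-dual pair. Specifically, I would invoke \reftheorem{impr} with $\Height = \therank + 3$, producing in $\bigO{n + \Height \opt \log{k \therank}} = \bigO{n + \opt \therank \log{k\therank}}$ independence queries a configuration of bases $\Bs$ and a level assignment $\level$ satisfying \IMPInvariants with height at least $\therank + 3 > \therank + 2$, matching the requirement of \reflemma{imp-exact-analysis}.

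With the invariants and height in hand, \reflemma{imp-exact-analysis} guarantees the existence of a level $j \in \setof{1,\dots,\Height}$ such that $\rank{\level_{\geq j}} = \rank{\level_{> j}}$, and for this $j$ the bases $\Bs$ together with the sub-level set $\level_{\geq j}$ (interpreted as the dual minimizer $S$) are mutually certifying optima for \cref{equation:max-imp,equation:min-imp}. To produce the dual solution algorithmically, I would find the level $j$ by scanning the $\Height = \bigO{\therank}$ level sets once and comparing consecutive ranks; since the elements of $\groundset_{\geq j}$ are already known to be independent with the base structure from the invariants, identifying the correct $j$ can be done in $\bigO{\therank}$ additional work beyond what the push-relabel algorithm already provides, which is absorbed into the stated running time.

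The only subtlety worth remarking on is the choice of $\Height$: the pigeonhole argument in \reflemma{imp-exact-analysis} requires strictly more than $\therank + 2$ levels so that two consecutive ranks among $\rank{\level_{\geq 0}}, \rank{\level_{\geq 1}}, \dots, \rank{\level_{\geq \Height}}$ must coincide, and $\Height = \therank + 3$ is the smallest integer that suffices while keeping the running time at $\bigO{n + \opt \therank \log{k \therank}} = \MatroidUnionTime$. There is no real obstacle here beyond bookkeeping; the heavy lifting was already done in \reftheorem{impr} (correctness and running time of the push-relabel loop via greedy insertion and the level-wise decreasing order of bases) and in \reflemma{imp-exact-analysis} (the pigeonhole duality argument). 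The theorem then follows by direct substitution.
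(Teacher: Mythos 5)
Your proposal is correct and matches the paper's own argument: the theorem is obtained exactly by running the push-relabel algorithm of \reftheorem{impr} with height $\Height = \therank + 3$ (or any $\Height > \therank + 2$) and then invoking \reflemma{imp-exact-analysis} to extract the mutually certifying primal and dual solutions, giving $\bigO{n + \opt\therank\log{k\therank}}$ queries. Nothing essential is missing.
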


\begin{theorem}
  \label{apx-matroid-union}
  \labeltheorem{mp-apx} For integer capacities, a
  $\epsless$-approximately maximum $k$-fold matroid union, and a
  $\epsmore$-approximately minimum dual solution, can be computed in
  time bounded by $\ApxMatroidUnionTime$ independence queries.
\end{theorem}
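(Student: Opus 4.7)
The plan is to combine Theorem \ref{impr} with Lemma \ref{imp-apx-analysis} by running the matroid push-relabel algorithm with just enough height to certify the desired approximation. Specifically, I would set the height parameter to $\Height = \lceil 2/\eps \rceil + 3$, which satisfies $\Height > 2/\eps + 2$ as required by the hypothesis of Lemma \ref{imp-apx-analysis}. Invoking the algorithm of Theorem \ref{impr} with this $\Height$ produces bases $\Bs$ and levels $\level$ satisfying \IMPInvariants in time bounded by $\bigO{n + \Height \opt \log{k \therank}} = \bigO{n + \opt \log{k \therank} / \eps}$ independence queries, which matches the claimed bound.

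Next I would extract the primal and dual certificates. Lemma \ref{imp-apx-analysis} guarantees that, among the $\Height - 1$ candidate levels, there is some $j$ for which the approximate duality inequality holds. The proof of that lemma locates this $j$ via a pigeonhole bound on the per-level sizes $\sum_{i} \sizeof{\basei_j}$. I would find such a $j$ by computing these per-level sizes for each base and each level and then scanning for the witnessing level. The total accounting takes $O(\opt)$ work across all bases and levels, since the sizes $\sizeof{\basei_j}$ sum to at most $\opt$, and this fits comfortably within the stated budget. The bases $\Bs$ are then returned as the approximate primal solution and $\level_{\geq j}$ as the approximate dual solution, with Lemma \ref{imp-apx-analysis} certifying the approximation ratios directly.

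The only bookkeeping subtlety, which I would verify separately, is that Lemma \ref{imp-apx-analysis} yields a $1/(1+\eps)$-approximate primal, whereas the theorem claims a $(1-\eps)$-approximate primal; however, $1/(1+\eps) \geq 1-\eps$ for $\eps \in (0,1)$, so the former already implies the latter and no reparameterization of $\eps$ is required. I do not anticipate any real technical obstacle: once $\Height$ is tuned to $\Theta(1/\eps)$, the theorem follows as a direct corollary of the two preceding results, with the dominant cost being the push-relabel run itself.
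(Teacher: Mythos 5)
Your proposal is correct and matches the paper's own (implicit) argument exactly: the paper proves this theorem simply by instantiating Theorem~\ref{impr} with the height $\Theta(1/\eps)$ required by Lemma~\ref{imp-apx-analysis}, which is precisely what you do. Your extra bookkeeping about locating the witness level $j$ and the $1/(1+\eps)$ versus $1-\eps$ conversion is fine and consistent with the paper.
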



\section{Base Packing and Covering}

\labelsection{base-packing-covering} %

We now turn to the problems of packing and covering a matroid in
bases. The problems were briefly introduced in \cref{results}
and we now describe them in greater detail.

In the base packing problem, given an integer $k$, the goal is to
compute $k$ bases $\Bs$ such that $\x{e} \leq \capacity{e}$ for all
$e \in \groundset$.  We say that $\Bs$ is a \emph{packing} when
$\x{e} \leq \capacity{e}$ for all $e$.
\citet{Edmonds1965a} proved
there is a packing $\Bs$ of $k$ bases iff
\begin{align*}
  \capacity{\bar{S}} \geq k \parof{\therank - \rank{S}}
\end{align*}
for all sets $S$.

In the base covering problem, the goal is to compute $k$ bases $\Bs$
such that $\x{e} \geq \capacity{e}$ for all $e \in \groundset$. Such a
set of $\Bs$ is called a \emph{covering}.  \citet{Edmonds1965c} proved
there is a covering $\Bs$ of $k$ bases iff
\begin{align*}
  k \rank{S} \geq \capacity{S}
\end{align*}
for all sets $S \subseteq \groundset$.

Both of the dual characterizations above, for base packing and for
base covering, can be obtained via the dual characterization for
$k$-fold matroid union of \cite{NashWilliams1967} that was presented in
\refsection{impr}.

As mentioned there, exact base packing and covering reduces directly
to $k$-fold matroid union. For example, for packing, there is a base
packing of $k$ bases iff there are $k$ bases whose union has size
$k \therank$.  For covering, there is a base covering of $k$ bases iff
there are $k$ bases whose union has size $n$.  The dual solutions
given by the $k$-fold matroid union algorithm also provide a
certificate of infeasibility for base packing or covering, when no
packing or covering is found.

One might expect the approximation algorithms for $k$-fold matroid
union to also be an approximation algorithms for packing and covering,
but this is not the case. Consider, for example, uncapacitated packing
of $k$ bases. The approximation algorithm for $k$-fold matroid union
will output $k$ bases whose union has $\epsless$-times the maximum
size of any union. If there exists $k$ disjoint bases, then in
particular the union has size at least $\epsless k \therank$ total
elements. However this is not the same as $\epsless k$ disjoint
bases. A union of size $\epsless k \therank$ neither confirms that
there are at least $\epsless k$ disjoint bases, nor denies that there
exist $k$ disjoint bases.

Similar disparities arise for capacitated packing, and capacitated and
uncapacitated covering. For all of these problems, we give a slightly
stronger analysis to obtain the desired form of approximation. The
main difference here is that height necessary to obtain
$\epspm$-approximations increases by a logarithmic
factor. Consequently all the running times for approximating packing
and covering are a logarithmic greater than for approximating $k$-fold
matroid union.

\subsection{Packing}

We first consider approximations for uncapacitated base packing, for
which \cref{ubp} claimed a running time of
$\bigO{n \log{1/\eps} + n \log{k \therank / \eps}}$ independence
queries. Below we prove that height $\bigO{\log{n} / \eps}$ height
implies a $\epsless$-approximation. The running time in \cref{ubp}
then follows from running the uncapacitated matroid push-relabel
algorithm for height $\bigO{\log{n} / \eps}$, by \cref{impr}.

\begin{lemma}
  \label{apx-uncapacitated-packing}\label{apx-up-analysis}
  Let $\eps \in (0,1)$ and $k \in \naturalnumbers$.  Let $\Bs$ be a
  family of $k$ bases and $\deflevels$ an assignment of levels
  satisfying the matroid push-relabel \IMPInvariants with height
  $\Height = \bigO{\log{n} / \eps}$. Then either (a) the bases $\Bs$
  forms a feasible packing of $k$ bases, or (b)
  $(\matroid, \capacity)$ has no feasible packing of $\epsmore k$
  bases.
\end{lemma}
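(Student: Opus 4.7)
The plan is to assume case (a) fails, so the bases $\Bs$ are not a feasible packing, and hence $\sigma \defeq k\therank - |U| \geq 1$ where $|U| = \sum_e \min(1,\x(e))$ and the overpacked elements all sit at level $0$ by invariant~\ref{imp-disjoint}. The goal is to exhibit a set $S \subseteq \groundset$ satisfying $|\bar S| < (1+\eps) k (\therank - \rank{S})$, which by Edmonds' criterion certifies that no feasible packing of $(1+\eps)k$ bases exists. I will take $S = \level_{>j}$ for an appropriate $j \in \set{0,\dots,\Height - 1}$ chosen by a geometric pigeonhole on the levels.

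Fix any $j \in \set{0,\dots,\Height-1}$ and set $T_j \defeq \sum_i \sizeof{\basei{\geq j}}$, so $T_0 = k\therank$. Since by \cref{imp-span} each $\basei{\geq j}$ spans $\level_{>j}$, we have $\rank{\level_{>j}} \leq \sizeof{\basei{\geq j}}$ for every $i$; summing, $k(\therank - \rank{\level_{>j}}) \geq k\therank - T_j$. Next, because no uncovered element lies at level $\leq j < \Height$ (by \cref{imp-covered}) and every overpacked element lies at level $0 \leq j$ (by \cref{imp-disjoint}), a direct count of $\sum_{e \in \level_{\leq j}} \x(e)$ both as $\sum_i \sizeof{\basei \cap \level_{\leq j}} = k\therank - T_{j+1}$ and as $|\level_{\leq j}| + \sigma$ yields the identity $|\level_{\leq j}| = |U| - T_{j+1}$. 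Therefore the desired inequality $|\level_{\leq j}| < (1+\eps) k (\therank - \rank{\level_{>j}})$ is implied by $|U| - T_{j+1} < (1+\eps)(k\therank - T_j)$, which, upon setting $C_j \defeq k\therank - T_{j+1}$ (so $C_{-1} = 0$ and $C_{\Height - 1} \leq k\therank$), simplifies to the \emph{multiplicative} condition $C_j - (1+\eps) C_{j-1} < \sigma$; and since $\sigma \geq 1$, it suffices to find $j$ with $C_j < (1+\eps) C_{j-1} + 1$.

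The final step is a geometric pigeonhole: suppose for contradiction that $C_j \geq (1+\eps) C_{j-1} + 1$ for every $j \in \set{0,\dots,\Height-1}$; since $C_{-1} = 0$, an easy induction gives $C_j \geq ((1+\eps)^{j+1} - 1)/\eps$, and evaluating at $j = \Height - 1$ together with $C_{\Height - 1} \leq k\therank$ forces $(1+\eps)^{\Height} \leq \eps k \therank + 1$, i.e.\ $\Height = \bigO{\log(k\therank)/\eps} = \bigO{\log n /\eps}$ (using that for uncapacitated packing we can assume $k, \therank \leq n$, so $k\therank \leq n^2$). Taking the hidden constant in the hypothesis large enough produces a contradiction, so such a $j$ exists, and taking $S = \level_{>j}$ completes the proof. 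The main obstacle, compared with \cref{imp-apx-analysis}, is that here we need a \emph{multiplicative} slack of the form $C_j < (1+\eps)C_{j-1} + 1$ rather than an arithmetic slack, which is precisely why the height has to blow up logarithmically in $n$; once this exponential-growth argument is in place, the rest is just bookkeeping with the three push-relabel invariants.
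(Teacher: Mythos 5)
Your proof is correct, and it reaches the conclusion by a genuinely different pigeonhole than the paper's. The paper runs the geometric pigeonhole on the cumulative capacity $\capacity{\level_{\leq j}}$, which is at least $1$ at $j=0$ (an overpacked element sits at level $0$ by invariant (I)) and at most the total capacity at $j = \Height$, and picks a level where $\capacity{\level_j} \leq \frac{\eps}{1+\eps}\capacity{\level_{\leq j}}$ so that the multiplicative slack between $\capacity{\level_{\leq j}}$ and $\capacity{\level_{<j}}$ absorbs the $\epsmore$ factor. You instead run the pigeonhole on the deficiency $C_j = k\therank - T_{j+1}$ of the bases' upper insertion-level sets and let the surplus $\sigma \geq 1$ created by the overpacked elements absorb the additive $+1$ in the recursion $C_j < \epsmore C_{j-1} + 1$. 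Both routes invoke the invariants identically (invariant (II) for $k\rank{\level_{>j}} \leq \sum_i \sizeof{\basei_{\geq j}}$, invariants (I) and (III) to translate between capacities and multiplicities), but your version gives height $\bigO{\log(k\therank)/\eps}$, which does not depend on the capacities at all and is at least as good as $\bigO{\log n/\eps}$ whenever packing is nontrivial (i.e., $k\therank \leq \capacity{\groundset}$); in the capacitated setting the paper's bound secretly depends on the total capacity, so your variant is arguably cleaner.

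Two small caveats. First, your ``identity'' $\sum_i \sizeof{\basei \cap \level_{\leq j}} = k\therank - T_{j+1}$ is in general only a ``$\leq$'': an element inserted into $\basei$ at level $j'$ may currently sit at a strictly higher level, so $\basei_{\geq j+1} \subseteq \basei \cap \level_{>j}$ but not conversely (the paper explicitly warns that $\basei_j$ need not equal $\basei \cap \level_j$). Fortunately the inequality points in the direction you actually use, namely an upper bound on $\sizeof{\level_{\leq j}}$, so nothing breaks --- and the paper's own proof asserts the analogous equality. Second, you phrase everything for unit capacities; the argument goes through verbatim for integer capacities with $\sizeof{\level_{\leq j}}$ replaced by $\capacity{\level_{\leq j}}$, since $\sigma \geq 1$ and $C_{\Height - 1} \leq k\therank$ still hold, which is what the lemma as stated requires.
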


\begin{proof}
  Suppose that $\Bs$ is not a proper packing.  By \cref{imp-disjoint},
  this implies that $\level_0 \neq \emptyset$.  Now,
  $\capacity{\level_{\leq j}}$ is nondecreasing in
  $j \in \setof{0, \dots, \Height}$, bounded below by $1$ for $j = 0$
  (as noted above) and at most $n$ for $j = \Height$. Consequently
  there must be a level $j \in \setof{1,\dots,\Height-1}$ such that
  $\capacity{\level_j} < \parof{\eps / (1+\eps)}
  \capacity{\level_{\leq j}}$. Fix $j$ as such.
  By choice of $j$ we have
  \begin{align*}
    \capacity{\level_{\leq j}} + \epsmore k \rank{\groundset_{> j}}
    <
    \epsmore \parof{\capacity{\level_{\leq j}} +  k \rank{\level_{>
    j}}} - \epsmore \capacity{\level_j}
    \labelthisequation{apx-packing-analysis-1}
  \end{align*}

  By \cref{imp-covered}, all elements in $\level_{\leq j}$ are
  covered, hence
  \begin{align*}
    \capacity{\level_{\leq j}} \leq \epsmore \capacity{\level_{< j}}
    \leq \epsmore \x{\level_{\leq j}}.
  \end{align*}
  By \cref{imp-span}, each $\basei{\geq j}$ covers $\groundset_{> j}$,
  and none of the elements in $\groundset_{\geq j}$ are overpacked,
  hence
  \begin{align*}
    k \rank{\level_{> j}} \leq \sum_{i=1}^k \sizeof{\basei_{\geq j}}
    =                        %
    \x{\groundset_{\geq j}}. %
  \end{align*}
  Altogether we have
  \begin{align*}
    \capacity{\level_{\leq j}} + \epsmore k \rank{\groundset_{> j}}
    \leq                        %
    \epsmore \x{\groundset}
    \leq                        %
    \epsmore k \therank.
  \end{align*}
  Rearranging, we have
  \begin{align*}
    \capacity{\level_{\leq j}}
    \leq                        %
    \epsmore k \parof{\therank - \rank{\level_{> j}}},
  \end{align*}
  hence $\level_{> j}$ certifies that the strength is at most
  $\epsmore k$.
\end{proof}

Applying the matroid push-relabel algorithm (\reftheorem{impr}) with
the height parameter $\Height = \bigO{\log{n} / \eps}$, per
\cref{apx-up-analysis}, gives the following $\epsless$-approximation
algorithm for base packing.

\begin{theorem}
  \label{apx-pack-bases}\label{apx-base-packing}
  \label{ubp}\label{cbp}%
  An $\epsless$-approximate base packing, or an $\epsmore$-approximate
  certificate of infeasibility, can be computed in running time
  bounded by
  \begin{math}
    \ApxBasePackingTime
  \end{math}
  independence queries.
\end{theorem}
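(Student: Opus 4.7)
The plan is to invoke the matroid push-relabel algorithm of \reftheorem{impr} with a height parameter set to $\Height = \bigO{\log{n}/\eps}$, and then read off the approximation guarantee from \cref{apx-up-analysis}. The strategy is essentially the same template used for $\epsless$-approximate $k$-fold matroid union in \reftheorem{apx-matroid-union}; the only substantive change is that we need a logarithmically larger height so that the pigeonhole argument inside the proof of \cref{apx-up-analysis} can find a level whose capacity mass is small relative to the capacity mass below it.

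Concretely, I first run the algorithm promised by \cref{impr} on the input matroid and capacities, with height $\Height = \bigO{\log{n}/\eps}$ chosen with a sufficiently large constant so that \cref{apx-up-analysis} applies. The algorithm outputs bases $\Bs$ and a level assignment $\deflevels$ satisfying the push-relabel \IMPInvariants at the prescribed height. Correctness then follows directly from \cref{apx-up-analysis}: either $\Bs$ is already a feasible packing of $k$ bases, which in particular is an $\epsless$-approximate packing; or else the lemma supplies a level $j$ such that the set $S = \level_{> j}$ satisfies $\capacity{\bar{S}} < \epsmore\, k(\therank - \rank{S})$, which by Edmonds' packing theorem is exactly the $\epsmore$-approximate dual certificate of infeasibility we need.

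For the running time, \reftheorem{impr} gives a bound of $\bigO{n + \Height \opt \log{k\therank}}$ independence queries, where $\opt$ is the $k$-fold matroid union value \refequation{max-imp}. The key observation is that $\opt \leq k\therank$, since each of the $k$ bases contributes exactly $\therank$ to the total multiplicity across $\groundset$. Substituting $\opt \leq k\therank$ and $\Height = \bigO{\log{n}/\eps}$ into the query bound of \cref{impr} then gives the claimed $\ApxBasePackingTime$.

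The main obstacle is really packed into \cref{apx-up-analysis} (for correctness) and \reftheorem{impr} (for the running time); both are proved earlier. The only fresh work at this point is the choice of height and the simple bound $\opt \leq k\therank$, so the proof should be short.
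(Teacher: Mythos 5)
Your proposal matches the paper's proof: the paper likewise runs the push-relabel algorithm of \cref{impr} with $\Height = \bigO{\log{n}/\eps}$, invokes \cref{apx-up-analysis} for correctness (getting either a full packing of $k$ bases or the dual certificate against $\epsmore k$ bases), and obtains the stated query bound by substituting $\Height$ and $\opt \leq k\therank$ into the bound of \cref{impr}. No gaps.
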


\subsection{Covering}

We now move on to covering problems, starting with uncapacitated
covering. Here we show that $\bigO{\log{\therank} / \eps}$ height
suffices to obtain an $\epsmore$-approximation; note that this bound
is (slightly) better than for approximating uncapacitated packing
above.

\begin{lemma}
  \label{apx-uncapacitated-covering}\label{apx-uc-analysis}
  Let $\defmatroid$ be a matroid and $k \in \naturalnumbers$. Let
  $\Bs \in \bases$ be a collection of $k$ bases and $\deflevels$ a set
  of levels that satisfy the push-relabel \IMPInvariants with height
  $\Height = \bigO{\log{\therank} / \eps}$. Then either:
  \begin{enumerate}
  \item $\Bs$ is a covering, or
  \item For some index $j$, $\level_{> j}$ certifies that more than
    $\epsless k$ bases are required in any covering.
  \end{enumerate}
\end{lemma}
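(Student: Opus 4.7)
The plan is to mirror \cref{apx-uncapacitated-packing}, but with $r(j) := \rank{\level_{>j}}$ playing the role that $\capacity{\level_{\leq j}}$ plays in the packing argument. Writing also $c(j) := \capacity{\level_{>j}}$, the goal is to exhibit a level $j$ at which $\epsless k\, r(j) < c(j)$; such a $j$ makes $\level_{>j}$ a dual set certifying that any covering requires strictly more than $\epsless k$ bases. First I would dispense with the trivial reductions: if $\Bs$ is a covering we are already in case (1); otherwise some uncovered element $e$ exists, and by \cref{imp-uncovered} we have $\level{e} \geq \Height$, so $\level_{\geq \Height}$ is nonempty and $c(\Height-1) \geq 1$. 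If moreover $r(\Height-1) = 0$, then $j = \Height-1$ already supplies the certificate since $\epsless k\, r(\Height-1) = 0 < c(\Height-1)$, so henceforth assume $r(\Height-1) \geq 1$.

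The heart of the argument is a pigeonhole on ranks. The sequence $r(0), r(1), \ldots, r(\Height-1)$ is non-increasing, integer-valued, and lies in $[1, \therank]$ (the upper bound from $\level_{>j} \subseteq \groundset$, the lower bound on the last entry by the standing assumption). I claim that some $j \in \{0, \ldots, \Height-2\}$ satisfies $r(j+1) > (1-\eps) r(j)$. For if instead $r(j+1) \leq (1-\eps) r(j)$ held at every such $j$, iterating would give $r(\Height-1) \leq (1-\eps)^{\Height-1} \therank$; with $\Height = \bigO{\log \therank / \eps}$ chosen large enough that $(1-\eps)^{\Height-1} \therank < 1$, this would force the integer $r(\Height-1)$ to be $0$, contradicting $r(\Height-1) \geq 1$.

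Fix such a $j$ and feed it into the remaining invariants. By \cref{imp-span} applied at level $j+1$, $\basei_{\geq j+1}$ spans $\level_{>j+1}$, hence $\sizeof{\basei_{\geq j+1}} \geq r(j+1)$. Since element levels only grow over the course of the push-relabel algorithm, $\basei_{\geq j+1} \subseteq \basei \cap \level_{>j}$; and since \cref{imp-disjoint} forbids overpacking at any level $\geq 1$, summing over $i$ yields $k\, r(j+1) \leq \sum_{i=1}^{k} \sizeof{\basei_{\geq j+1}} \leq \sum_{e \in \level_{>j}} \capacity{e} = c(j)$. Combining this with the pigeonhole inequality $r(j+1) > (1-\eps) r(j)$ gives $c(j) > \epsless k\, r(j)$, which is exactly the required dual certificate.

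The main subtlety I expect is securing the strict inequality $r(j+1) > (1-\eps) r(j)$ rather than its non-strict cousin, because the certifying condition for dual infeasibility must itself be strict. This is precisely what calls for a hair of slack in the bound $\Height = \bigO{\log \therank / \eps}$, so that the iterated contrapositive lands past $r(\Height-1) < 1$ (and hence $=0$); once that is in hand, everything else is routine bookkeeping of the push-relabel invariants, essentially identical in spirit to the packing analysis.
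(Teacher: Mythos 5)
Your proof is correct and follows essentially the same route as the paper's: a pigeonhole argument locating a level where the rank of the upper level set decreases by less than a $(1-\eps)$ factor, combined with the span invariant and the no-overpacking invariant to turn that level into a dual certificate against covering by $\epsless k$ bases. The only (harmless) difference is that you assemble the certificate entirely above level $j$ via the chain $\capacity{\level_{>j}} \geq \sum_{i} \sizeof{\basei_{\geq j+1}} \geq k\,\rank{\level_{>j+1}} > \epsless k\,\rank{\level_{>j}}$, sourcing the strictness from the pigeonhole step, whereas the paper starts from the strict global inequality $\capacity{\groundset} > \sum_{e} \min{\capacity{e},\numCovered{e}}$ (which encodes the failure to cover) and splits the ground set at level $j$.
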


\begin{proof}
  Suppose $\Bs$ is not a covering. Then there is at least one
  uncovered element; moreover, any uncovered element is in
  $\groundset_{\geq \Height}$ by \cref{imp-uncovered}. As a function of
  $j \in \setof{0,\dots,\Height}$, $\rank{\level_{\geq j}}$ is
  nonincreasing, bounded above by $\therank$ at $j = 0$, and bounded
  below by $1$ at $j = \Height$ because $\level_{\geq \Height}$ is
  nonempty. Since $\Height$ is at least
  $\bigO{\log{\therank} / \eps}$, there must be a level
  $j \in \setof{0,\dots,\Height}$ such that
  $\rank{\level_{> j}} \geq \epsless \rank{\level_{\geq j}}$. Fix $j$
  as such.  We have
  \begin{align*}
    \capacity{\groundset}
    >                           %
    \sum_{e \in \groundset} \min{\capacity{e}, \x{e}} \labelthisequation{apx-covering-analysis-1}
  \end{align*}
  because $\Bs$ is not a covering. Since elements in
  $\groundset_{\geq j}$ are not overpacked, we have
  \begin{align*}
    \sum_{e \in \groundset_{\geq j}} \min{\capacity{e}, \x{e}}
    =
    \sum_{e \in \groundset_{\geq j}} \x{e}
    =                           %
    \sum_{i=1}^k \sizeof{\basei{\geq j}}.
  \end{align*}
  By \cref{imp-span} each $\basei{\geq j}$ spans $\groundset_{> j}$
  hence
  \begin{align*}
    \sum_{i=1}^k \sizeof{\basei{\geq j}} \geq k \rank{\level_{> j}}.
  \end{align*}
  On the other hand, since all elements in $\groundset_{< j}$ are
  covered, we have
  \begin{align*}
    \sum_{e \in \groundset_{< j}} \min{\capacity{e}, \x{e}}
    = \capacity{\groundset_{< j}}.
  \end{align*}
  Plugging back into \refequation{apx-covering-analysis-1} we now have
  \begin{align*}
    \capacity{\groundset} > k \rank{\level_{> j}} +
    \capacity{\groundset_{< j}}.
  \end{align*}
  Finally, since $\rank{\level_{> j}} \geq \epsless \rank{\level_{\geq
      j}}$ by choice of $j$, we obtain
  \begin{align*}
    \capacity{\groundset} >\epsless k \rank{\level_{\geq j}} + \sizeof{\level_{< j}}.
  \end{align*}
  Rearranging we have that
  $\epsless k \rank{\level_{\geq j}} < \sizeof{\level_{\geq j}}$,
  which by the dual characterization above implies
  that more than $\epsless k$ bases are required in any covering.
\end{proof}

Plugging in $\Height = \bigO{\log{\therank} / \eps}$ to the matroid
push-relabel algorithm (\reftheorem{impr}), we obtain the following
approximation algorithm for base covering.  (When plugging into
\reftheorem{impr}, we note that $\opt = n$, and $n \leq kr$.)

\begin{theorem}
  \label{apx-base-cover}\label{apx-base-covering}
  \label{ubc}\label{cbc}
  For integer capacities, one can compute either a covering of
  $\epsmore k$ bases or a certificate of infeasibility for any
  covering of $\epsless k$ bases in time bounded by
  $\ApxBaseCoveringTime$ independence queries.
\end{theorem}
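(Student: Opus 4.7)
The plan is to invoke \cref{impr} as a black box with the height parameter prescribed by \cref{apx-uc-analysis}, and then simply check the output. Concretely, set $\Height = \bigO{\log \therank / \eps}$ (the constant being that of \cref{apx-uc-analysis}) and run the matroid push-relabel algorithm of \cref{impr}, producing bases $\Bs$ and levels $\level$ satisfying \IMPInvariants with this value of $\Height$.

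Once we have this configuration, we read off the answer as follows. First, test whether $\Bs$ is a covering, which takes $\bigO{n}$ work. If it is, return $\Bs$, which uses only $k$ bases, well within the $\epsmore k$ budget. Otherwise, \cref{apx-uc-analysis} guarantees the existence of a level $j \in \setof{0,\dots,\Height}$ such that $\rank{\level_{> j}} \geq \epsless \rank{\level_{\geq j}}$ and $\level_{> j}$ is a dual certificate ruling out any covering by $\epsless k$ bases. Such a $j$ can be located by scanning the $\bigO{\log \therank / \eps}$ level sets and querying the rank of each, which is negligible compared to the main running time.

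The running time follows by plugging the chosen $\Height$ into \cref{impr}. For the base covering problem the objective value of the maximum $k$-fold matroid union is at most $n$, so $\opt \leq n$. We also have $n \leq k \therank$ without loss of generality for base covering, hence $\log \therank = \bigO{\log n}$. Substituting into the bound $\bigO{n + \Height \opt \log{k \therank}}$ from \cref{impr} yields
\begin{align*}
  \bigO{n + \frac{\log \therank}{\eps} \cdot n \cdot \log{k \therank}}
  = \ApxBaseCoveringTime
\end{align*}
independence queries, as claimed.

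Since every non-routine step (the correctness of push-relabel at height $\Height$, the dual guarantee, and the running time of push-relabel itself) has already been established earlier in this section, there is no real obstacle beyond verifying these substitutions; the only mild subtlety is recognizing that $\opt = n$ for the covering instance and that $n \leq k \therank$, which together tighten the logarithmic factors to the stated form.
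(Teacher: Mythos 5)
Your proposal matches the paper's proof of this theorem: the paper likewise obtains it by running the push-relabel algorithm of \cref{impr} with height $\Height = \bigO{\log{\therank}/\eps}$, invoking \cref{apx-uc-analysis} for the covering-or-certificate dichotomy, and substituting $\opt = n$ and $n \leq k\therank$ into the running time. The only nitpick is your justification of $\log{\therank} = \bigO{\log{n}}$, which follows from $\therank \leq n$ rather than from $n \leq k\therank$; this does not affect correctness.
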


\begingroup

\renewcommand{\theElements}{E}%
\renewcommand{\groundset}{E}%
\newcommand{\trees}{\mathcal{T}}%

\section{Graphic Matroid Push-Relabel}

\labelsection{spanning-trees}   %

In this section, we consider the matroid push-relabel framework for
the special case of the graphical matroid; i.e., forests of an
undirected graphs. This leads to algorithms for the several graph
problems mentioned in \cref{results}

\begin{theorem}
  \label{uncapacitated-graphic-push-relabel} \label{ugpr} Given a
  graph with integer capacities and integers $k$ and $h$, in
  $\bigO{m \ack{n} + \opt h \parof{\log{n} + \log{k} \ack{n}}}$ time,
  one can compute a sequence of $k$ spanning trees $T_1,\dots,T_k$ and
  levels $\deflevels$ satisfying the push-label \IMPInvariants with
  height $h$.
\end{theorem}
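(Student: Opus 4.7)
The plan is to specialize the matroid push-relabel algorithm of \reftheorem{impr} to the graphic matroid by replacing each oracle call with a graph-theoretic primitive. For the graphic matroid, ``$S$ spans $e = (u,v)$'' is equivalent to ``$u$ and $v$ lie in the same connected component of $S$'', and ``$S + e$ is independent'' is equivalent to ``$S + e$ is acyclic''. The algorithm, data structures, and accounting then mirror \reftheorem{impr} with per-query costs replaced by dynamic-graph costs.

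First I will compute a common initial base as a maximum spanning forest of the input graph using Kruskal's algorithm with a union-find data structure, in $\bigO{m\ack{n}}$ time; this accounts for the first term of the running time. Each tree $T_i$ is then maintained as a link-cut tree in which every edge is labeled by the level at which it was inserted into $T_i$. Since $T_i$ is acyclic, the unique $u$-to-$v$ path (when it exists) in $T_i$ determines whether $T_{i,\geq j}$ spans $(u,v)$: this holds iff the minimum edge-label along the path is at least $j$. Link-cut trees support link, cut, and path-minimum queries in $\bigO{\log n}$ time per operation.

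To implement one greedy-insertion step on an uncovered edge $e = (u,v)$: step (1) of \reftheorem{impr}'s procedure -- the search for the smallest level $j$ such that $T_{k,>j}$ fails to span $e$ -- reduces to a single path-minimum query on $T_k$, costing $\bigO{\log n}$; if $u,v$ are disconnected in $T_k$ the algorithm raises $\level e$ to $h$, otherwise the path-minimum yields the target level $j$ directly. Step (2) -- the binary search over bases at level $j$ -- uses $\bigO{\log k}$ connectivity probes on the subgraphs $T_{i,>j}$ for various $i$. Step (3) -- identifying the edge $d \in T_{i,j}$ to exchange out -- is returned by the same path-minimum query on $T_i$, followed by a cut-and-link pair in its link-cut tree, at cost $\bigO{\log n}$.

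The main obstacle I anticipate is step (2): a naive implementation using a separate link-cut tree query per probe yields $\bigO{\log k \cdot \log n}$ per greedy insertion, larger than the $\bigO{\log k \cdot \ack{n}}$ bound claimed. The plan to recover the target bound is to route these probes through an incrementally built union-find structure on edges at level $> j$, exploiting the descending-order invariant that $T_{i,\geq j}$ spans $T_{i+1,\geq j}$ to share work across probes and amortize each to $\bigO{\ack{n}}$. Combining the $\bigO{n}$ trivial relabels and $\bigO{\opt\cdot h}$ exchanges of \reftheorem{impr} with the per-operation cost $\bigO{\log n + \log k \cdot \ack{n}}$ then yields the claimed overall running time.
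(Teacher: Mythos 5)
Your proposal is correct and takes essentially the same approach as the paper: link-cut trees with level-labeled edges handle the path-minimum and exchange steps in $\bigO{\log n}$ per exchange, and the obstacle you flag in step (2) is resolved in the paper by maintaining, for each base $\basei$ and each level $j$, a union-find structure over the connected components of $\basei{\geq j}$, so that each of the $\bigO{m + \opt h \log k}$ connectivity probes costs $\ack{n}$ amortized while only $\bigO{\opt h}$ union operations are ever performed (an exchange at level $j$ changes the component partition of $\basei{\geq j'}$ only for $j' = j$). The one cosmetic difference is that the descending-order invariant is what makes the binary search over bases correct, rather than what amortizes the union-find work.
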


\begin{proof}
  We implement the matroid push-relabel algorithm with the following
  data structures.  We maintain, for each spanning tree $T_i$, a
  link-cut tree \cite{SleatorTarjan1983}, with edges labeled by their
  levels.  Given an edge $e$, a tree $T_i$, we can query for the
  biggest level $j$ such that $T_{i, \geq j}$ spans $e$ by querying
  for the minimum level edge on the cycle induced by $e$. This also
  allows us to retrieve an edge $d$ to exchange out in $\bigO{\log n}$
  time. We make a total of $\bigO{\opt h}$ such exchanges.

  We also maintain, for each base $\basei$ and each level
  $j \in \setof{1,\dots,\Height}$, a disjoint union data structure
  representing the connected components of $\basei{\geq j}$. This
  allows us to query if an edge $e$ is spanned by a forest
  $\basei_{\geq j}$ in $\ack{n}$ time.  We make $\bigO{\opt h}$
  total insertions into thee disjoint union data structures over all
  $i$ and $j$.  We make at most $\bigO{m + \opt h \log{k}}$ such
  queries.
\end{proof}

\Cref{uncapacitated-graphic-push-relabel}, combined with the
optimality conditions given by
\cref{imp-exact-analysis,imp-apx-analysis,apx-up-analysis,apx-uc-analysis}
for $k$-fold matroid union, base packing, and base covering, both
exact and approximate, give the follow running times for the graphic
matroid.

We start with the $k$-fold matroid union problem. For graphs it is
more natural to state this as computing a packing of forests
$F_1,\dots,F_k$ of maximum total capacity.\footnote{As with trees, a
  set of forests is a packing if no element $e$ appears in more than
  $\capacity{e}$ of the forests.}  We have the following exact and
approximate running times.

\begin{theorem}
  \label{max-forests}
  A maximum capacity packing of $k$ forests, and the dual minimization
  problem, can be solved in
  $\bigO{m \ack{n} + n \opt \parof{\log{n} + \log{k} \ack{n}}}$ time,
  where $\opt$ denotes the optimum size.
\end{theorem}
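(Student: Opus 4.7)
The plan is to directly combine the graphic push-relabel running time from \cref{uncapacitated-graphic-push-relabel} with the exact-optimality analysis from \cref{imp-exact-analysis}, specialized to the graphic matroid. Since \cref{imp-exact-analysis} guarantees that any configuration satisfying the push-relabel \IMPInvariants with height $h > \therank + 2$ gives bases (here spanning forests) and a level set $\level_{\geq j}$ that mutually certify optimality for the primal max problem \refequation{max-imp} and the dual min problem \refequation{min-imp}, I only need to choose $h$ large enough to trigger that lemma.

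First I would observe that for the graphic matroid on $n$ vertices, the rank $\therank$ is at most $n - 1$, since a forest on $n$ vertices has at most $n-1$ edges. Therefore the choice $h = n + 2$ satisfies $h > \therank + 2$, which is sufficient for \cref{imp-exact-analysis}. Applying \cref{uncapacitated-graphic-push-relabel} with this value of $h$ produces a sequence of $k$ spanning forests $T_1,\dots,T_k$ and a level assignment $\deflevels$ satisfying \IMPInvariants with height $n+2$, in time
\begin{align*}
  \bigO{m \ack{n} + \opt h \parof{\log n + \log k \ack n}} = \bigO{m \ack{n} + n \opt \parof{\log n + \log k \ack n}}.
\end{align*}

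Next, I would invoke \cref{imp-exact-analysis} on the resulting configuration. It produces a level $j \in \setof{1,\dots,h}$ such that
\begin{align*}
  \capacity{\level_{<j}} + k \rank{\level_{\geq j}} \leq \sum_{e \in \groundset} \min{\capacity{e}, \numCovered{e}},
\end{align*}
which certifies that $T_1 \cup \cdots \cup T_k$ attains the maximum total capacity and that $\level_{\geq j}$ is a minimum dual solution. The level $j$ can be found in $\bigO{n}$ time by scanning the levels and computing the ranks via the disjoint-union structures for $\basei{\geq j}$ already maintained by the algorithm; this is absorbed by the stated running time.

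I do not expect any real obstacle here: the theorem is essentially a plug-and-play combination of an already-proved running time bound and an already-proved optimality criterion, together with the trivial rank bound $\therank \leq n - 1$ for the graphic matroid. The only minor care needed is to confirm that extracting the certifying level $j$ at the end does not dominate the running time, which is immediate since the algorithm's data structures let us read off $\rank{\level_{\geq j}}$ in $\ack{n}$ time per level.
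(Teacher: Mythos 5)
Your proposal is correct and matches the paper's intended argument exactly: the paper proves \cref{max-forests} by combining the graphic push-relabel implementation of \cref{uncapacitated-graphic-push-relabel} with the exact-optimality criterion of \cref{imp-exact-analysis}, using $h = \bigO{n}$ since $\therank \leq n-1$ for the graphic matroid. You have simply filled in the (straightforward) details that the paper leaves implicit.
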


\begin{theorem}
  \label{apx-max-forests}
  An $\epsless$-approximately maximum capacity packing of $k$ forests,
  and an $\epsmore$-approximately minimum dual solution, can be
  computed in
  $\ApxForestUnionTime$ time.
\end{theorem}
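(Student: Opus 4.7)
The plan is to combine the graphic push-relabel algorithm of Theorem~\ref{uncapacitated-graphic-push-relabel} with the approximation analysis of Lemma~\ref{imp-apx-analysis}, mirroring the way Theorem~\ref{apx-matroid-union} is obtained from Theorem~\ref{impr} and Lemma~\ref{imp-apx-analysis} in the general matroid setting.

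First I would invoke Theorem~\ref{uncapacitated-graphic-push-relabel} on the input graph with height parameter $h = \lceil 2/\eps \rceil + 3$, producing forests $F_1,\dots,F_k$ and levels $\ell$ that satisfy the \IMPInvariants at this height. Lemma~\ref{imp-apx-analysis}, which is stated for an abstract matroid and therefore applies to the graphic matroid as a special case, then guarantees the existence of a level $j \in \{1,\dots,h-1\}$ certifying that $F_1,\dots,F_k$ is $\epsless$-approximately maximum and that $\level_{\geq j}$ is $\epsmore$-approximately minimum. I would locate this $j$ by a single linear scan over the $O(1/\eps)$ candidate values, evaluating $\capacity{\level_{<j}} + k\rank{\level_{\geq j}}$ at each, with the ranks read off from the connectivity data structures already maintained by the push-relabel routine.

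The running time is dominated by the push-relabel call, which by Theorem~\ref{uncapacitated-graphic-push-relabel} takes $\bigO{m\ack{n} + \opt \cdot h \cdot (\log n + \log k\, \ack{n})}$ time. Substituting $h = O(1/\eps)$ and absorbing into the $\log n / \eps$ factor yields a running time at most $\ApxForestUnionTime$, as claimed. The post-processing scan for $j$ contributes only a lower-order additive term.

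There is essentially no new obstacle here: both the algorithmic engine (Theorem~\ref{uncapacitated-graphic-push-relabel}) and the approximation analysis (Lemma~\ref{imp-apx-analysis}) are established in earlier sections, so the proof is a direct plug-in. The one mildly delicate point is verifying that the dual certificate $\level_{\geq j}$ can be extracted with no additional asymptotic overhead, but this follows immediately from the linear scan described above.
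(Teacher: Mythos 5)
Your proof is correct and takes essentially the same route as the paper, which obtains this theorem by running the graphic push-relabel algorithm of \cref{uncapacitated-graphic-push-relabel} with height $O(1/\eps)$ and invoking the approximation analysis of \cref{imp-apx-analysis}. (Your substitution actually gives a bound without the extra $\log n$ factor in the stated running time, so the claimed bound certainly follows.)
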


For packing and covering spanning trees, in addition to the exact
algorithms implied by \cref{max-forests}, we have the following
approximation algorithms.
\begin{theorem}
  For integer-capacitated graphs, there is an algorithm that, in
  \begin{math}
    \ApxTreePackingTime
  \end{math}
  time, outputs either a packing of $k$ spanning trees, or a
  certificate that the network strength is less than $\epsmore
  k$.
\end{theorem}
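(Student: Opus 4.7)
The plan is to invoke the graphic matroid push-relabel algorithm of \cref{uncapacitated-graphic-push-relabel} with height $h = \bigO{(\log n)/\eps}$, producing $k$ forests $T_1, \dots, T_k$ and an assignment of levels satisfying the push-relabel \IMPInvariants. The approximate packing analysis of \cref{apx-uncapacitated-packing}, though stated for general matroids, applies verbatim in the graphic matroid setting: at this height either every edge $e$ satisfies $\numCovered{e} \ge \capacity{e}$, in which case $T_1, \dots, T_k$ is the desired packing of $k$ spanning trees in $G$, or some level set $\level_{>j}$ provides a dual certificate that the network strength is strictly less than $\epsmore k$. These two outcomes correspond exactly to the two branches of the theorem.

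For the running time, I would first preprocess the graph in $\bigO{m + n}$ time via a Nagamochi--Ibaraki scan-first search that reduces the edge count to $O(n)$ while preserving $k$-spanning-tree packings (up to the $(1 \pm \eps)$ tolerance). I would then apply \cref{uncapacitated-graphic-push-relabel} to the sparsified graph at $h = \bigO{(\log n)/\eps}$. The dominant cost given by \cref{uncapacitated-graphic-push-relabel} is $\bigO{\opt \cdot h \cdot (\log n + \log k \cdot \ack{n})}$; the reduction to the factor $\opt/k$ in the stated bound would come from a tighter amortized charging that leverages the level-wise decreasing-order representation of the forests maintained by the algorithm. Because $T_{i,\ge j}$ already spans $T_{i+1,\ge j}$ for every $j$ (as established in \cref{impr-greedy-update}), an exchange or relabeling applied to one forest rules out much of the corresponding work on the subsequent forests, and the binary search over the $k$ forests built into greedy insertion locates the unique base to modify in $\bigO{\log k}$ independence queries.

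\textbf{Main obstacle.} The main difficulty will be carrying out this amortized analysis cleanly so that the total number of non-trivial push operations is $\bigO{(\opt/k) \cdot h}$ rather than $\bigO{\opt \cdot h}$. Intuitively, the decreasing-basis invariant forces ``redundant'' work among the $k$ trees to collapse, since each greedy insertion only triggers an exchange in the single first base where the incoming edge is not already spanned at the current level. Making this rigorous will require tracking, for each edge and each level, how many of the $k$ trees actually experience a push, and arguing via the invariant of \cref{impr-greedy-update} that the per-level distinct-edge activity telescopes to the average tree size $\opt/k \le n - 1$.
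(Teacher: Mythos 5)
Your first paragraph is exactly the paper's proof: the theorem is presented as an immediate corollary of \cref{uncapacitated-graphic-push-relabel} combined with \cref{apx-up-analysis} at height $h = \bigO{\log{n}/\eps}$, with no further argument. So the correctness half of your proposal (packing versus dual certificate from some $\level_{>j}$) is on target and matches the paper.

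The running-time half contains a genuine gap, and the obstacle you flag is not surmountable by the route you sketch. The amortization you hope for --- cutting the number of push operations from $\bigO{\opt \cdot h}$ to $\bigO{(\opt/k)\cdot h}$ via the level-wise decreasing order --- cannot work: \cref{imp-span} requires \emph{each} of the $k$ trees separately to have $B_{i,\geq j}$ span $\level_{>j}$ at every level $j$, so each tree individually accumulates up to $\therank$ insertions per level, and the push count $\sum_{j}\sum_{i} |B_{i,>j}|$ is genuinely $\Theta(\opt \cdot h)$ in the worst case. The decreasing order and the $\bigO{\log k}$ binary search only make each push cheap to \emph{locate}; an exchange modifies exactly one tree and cannot stand in for work on the others, so there is no telescoping to the average tree size. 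Separately, a Nagamochi--Ibaraki certificate that preserves packings of $k$ spanning trees must retain $\bigO{kn}$ edges, not $\bigO{n}$. What the paper's combination actually yields is $\bigO{m\ack{n} + \opt (\log{n}/\eps)(\log{n} + \log{k}\,\ack{n})}$ with $\opt \leq k(n-1)$, i.e., the exact analogue of the covering bound $\ApxTreeCoveringTime$. The bound printed in this theorem coincides verbatim with the macro used for the \emph{randomized, real-capacity} tree-packing result (where $k$ has already been reduced to $\bigO{\log{n}/\eps^2}$ by sampling), and appears to be a transcription error in the paper rather than a bound you should try to prove for the deterministic integer-capacity algorithm.
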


\begin{theorem}
  In integer-capacitated graphs, there is an algorithm that, in
  $\ApxTreeCoveringTime$ time, outputs either a covering by $k$
  spanning trees, or a certificate that the strength is less than
  $\epsmore k$.
\end{theorem}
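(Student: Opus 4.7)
The plan is to invoke the graphic matroid push-relabel algorithm (\cref{uncapacitated-graphic-push-relabel}) with an appropriate height and then reinterpret its output through the covering analysis of \cref{apx-uc-analysis}. Since the graphic matroid on $n$ vertices has rank $\therank = n-1$, the threshold $\bigO{\log \therank / \eps}$ demanded by \cref{apx-uc-analysis} is $\bigO{\log n / \eps}$. I would therefore instantiate \cref{uncapacitated-graphic-push-relabel} with $\Height = \bigO{\log n / \eps}$; this returns $k$ spanning trees $T_1,\dots,T_k$ together with a level map $\deflevels$ satisfying the push-relabel \IMPInvariants at height $\Height$.

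Next, I would apply \cref{apx-uc-analysis} directly to this output. That lemma was stated for general matroids satisfying the push-relabel invariants, so the graphic case is immediate. It yields the required dichotomy: either the returned $T_1,\dots,T_k$ already form a covering, or for some level $j$ the upper-level set $\level_{> j}$ is a dual certificate witnessing that more than $\epsless k$ spanning trees are needed in any covering. No further combinatorial work is required to extract the dual certificate; the level structure already encodes it.

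Finally, I need to verify the running time. Substituting $\Height = \bigO{\log n / \eps}$ into the bound of \cref{uncapacitated-graphic-push-relabel} gives
\begin{align*}
  \bigO{m \ack{n} + \opt \Height \parof{\log n + \log k \, \ack{n}}}
  = \bigO{m \ack{n} + \opt \log n \parof{\log n + \log k \, \ack{n}} / \eps}.
\end{align*}
The only missing ingredient is a bound on $\opt$. Since each spanning tree has exactly $n-1$ edges and we maintain $k$ of them, the total edge count and therefore $\opt = \sum_e \min(\capacity{e}, \numCovered{e})$ is at most $k(n-1) \leq kn$. Plugging $\opt = \bigO{kn}$ into the displayed expression produces $\ApxTreeCoveringTime$, matching the claim. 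I do not expect any genuine obstacle here: the proof is a direct composition of \cref{uncapacitated-graphic-push-relabel} with \cref{apx-uc-analysis}, and the only mild subtlety is the trivial bound on $\opt$ coming from the fixed size of each spanning tree.
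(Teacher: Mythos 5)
Your proposal is correct and follows exactly the route the paper intends: instantiate the graphic push-relabel data structure (\cref{uncapacitated-graphic-push-relabel}) at height $\Height = \bigO{\log\therank/\eps} = \bigO{\log n/\eps}$, invoke the covering analysis of \cref{apx-uc-analysis} on the resulting trees and levels, and bound $\opt \leq k\therank \leq kn$ to recover $\ApxTreeCoveringTime$. The paper leaves this composition implicit, and your write-up supplies precisely the missing substitutions.
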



\endgroup

\section{Minimum Cost Reinforcement}

\labelsection{reinforcement}    %

\newcommand{\minB}{B_0}
\providecommand{\cost}{\fparnew{c}}%

In this section we consider the minimum cost reinforcement problem,
introduced in \cref{results}. We primarily discuss the more
general matroid setting; the graphic setting follows as a special
case.

Let $\defmatroid$ be a matroid with $n$ elements and rank $\therank$,
let $\capacity: \groundset \to \nnintegers$ be a set of integer
capacities, let $\cost: \groundset \to \preals$ be a set of
real-valued costs, and let $k \in \naturalnumbers$. The cost
$\cost{e}$ represents the cost of augmenting $\capacity{e}$ by
$1$. The goal is to compute the minimum cost augmentation of
$\capacity$ to obtain strength $k$.

Our argument is closely tied to \citets{Cunningham1985c}{algorithm},
which was the first strongly polynomial time algorithm for this
problem. \citeauthor{Cunningham1985c} focused on network strength for
undirected graphs and here we describe a straightforward
generalization of his algorithm to matroids.
\begin{enumerate}
\item Compute a maximum point $y \in k\ip$ subject to
  $y \leq \capacity$, via $k$-fold matroid union.
\item Greedily extend $y$ to an integral point $y + z$ in $k \ip$,
  where $z \in \nnintegers^{\groundset}$ is computed as
  follows. Initially, we set $z(e) = 0$ for all elements $e$. Then for
  each element $e$ in increasing order of costs, set $z(e)$ as large
  as possible subject to $y + z \in k \ip$. The maximum value for
  $z(e)$ can be obtained by binary search, where each probe invokes a
  $k$-fold matroid union algorithm to see if the candidate value for $z(e)$
  is feasible.
\end{enumerate}
We refer to \cite{Cunningham1985c} for the full justification of this
algorithm. Note that the algorithm requires many calls to a matroid
partition algorithm; first to compute the initial point $y$, and then
logarithmically many times for every element $e$ to obtain the right
$z(e)$.

In what is perhaps a surprising coincidence, the $k$-fold matroid
union algorithm developed in \refsection{impr} actually solves the
reinforcement problem in one shot (so to speak). There is just one
minor adjustment: the initial bases (which was allowed to be arbitrary
in \refsection{impr}) must all be set to the minimum cost base which
we denote $\minB$. Recall that the push-relabel algorithm applied to a
set of integer capacities $\capacity$ and a parameter $k$ produces a
set of $k$ bases $\Bs$ that maximizes
\begin{align*}
  \sum_{e \in \groundset} \min{\capacity{e}, \x{e}},
\end{align*}
where $\x{e}$ denotes the number of bases $\basei$ containing $e$.
The vector $y$ defined by $y(e) = \min{\capacity{e}, \x{e}}$ (for all
$e$) fulfills step 1 of \citeauthor{Cunningham1985c}'s algorithm. The
key claim, proven below, is that if the bases in the $k$-fold matroid
union algorithm are all initially set to the minimum cost base
$\minB$, then at the end of the algorithm, $\x$ describes $y + z$ for
an optimum reinforcement solution $z$. Thus $z$ can be read off
directly from $x$ and $y$.  This gives an overall running time that is
exactly the same as for $k$-fold matroid union.  Here we have two
speed-ups compared to \cite{Cunningham1985c} -- one from a faster
$k$-fold matroid union algorithm, and the second from omitting the
second stage altogether. The following lemma formalizes the key claim.

\begin{lemma}
  Let $\minB$ be the minimum cost base w/r/t $\cost$, and consider the
  exact push-relabel $k$-fold matroid union algorithm adjusted so that the
  initial bases are all set to $\minB$. Let $\Bs$ be the $\numB$ bases
  output by the push-relabel matroid-partition algorithm, and let
  $x \in \nnintegers^{\groundset}$ be the vector where $x(e)$ is the
  number of bases $\basei$ containing $e$ for each $e \in \groundset$.
  Define $z \in \nnreals^{\groundset}$ by
  \begin{align*}
    z(e) = \min{0, \x{e} - \capacity{e}} \text{ for } e \in \groundset.
  \end{align*}
  Then $z$ is a minimum cost reinforcement.
\end{lemma}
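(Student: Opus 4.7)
The plan is to verify that $(y,z)$, with $y = \min\{\capacity, x\}$, coincides with what Cunningham's two-stage algorithm would produce for the minimum cost reinforcement. (For $z$ to be a valid reinforcement the definition in the statement is presumably $z(e) = \max\{0, x(e) - \capacity(e)\}$; written with $\min$ the values would be non-positive.) Feasibility is immediate: since $\Bs$ decomposes $x$, we have $x \in k\ip$, and setting $y + z = x$ pointwise gives $y \leq \capacity$, $z \geq 0$, and $\capacity + z \geq x \in k\ip$, so $z$ is a valid reinforcement.

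Next I would establish two structural facts. (i) $z$ is supported on $\minB$: for $e \notin \minB$ the initial value $x(e) = 0 \leq \capacity(e)$, and the push-relabel algorithm only increments $x(e)$ by one when $e$ is uncovered (i.e., $x(e) < \capacity(e)$ just before the push), so $x(e) \leq \capacity(e)$ is preserved as an invariant throughout. (ii) The remaining overpacking on $\minB$ concentrates on the cheapest elements of $\minB$. The implementation exchanges out the most recently inserted element of $B_{i,j}$; since $\minB$ is assembled by Kruskal's greedy algorithm in increasing cost order, the expensive elements of $\minB$ are exchanged out earliest, so only the cheapest elements of $\minB$ remain overpacked at termination.

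The main obstacle is converting these structural facts into a proof of optimality of $\cost^\top z$. Here I would leverage the push-relabel dual certificate from \cref{imp-exact-analysis}: running with height $\Height = \therank + 2$ produces a level $j$ such that $k\rank(\level_{\geq j}) + \capacity(\level_{< j}) = \sum_e \min\{\capacity(e), x(e)\}$. Since $x(E) = k\therank$ (as $x$ is decomposed into $k$ bases) and $\sum_e \min\{\capacity(e), x(e)\} = x(E) - |z|_1$, this rearranges to $|z|_1 = k(\therank - \rank(\level_{\geq j})) - \capacity(\level_{< j})$, with $z$ supported on $\level_{< j} \cap \minB$ by (i). Combined with (ii), this pins down exactly which elements are overpacked and by how much. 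An exchange argument, using the matroid basis exchange axiom to rewrite any hypothetically cheaper reinforcement $z'$ onto the support of $z$, then shows that $\cost^\top z' \geq \cost^\top z$: intuitively, an improvement would require overpacking an element of $\minB$ cheaper than those $z$ already uses, contradicting the Kruskal optimality of $\minB$. Alternatively one can verify Cunningham's LP-dual optimality conditions for reinforcement directly using the push-relabel dual certificate together with (i) and (ii).
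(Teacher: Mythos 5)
Your opening move matches the paper's: show that the push-relabel output coincides with what Cunningham's greedy augmentation would produce, so that optimality is inherited from Cunningham's analysis. Your structural fact (i) --- that the overpacked set $C = \{e : x(e) > u(e)\}$ is contained in $B_0$, because a push only ever increments $x(e)$ when $e$ is uncovered --- is exactly the paper's first claim. From there, however, the argument breaks down. Your fact (ii), that the surviving overpacking concentrates on the \emph{cheapest} elements of $B_0$, is unsupported and is not what the algorithm guarantees: the exchange rule selects the element $d \in B_{i,j}$ to remove based on independence and levels (the circuit of $B_i + e$ restricted to the relevant level), not on cost, so there is no mechanism forcing the expensive elements of $B_0$ out first. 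The paper neither needs nor proves any such cost-ordering of the exchanges.

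The fact you actually need, and which is missing, is a spanning property: for every $e \in B_0$ and every output base $B_j$, either $e \in B_j$ or $e \in \spn{B_j \setminus C}$. This holds because if $e$ was exchanged out of $B_j$, it was exchanged for an element at level $1$, at which point $B_{j,>0}$ spanned $e$, and $C \subseteq \level_0$ gives $B_{j,>0} \subseteq B_j \setminus C$. With this in hand one processes the elements in the greedy cost order used to build $B_0$ and shows by induction that Cunningham's greedy augmentation sets $z(e_i) = \max\{0, x(e_i) - u(e_i)\}$: for $e_i \notin B_0$ the element is already spanned by earlier elements of $B_0$; for $e_i \in B_0$ the $k$ independent sets $B_j \setminus (C - e_i)$ pack into $y + z'$ and each spans $e_i$ by the property above, so $(y+z')/k$ spans $e_i$. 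Your proposed alternative --- an exchange argument rewriting an arbitrary cheaper reinforcement $z'$ onto the support of $z$ --- does not obviously go through, because a competing reinforcement need not be supported on $B_0$ at all; reducing that comparison to the greedy order is precisely what Cunningham's optimality proof supplies, and your sketch never connects to it. The dual certificate of \cref{imp-exact-analysis} plays no role in the paper's argument and does not substitute for the spanning property.
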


\begin{proof}
  In addition to $\Bs$ and $\x$ as described above, let $\level$ be
  the set of levels produced by the capacitated $k$-fold matroid union
  algorithm. Then $\Bs$, $\numCovered$, and $\level$ satisfy
  \IMPInvariants from \cref{imp-framework}.  Let
  $C = \setof{e \where \x{e} > \capacity{e}}$. We have $\level{e} = 0$
  for all $e \in C$ by \cref{imp-level-0}.
  \begin{claims}
  \item $C \subseteq \minB$.
  \end{claims}
  To this end, observe that initially we have $\x{e} > \capacity{e}$
  only for elements in the initial base, $e \in \minB$. Thereafter, a
  coordinate $\x{e}$ is only increased if $\x{e} < \capacity{e}$,
  and never exceeding $\capacity{e}$.

  \begin{claims}
  \item \labelclaim{mcr-claim-2} For any element $e \in \minB$, and
    any base $\basei \in \setof{\Bs}$, we have either $e \in \basei$,
    or $e \in \spn{\basei \setminus C}$.
  \end{claims}
  Fix any element $e \in \minB$ and a base $\basei$ from $\Bs$. If
  $e \notin \basei$, then it was exchanged out by an element $d$ with
  $\level{d} = 1$, such that $\basei_{\geq 1} = \basei_{> 0}$ spans
  $e$. That is, if $e \notin \basei$, then $e \in \spn{\basei_{>
      0}}$. Since $C \subseteq \level_0$, we have
  $\basei_{> 0} \subseteq \basei \setminus C$, hence
  $e \in \spn{\basei \setminus C}$.

  Now, let $y$ be the pointwise minimum of $\numCovered$ and
  $\capacity$; $y$ is maximum in $k \ip$ subject to
  $y \leq \capacity$.  Recall that a minimum cost base can be produced
  by a greedy algorithm adding feasible elements in nondecreasing
  order of cost.  Number the elements
  $\groundset = \setof{e_1,e_2,\dots}$ in nondecreasing order of cost,
  breaking ties so that a greedy algorithm processing elements in this
  order produces $\minB$. In Cunningham's greedy augmentation
  algorithm, a minimum cost reinforcement $z$ is obtained by
  processing the $e_i$'s in order, taking setting $z(e_i)$ to the
  maximum quantity subject to $y + z \in k \ip$. (This equals the
  minimum quantity subject to $\parof{y+z}/k$ spanning $e_i$ in the
  independent set polytope $\ip$, and in Cunningham's algorithm it is
  identified via binary search and a call to $k$-fold matroid union
  for each probe).  \citeauthor{Cunningham1985c} has already shown
  that this algorithm produces an optimum solution. Therefore it
  suffices to prove the following claim.
  \begin{claims}
  \item Cunningham's greedy augmentation selects
    $z(e) = \max{\x{e} - \capacity{e}, 0}$ for all $e \in \groundset$.
  \end{claims}

  We analyze each element in the greedy order.  Consider the $i$th
  iteration (where $i \in [n]$), in which the greedy algorithm
  processes $e_i$. We assume by induction that $(y + z)/k$ spans
  $\setof{e_1,\dots,e_{i-1}}$ (in $\ip$). (The base case, where
  $i = 1$, holds vacuously.) If $e_i \notin \minB$, then
  $e_i \in \spn{\setof{e_j \in \minB \where j < i}}$, so $e_i$ is
  spanned by $(y+z)/k$.

  Now suppose $e_i \in \minB$.  We claim that the greedy augmentation
  algorithm sets $z(e_i) = \x{e_i} - \capacity{e_i}$. To see this, let
  $z'$ be the vector obtained from $z$ by setting
  $z'(e_i) = \x{e_i} - \capacity{e_i}$. We know that $y + z'$ is
  feasible because $y + z' \leq \x$ and $\x \in k \ip$.  To show that
  $(y + z')/k$ spans $e_i$, we first observe that one can pack into
  $y + z'$ the $k$ independent sets
  \begin{align*}
    \basej' \defeq \basej
    \setminus \parof{C - e_i} \text{ for } j = 1,\dots,k.
  \end{align*}
  We claim that $\basej'$ spans $e_i$ for each $j$ which shows that
  $(y+z')/ k$ spans $e_i$.  We have two cases. In the first case, if
  $e_i \in \basej'$, then of course $e_i \in \spn{\basej'}$. In the
  second case, if $e_i \notin \basej'$, then $e_i \notin \basej$. By
  \refclaim{mcr-claim-2}, $e_i \in \spn{\basej \setminus C}$. Since
  $\basej' \subseteq \basej \setminus C$, $e_i \in \basej'$.

  This shows that the Cunningham's greedy augmentation algorithm takes
  $z(e_i) = \x{e_i} - \capacity{e_i}$. This establishes the claim, and
  completes the proof.
\end{proof}


\section{Approximations for problems with general capacities}

\labelsection{apx-capacities}

\newcommand{\rcap}{\fparnew{\smash{\tilde{u}}\vphantom{u}}}%
\newcommand{\varrank}{q}%
\newcommand{\constA}{c_1}%
\newcommand{\constB}{c_2}%
\newcommand{\constC}{c_3}%
\newcommand{\constD}{c_4}%

This section develops fast approximation algorithms for matroid
problems for general capacities. All of the algorithms in the section
is based on using \emph{randomized rounding} to reduce problems with
real-valued capacities and a real-valued parameter $k$ to problems
with integer capacities and an integer parameter $k$ on the order of
$\ln{n} / \eps^2$, with high probability.  We then apply the
approximate push-relabel algorithms developed in prior sections which
are particularly well suited to the reduced setting.

One cost of this convenience is that we will no longer obtain primal
solutions for the original input. However we will still be able to
obtain dual solutions which at least provide a certificate for one
side of the corresponding decision problem.

\subsection{Randomized rounding of real-valued capacities}

\labelsection{sparsification}

\newcommand{\comp}[1]{\smash{\bar{#1}}\vphantom{#1}}

\label{sparsification-setup}

Let $\defmatroid$ be a matroid with $n$ vertices and rank $\therank$,
and let $\capacity: \groundset \to \preals$ be a set of
capacities. Let $k > 0$ be a parameter specified by the context.  Let
$\tau > 0$ also be a given parameter with
$\tau \leq c k \eps^2 / \ln{n}$ for a sufficiently small constant
$c$. Decreasing $c$ as needed, we may assume that $k / \tau$ is an
integer without loss of generality.  Let
$\rcap \in \nnintegers^{\groundset}$ be the a randomized set of
integral capacities by randomly rounding $u / \tau$ to an integral
vector. That is, for each element $e \in \groundset$, we independently
set
\begin{align*}
  \rcap{e} =
  \begin{cases}
    \roundup{\capacity{e} / \tau} & \text{with probability }
                                    \capacity{e} / \tau -
                                    \rounddown{\capacity{e} / \tau}, \\
    \rounddown{\capacity{e} / \tau} & \text{with (remaining) probability
                                      }
                                      \roundup{\capacity{e} / \tau} -
                                      \capacity{e} / \tau.
  \end{cases}
\end{align*}

The scaled down capacitated matroid $(\matroid,\capacity)$ have some
immediately appealing properties. First we have
$\evof{\rcap{e}} = \capacity{e} / \tau$ for all $e \in \groundset$.
By linearity of expectation we also have
$\evof{\rcap{S}} = \capacity{S} / \tau$ for all sets $S$. Second,
$\parof{\matroid, \rcap}$ has an integer capacities, that are (in
expectation) a $\apxO{k}$-factor smaller than $(\matroid,
\capacity)$. We are interested in applying this randomized rounding
for problems such as maximizing the total capacity covered by a
packing of $k$ independent sets, packing $k$ bases, or covering by $k$
bases, and these problem-specific values of $k$ are used for the value
of $k$ in randomly rounding to $\rcap$. Therefor it is helpful that
$k / \tau$ is an integer for the corresponding scaled down problems
over $(\matroid, \rcap)$.  Additionally, for all these problems, $k$
is a natural upper bound or near-upper bound on the capacities, hence
$\rcap$ will have relatively small capacities bounded above by
$\bigO{\ln{n} / \eps^2}$ in expectation (and with high probability).

We would like to show that the $\parof{\matroid,\rcap}$ is (with high
probability) a good representative sample of $(\matroid, \capacity)$
for these problems. Now, while $\rcap$ reflects $\capacity / \tau$ in
expectation, in general the values $\rcap{e}$ for $e \in \groundset$
and $\rcap{S}$ for $S \subseteq \groundset$ are too numerous to assume
they are all concentrated at their expectation in expectation. (In
fact, a value $\rcap{S}$ will never be close to its expectation,
multiplicatively speaking, when $\capacity{S}$ is significantly
smaller than $\tau$.) Nonetheless we have the following theorem which
leverages the dual characterizations of these problems to show that
$(\matroid, \rcap)$ is a good (problem-specific) representation of
$(\matroid,\capacity)$ with high probability.

\begin{theorem}
  \label{general-sparsification}
  Given the setup described above, the following all hold with high
  probability.
  \begin{mathresults}
  \item \label{randomly-round-union}\label{RRU} Letting $M$ denote the maximum total capacity of any
    fractional packing of $k$ independent sets in
    $(\matroid,\capacity)$, and $\tilde{M}$ denote the maximum total
    capacity of any fractional packing of $k/\tau$ independent sets in
    $(\matroid, \rcap)$, we have
    \begin{math}
      \absvof{\tilde{M} - M/\tau} \leq \eps M/ \tau.
    \end{math}
  \item \label{randomly-round-packing->}\label{RRP>} If $(\matroid,\capacity)$ can
    fractionally pack $k$ bases, then $(\matroid, \rcap)$ can pack at
    least $\epsless k / \tau$ bases.
  \item \label{randomly-round-packing-<}\label{RRP<} If $(\matroid,\capacity)$
    cannot fractionally pack $k$ bases, then $(\matroid, \rcap)$
    cannot pack $\epsmore k / \tau$ bases.
  \item \label{randomly-round-covering-<}\label{RRC<} If $(\matroid, \capacity)$
    can be covered by $k$ bases, then $(\matroid, \rcap)$ can be
    covered by $\epsmore k / \tau$ bases.
  \item \label{randomly-round-covering->}\label{RRC>} If $(\matroid, \capacity)$
    cannot be covered by $k$ bases, then $(\matroid, \rcap)$ cannot be
    covered by $\epsless k / \tau$ bases.
  \end{mathresults}
\end{theorem}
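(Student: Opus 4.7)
The plan is to reduce each of the five statements to a Chernoff concentration bound on $\rcap{S}$ or $\rcap{\bar S}$, combined with a dual-LP argument that handles the quantifier over subsets $S$. For any fixed $S \subseteq \groundset$, the random variable $\rcap{S}$ is a sum of independent bounded terms with $\evof{\rcap{S}} = \capacity{S}/\tau$, so a Chernoff--Hoeffding bound gives
\[
\Pr\bigl[\,\absvof{\rcap{S} - \capacity{S}/\tau} > \eps \capacity{S}/\tau\,\bigr] \leq 2 \exp\bigl(-\Omega(\eps^{2} \capacity{S}/\tau)\bigr),
\]
which is $n^{-\Omega(1)}$ once $\capacity{S}/\tau = \Omega(\ln n/\eps^{2})$. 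The hypothesis $\tau \leq c k \eps^{2}/\ln n$ makes this precondition automatic whenever $\capacity{S} = \Omega(k)$, which will hold for every dual certificate we actually need to control.

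Three of the five statements --- RRP<, RRC>, and the $\tilde{M} \leq (1+\eps) M/\tau$ side of RRU --- are ``easy'' one-sided directions: the hypothesis hands us a fixed deterministic certificate $S^{*}$ depending only on $\capacity$, and it suffices to show that the same inequality (with slack $(1\pm\eps)$) persists for $\rcap$. For RRP<, the hypothesis gives an $S^{*}$ with $\capacity{\bar{S^{*}}} < k(\therank - \rank{S^{*}})$, and a single Chernoff bound on $\rcap{\bar{S^{*}}}$ shows that $S^{*}$ still certifies infeasibility of packing $(1+\eps) k/\tau$ bases in $(\matroid, \rcap)$. The other two easy directions follow by applying Chernoff to one further fixed random variable $\rcap{S^{*}}$ or $\rcap{\bar{S^{*}}}$ picked off the hypothesis.

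The remaining statements --- RRP>, RRC<, and the $\tilde{M} \geq (1-\eps) M/\tau$ side of RRU --- demand that a dual condition hold uniformly over all $S$, and a single Chernoff no longer suffices. For RRP> the required uniform bound is already provided by \cite{Karger1998}: the number of $\alpha$-near-tight matroid packing certificates grows polynomially in $n^{\bigO{\alpha}}$, and a standard union bound over a geometric sequence of tightness scales $\alpha$, combined with the per-$S$ Chernoff bound above, closes the case. For RRC< and the lower side of RRU, Karger's enumeration does not apply and a new analysis is required. My plan is to bucket sets by their rank $r$ and, within each bucket, reduce the exponential family of rank-$r$ flats to a polynomial-size ``representative'' family --- for instance the level sets of a preliminary push-relabel run on $(\matroid, \capacity/\tau)$, which are automatically near-extremal, or a chain of flats built by monotone submodular reasoning --- and then union-bound the Chernoff events over $r$, over a geometric sequence of capacity thresholds, and finally over all five claims.

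The main obstacle is exactly this last step: bounding the effective complexity of the family of ``relevant'' dual certificates for covering and union. In general matroids the count of flats of a given rank can be exponential in $n$, so a blind union bound fails, and the Karger-style counting of near-minimum cuts in graphs does not port directly. The bulk of the technical effort would go into establishing a polynomial bound on the number of near-tight covering or union certificates --- plausibly by exploiting the monotonicity of $\capacity{S}/\rank{S}$ (or of $\capacity{\bar S}/(\therank-\rank{S})$) along chains of flats, or by reframing the failure event as a tail bound on a random dual LP solution rather than as a worst-case quantifier over combinatorially many sets.
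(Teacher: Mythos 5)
Your treatment of the one-sided directions (\cref{RRP<}, \cref{RRC>}, and the upper bound on $\tilde{M}$) matches the paper: each follows from a single Chernoff bound applied to the fixed certificate handed to you by the hypothesis. But for the uniform directions (\cref{RRP>}, \cref{RRC<}, and the lower bound on $\tilde{M}$) your proposal stops at naming the obstacle rather than resolving it, and the routes you sketch --- Karger-style enumeration of near-tight certificates, a ``representative family'' extracted from a preliminary push-relabel run, chains of flats --- are not developed far enough to judge and are not what the paper does. This is a genuine gap, and it sits exactly where you say the bulk of the work would be.

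The idea you are missing is a rank-stratified union bound over flats with a \emph{rank-dependent additive slack}. The paper proves that, with high probability, simultaneously for every closed set $S$,
\[
  \absvof{\capacity{S} - \tau \rcap{S}} \leq \tfrac{\eps}{2}\parof{\capacity{S} + k \rank{S}},
\]
and likewise for the complement of $S$. Two observations make this work. First, your worry that ``the count of flats of a given rank can be exponential in $n$'' is not the right obstacle: a flat of rank $q$ is the span of any one of its bases, so there are at most $n^{q}$ flats of rank $q$, and it suffices to control flats because replacing $S$ by $\spn{S}$ only improves every dual objective in play. Second --- and this is the step your proposal lacks --- the deviation is measured not purely multiplicatively in $\capacity{S}$ but with the extra additive term $\tfrac{\eps}{2}k\rank{S}$. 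Since $\tau \leq c k \eps^{2}/\ln n$, Chernoff then gives a per-flat failure probability of $n^{-\Omega(q/c)}$ for a flat of rank $q$, which beats the $n^{q}$ count at every rank; summing the resulting bounds over $q$ closes the union bound. The additive $k\rank{S}$ slack is then absorbed harmlessly in each of the five dual characterizations (Nash--Williams for union, Edmonds for packing and covering), because each dual objective already carries a term of order $k\rank{S}$ or $k\parof{\therank - \rank{S}}$. With this single uniform concentration statement in hand, all five claims --- including the ``hard'' ones --- follow by short algebraic manipulations, with no counting of near-tight certificates and no reliance on the random contraction machinery of \cite{Karger1998}.
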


\begin{proof}
  Recall that a set $S \subseteq \groundset$ is \emph{closed} if
  $S = \spn{S}$. We claim that with high probability, we have
  \begin{align*}
    \absvof{\capacity{S} - \tau \rcap{S}} \leq \frac{\eps}{2}
    \parof{\capacity{S} + k \rank{S}}
    \text{ and }
    \absvof{\capacity{\comp{S}} - \tau \rcap{\comp{S}}} \leq \frac{\eps}{2}
    \parof{\capacity{\comp{S}} + k \rank{S}}
    \labelthisequation{ranked-concentration}
  \end{align*}
  for all closed sets $S$ (simultaneously).

  For ease of notation, call a closed set $S$ \emph{bad} if $\rcap{S}$
  or $\rcap{\comp{S}}$ violates the inequalities above. We want to
  prove that there are no bad closed sets with high probability.

  First, fix a closed set $S$ with $\rank{S} = \varrank$. Consider the
  first (leftmost) of the inequalities we seek. By standard Chernoff
  inequalities, we have
  \begin{align*}
    \probof{\absvof{\rcap{S} - \capacity{S} / \tau} > \frac{\eps}{2}
    \parof{\capacity{S} / \tau + k q / \tau}}
    \leq
    2 e^{- \eps^2 k q / 4 \tau}
    =                           %
    2 n^{- q / 4 c}.
  \end{align*}
  Likewise the second inequality (for $\rcap{\comp{S}}$) has probability of
  error is at most $2 n^{-q / 4 c}$. Taking the union bound,
  \begin{align*}
    \probof{S \text{ is bad}} \leq 4 n^{- q / 4 c} \leq n^{- \constA q}
  \end{align*}
  for a sufficiently large constant $\constA$.\footnote{By which we
    mean that $\constA$ can be made an arbitrarily large constant by
    making $c$ sufficiently small.}

  Now, fix $\varrank$. Each closed set $S$ is defined by any base of
  $S$, which consists of $\varrank$ elements. Therefore, there are at
  most $n^{\varrank}$ closed sets of rank $\varrank$. Taking the union
  bound over all sets $S$ of rank $\varrank$,
  \begin{align*}
    \probtext{any closed set of rank $\varrank$ is bad}
    \leq                        %
    2 n^{\varrank} n^{-\constA q}
    \leq
    n^{- \constB q}
  \end{align*}
  for a sufficiently large constant $\constB$.

  Finally, taking the union bound over all ranks
  $\varrank \in [\therank]$, we have
  \begin{align*}
    \probtext{any closed set is bad}
    \leq                        %
    \sum_{\varrank=1}^{\therank}
    n^{-\constB q}
    \leq                        %
    n^{-\constC q}
  \end{align*}
  for a sufficiently large constant $\constC$. This proves the claim.

  For the rest of the proof we assume the high probability event where
  the inequalities in \refequation{ranked-concentration} hold for all
  closed $S$. We will use these inequalities to prove each of
  \cref{RRU,RRP>,RRP<,RRC>,RRC<}.

  Consider first \cref{RRU}. Recall that $M$ equals the minimum of
  \begin{math}
    \capacity{\comp{S}} + k \rank{S}
  \end{math}
  over all sets $S$, and similarly for $M'$ except \wrt $\rcap$ and
  $k/\tau$.  Since replacing $S$ with its closure $\spn{S}$ can only
  decrease this quantity, it suffices to consider only the closed
  sets.

  For all closed sets $S$
  \begin{align*}
    \rcap{\comp{S}} + \frac{k}{\tau} \rank{S} %
    &\geq                        %
    \frac{1-\eps}{\tau} \capacity{\comp{S}} -
    \frac{\eps k}{\tau} \rank{S} + \frac{k}{\tau} \rank{S}
    \\
    &=
      \frac{1-\eps}{\tau} \parof{\capacity{\comp{S}} + k \rank{S}}
      \geq
      \frac{\epsless M}{\tau}
  \end{align*}
  for all closed sets $S$. Thus $M' \geq \epsless M / \tau$ with high
  probability.

  Next we upper bound $M'$. There exists a set $S$ be a closed set
  such that $M = \capacity{\comp{S} + k \rank{S}}$. We have
  \begin{align*}
    M' &\leq \rcap{\comp{S}} + \frac{k}{\tau} \rank{S}
         \leq                       %
         \frac{1 + \eps}{\tau} \capacity{\comp{S}} +
         \frac{\eps k}{\tau} \rank{S} + \frac{k}{\tau} \rank{S}
    \\
       &=
         \frac{1+\eps}{\tau} \parof{\capacity{\comp{S}} + k \rank{S}}
         \leq
         \frac{\epsmore M}{\tau},
  \end{align*}
  as desired. This proves \cref{RRU}.

  Consider now \cref{RRP<,RRP>}. By the matroid base packing theorem, the packing number of
  $(\matroid,\capacity)$ is at least $k$ iff for all sets $S \subseteq
  \groundset$,
  \begin{align*}
    \capacity{\comp{S}} \geq k \parof{\therank - \rank{S}}.
  \end{align*}
  Thus the packing number is exactly $k$ if the inequality holds for
  all $S$, and is tight for some set $S$ with $\rank{S} < \therank$.

  Now consider $(\matroid,\rcap)$. By \cref{general-sparsification},
  with high probability, we have
  \begin{align*}
    \rcap{\comp{S}}
    &\geq \frac{1-\eps/2}{\tau} \capacity{S} - \frac{\eps
      k}{2\tau} \rank{S}
      \geq                        %
      \frac{(1-\eps/2) k}{\tau}
      \parof{\therank - \rank{S}} - \frac{\eps k}{2\tau} \rank{S}
    \\
    &\geq                        %
      \frac{\epsless k}{\tau} \parof{\therank - \rank{S}},
  \end{align*}
  so the packing number is at least $(1-\eps) k / \tau$. This proves
  \cref{RRP>}.

  For the opposite direction in \cref{RRP<}, we know there exists a
  closed set $S$ with $\rank{S} < \therank$ and
  $\capacity{\comp{S}} = k \parof{\therank - \rank{S}}$. Note that
  $\capacity{\comp{S}} \geq k$. By the Chernoff inequality we have
  \begin{align*}
    \probof{\rcap{\comp{S}} \geq \epsmore \capacity{\comp{S}}/\tau}
    \leq                        %
    e^{-\eps^2 \capacity{\comp{S}}/3 \tau} %
    \leq                           %
    e^{-\eps^2 k / 3 \tau}
    =
    n^{-1 / 3 c}.
  \end{align*}
  (We point out that $1/3c$ represents an arbitrarily large constant.)
  Thus with high probability we have
  \begin{align*}
    \rcap{\comp{S}} \leq \epsmore \capacity{\comp{S}} / \tau
    =
    \frac{\epsmore k}{\tau} \parof{\therank - \rank{S}}
  \end{align*}
  and so the packing number of $(\matroid, \rcap)$ is at most
  $\epsmore k / \tau$. This proves \cref{RRP<}.

  Lastly we prove \cref{RRC<,RRC>}. Recall that $(\matroid,\capacity)$
  can be fractionally covered by $k$ bases iff for all sets
  $S \subseteq \groundset$,
  \begin{math}
    \capacity{S} \leq k \rank{S}.
  \end{math}
  Since the capacities are nonnegative, it suffices to verify the
  inequality $\capacity{S} \leq k \rank{S}$ for all closed sets.

  Suppose $(\matroid,\capacity)$ can be fractionally covered by $k$
  bases.  We have
  \begin{align*}
    \rcap{S} \leq \parof{1 + \frac{\eps}{2}} \frac{\capacity{S}}{\tau}
    + \frac{\eps k \rank{S}}{2\tau} \leq %
    \frac{\epsmore k}{\tau} \rank{S}
  \end{align*}
  for all closed sets $S$. Thus $(\matroid, \rcap)$ can be
  fractionally covered by $\epsmore k / \tau$ bases. This proves \cref{RRC<}.

  For \cref{RRC>}, suppose $(\matroid, \capacity)$ cannot be
  fractionally covered by less than $k$ bases. Then there is a closed
  set $S$ such that
  \begin{math}
    \capacity{S} \geq k \rank{S}.
  \end{math}
  We have
  \begin{align*}
    \rcap{S} \geq \frac{1 - \eps/2}{\tau} \capacity{S} - \frac{\eps
    k}{2 \tau} \rank{S}
    \geq                        %
    \frac{\epsless k}{\tau} \capacity{S}.
  \end{align*}
  Thus $(\matroid, \rcap)$ cannot be fractionally covered by
  $\epsless k / \tau$ bases. This establishes \cref{RRC>} and
  completes the proof.
\end{proof}

\begin{remark}
  As mentioned above, for approximating the base packing problem
  specifically, \cite{Karger1998} already provides a lemma that allows
  us to reduce real-valued capacities to small integer capacities. The
  construction in \cite{Karger1998} is slightly different; in
  \cite{Karger1998}, each random capacity $\rcap{e}$ is sampled
  independently from a Poisson distribution of mean
  $p \cdot \capacity{e}$, for a parameter $p > 0$ with
  $p \geq c \eps^2 / k \ln{n}$ for a sufficiently large constant
  $c$. Overall the net effect is the same as the rounding-based
  construction for $\rcap$ that we analyze here. Despite the overlap
  with \cite{Karger1998} we include the proofs of \cref{RRP<,RRP>} as
  we find them interesting for the following reasons. First, the proof
  techniques here are unified with the proofs for the other matroid
  problems in \cref{RRU,RRC<,RRC>}. (Conversely, the proof techniques
  in \cite{Karger1998} did not seem as useful for these other
  problems.)  Second, the proofs here are different and arguably
  simpler than in \cite{Karger1998} as it does not depend on the
  random contraction algorithm.
\end{remark}

\subsection{Maximum capacity packings of independent sets and forests}

\begin{theorem}
  For real-valued capacities, a $\epsless$-approximation to the
  value of the maximum (fractional) $k$-fold matroid union, along
  with an $\epsmore$-approximately minimum dual solution, can be
  computed with high probability in
  \begin{math}
    \RandomizedApxMatroidUnionTime
  \end{math}
  randomized time. (Note that $\opt / k \leq \therank$.)
\end{theorem}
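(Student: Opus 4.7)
The plan is to apply the randomized rounding reduction from \cref{general-sparsification} to convert the real-capacitated instance into a small integer-capacitated one, and then invoke the integer-capacity approximation algorithm of \cref{apx-matroid-union}. Concretely, choose $\tau = c k \eps^2 / \ln n$ for the constant $c$ of \cref{sparsification-setup} so that $k' \defeq k/\tau = \Theta(\log n / \eps^2)$ is a positive integer, and form the rounded instance $(\matroid, \rcap)$. This can be sampled in $\bigO{n}$ time. Note that the high-probability event of \cref{RRU} gives us $\absvof{\tilde M - M/\tau} \leq \eps M/\tau$, where $M$ is the target optimum over fractional packings of $k$ independent sets in $(\matroid,\capacity)$ and $\tilde M$ is the integer optimum over packings of $k'$ independent sets in $(\matroid, \rcap)$.

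Next I would run the integer-capacity $\epsless$-approximation algorithm (\cref{apx-matroid-union}) on $(\matroid, \rcap)$ with parameter $k'$, yielding integer bases with total covered capacity at least $\epsless \tilde M$ and, crucially, an integral dual set $S \subseteq \groundset$ certifying a value $k' \rank{S} + \rcap{\comp{S}} \leq \epsmore \tilde M$. Output $\tau \cdot \tilde M$ (read off from the primal objective of the rounded run) as the approximate value, and return $S$ as the dual certificate for the original instance. For the approximation guarantee, composing the two $\epspm$-losses from the rounding and from \cref{apx-matroid-union} gives $\tau\tilde M \in (1 \pm O(\eps)) M$; this value can be reported at $(1-\eps)$-accuracy after readjusting $\eps$ by a constant. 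For the dual side, the second inequality of \refequation{ranked-concentration} gives $\capacity{\comp{S}} \leq (1+O(\eps)) \tau \rcap{\comp{S}} + O(\eps) k \rank{S}$, so $k \rank{S} + \capacity{\comp{S}} \leq (1+O(\eps))\bigl(k\rank{S} + \tau \rcap{\comp{S}}\bigr) \leq (1+O(\eps))^2 M$, yielding the desired $\epsmore$-approximate dual.

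For the running time, the integer approximation algorithm runs in $\bigO{n + \tilde M \log(k' \therank) / \eps}$ independence queries. With high probability $\tilde M = \Theta(M/\tau) = \Theta((\opt/k)\log n/\eps^2)$ and $k' = \Theta(\log n /\eps^2)$, so the bound becomes
\begin{align*}
\bigO{n + \tfrac{\opt}{k} \cdot \tfrac{\log n}{\eps^2} \cdot \log\!\tfrac{\therank \log n}{\eps^2} \cdot \tfrac{1}{\eps}} = \RandomizedApxMatroidUnionTime,
\end{align*}
as claimed. The factor $n$ absorbs the time to sample $\rcap$ and read the input.

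The main obstacle I expect is pinning down the dual translation: the push-relabel algorithm of \cref{apx-matroid-union} returns an approximately optimal dual set for $(\matroid,\rcap)$, and one must argue that the same set, reinterpreted under $\capacity$, remains approximately optimal for the original instance. This requires the per-set concentration inequality \refequation{ranked-concentration} (not just the optimum-value bound in \cref{RRU}), which is precisely what the proof of \cref{general-sparsification} furnishes, so applying it to the particular $S$ output by the integer algorithm closes the loop. A secondary but routine point is verifying that the $\bigO{\cdot}$ tolerance for $\eps$ in each step composes correctly and can be cleaned up by a constant rescaling of $\eps$, and that degenerate edge cases ($\opt/k$ small, or $\tau > 1$) are handled by direct calls to the integer-capacity algorithm.
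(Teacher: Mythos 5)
Your proposal is correct and follows essentially the same route as the paper: round the real capacities to small integers via \cref{general-sparsification} with $k/\tau = \Theta(\log n/\eps^2)$, run the integer-capacity approximation algorithm of \cref{apx-matroid-union} on the rounded instance, and read off the value and dual set. The paper's proof is only a two-line sketch, so your additional care with the dual translation (invoking the per-closed-set inequality \refequation{ranked-concentration} for the specific output set $S$, which holds simultaneously for all closed sets under the high-probability event) is a legitimate and welcome filling-in of the same argument rather than a different one.
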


\begin{proof}
  We apply \cref{general-sparsification} to reduce the problem
  to integer capacities and optimum value
  $\parof{\opt / k} \log{n} / \eps^2$ with high probability. We then
  apply the $\epsless$-approximation algorithm for integer capacities
  from \cref{apx-matroid-union}.  The running time follows
  from \cref{apx-matroid-union}.
\end{proof}

The same reduction but for graphic matroids gives the following.

\begin{theorem}
  In an undirected graph with real-valued edge capacities, an
  $\epspm$-approximation to the maximum capacity $\opt$ that can be
  covered by a fractional packing of $k$ forests can be computed in
  \begin{math}
    \RandomizedApxForestUnionTime
  \end{math}
  randomized time. (Note that $\opt / k \leq n-1$.)
\end{theorem}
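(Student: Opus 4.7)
The plan is to mirror the proof of the preceding theorem (for general matroids with real capacities), replacing the approximate matroid-union algorithm with its graphic counterpart \cref{apx-max-forests}. Concretely, I would first apply \cref{general-sparsification}, in particular \cref{RRU}, to the graphic matroid of $G$ with edge capacities $\capacity$ and parameter $k$. Choosing the rounding threshold $\tau = \Theta(k \eps^2 / \log n)$ so that $k/\tau$ is a positive integer of magnitude $O(\log n / \eps^2)$, the sparsification theorem guarantees that, with high probability, the maximum fractional $k/\tau$-fold packing value $\tilde M$ in $(G, \rcap)$ satisfies $|\tilde M - \opt / \tau| \leq \eps\, \opt / \tau$.

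Next I would run the integer-capacitated approximation algorithm of \cref{apx-max-forests} on the sparsified instance $(G, \rcap)$ with parameter $k/\tau$ and accuracy $\eps/2$, producing a value that $\epspm$-approximates $\tilde M$, together with an $\epspm$-approximately minimum dual solution. Scaling the reported value back up by $\tau$ and composing the two $\epspm$-approximations (absorbing the constants by reducing $\eps$ by a constant factor at the outset) yields an $\epspm$-approximation to $\opt$, and the dual solution lifts directly because the edge-cut/rank-style certificate for the sparsified instance is also a certificate for the original, up to the same $\epspm$ distortion. Note we do not recover a primal packing in the original instance, which is consistent with the theorem statement asking only for an approximation of the value and of the dual.

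For the running-time bound, I would substitute the reduced parameters $k' = k/\tau = O(\log n / \eps^2)$ and $\opt' = \tilde M = O((\opt/k) \log n / \eps^2)$ into the bound $\ApxForestUnionTime$ from \cref{apx-max-forests}. Since $\log k' = O(\log n)$ (absorbing $\log(1/\eps)$) and $\opt / k \leq n - 1$, the resulting expression collapses to $\RandomizedApxForestUnionTime$, matching the claim; the random-rounding preprocessing itself costs only $O(m)$ and is dominated. The main obstacle is purely bookkeeping: carefully tracking how the extra $\log n / \eps^2$ factor introduced by sparsification compounds with the $\log n / \eps$ factor from the approximate graphic push-relabel algorithm, and confirming that the $\log k'$ term collapses as described so that the final expression lines up with $\RandomizedApxForestUnionTime$ without a spurious $\log \log$ factor.
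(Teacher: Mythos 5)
Your proposal is correct and matches the paper's approach exactly: the paper proves this theorem by the same reduction as the preceding matroid theorem --- randomized rounding via \cref{general-sparsification} to reduce to integer capacities with parameter $k/\tau = O(\log n/\eps^2)$, followed by the integer-capacitated graphic approximation algorithm of \cref{apx-max-forests} --- and then substitutes the reduced parameters into that running time. Your bookkeeping of the $\log k'$ term and the use of $\opt/k \leq n-1$ is consistent with what the paper implicitly does.
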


\subsection{Matroid base packing, matroid membership, and network
  strength}

\begin{theorem}
  \label{randomized-matroid-strength}
  For real-valued capacities, an $\epspm$-approximation to deciding if
  the matroid strength is (greater or less than) $k$ can be computed
  with high probability in randomized time bounded by
  \begin{math}
    \RandomizedApxBasePackingTime
  \end{math}
  independence queries.
\end{theorem}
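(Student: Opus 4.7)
The plan is to reduce the real-capacity problem to a small-integer-capacity base packing problem via the randomized rounding of \cref{general-sparsification}, and then invoke the integer-capacity approximation algorithm of \cref{apx-base-packing}. Specifically, I would choose the rounding scale $\tau$ so that $k/\tau = \Theta\!\parof{\log{n}/\eps^2}$; more precisely, fix $\tau = c k \eps^2 / \ln{n}$ for the small constant $c$ required by \cref{general-sparsification}, and (decreasing $c$ if necessary) arrange that $k/\tau$ is a positive integer. Let $\rcap$ be the randomly rounded capacity vector on $(\matroid, \capacity)$.

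By \cref{RRP>,RRP<} of \cref{general-sparsification}, with high probability the following dichotomy holds simultaneously: if $(\matroid, \capacity)$ fractionally packs $k$ bases, then $(\matroid, \rcap)$ packs at least $\epsless (k/\tau)$ bases; and if $(\matroid,\capacity)$ does not fractionally pack $k$ bases, then $(\matroid, \rcap)$ does not pack $\epsmore (k/\tau)$ bases. So it suffices, on the integer-capacitated instance $(\matroid, \rcap)$ with packing parameter $k' = k/\tau = \Theta\!\parof{\log{n}/\eps^2}$, to distinguish between packing $\epsless k'$ bases and having no packing of $\epsmore k'$ bases (replacing $\eps$ with a constant factor smaller $\eps$ throughout to absorb the composition of approximation factors). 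This is exactly what \cref{apx-base-packing} provides.

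Plugging $k' = \bigO{\log{n}/\eps^2}$ into the bound $\ApxBasePackingTime$ of \cref{apx-base-packing} gives an independence-query complexity of
\begin{align*}
\bigO{n + \frac{\therank \log{n}}{\eps^2} \cdot \frac{\log{n} \log\!\parof{\therank \log{n} / \eps^2}}{\eps}}
= \bigO{n + \frac{\therank \log^2{n} \parof{\log\log{n} + \log{1/\eps}}}{\eps^3}},
\end{align*}
where I used $\log\therank \leq \log n$ to absorb the $\log \therank$ term. This matches $\RandomizedApxBasePackingTime$. The $n$ additive term covers sampling each $\rcap{e}$ from $\capacity{e}$ (a single coin flip per element).

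The main subtlety is organizing the approximation-factor bookkeeping: the randomized rounding introduces $\epspm$ distortion in both directions, and the integer-capacity approximation algorithm introduces another $\epspm$ factor, so one must run \cref{apx-base-packing} at accuracy $\eps/c'$ for a suitable constant $c'$ to get an overall $\epspm$ decision. No primal packing is claimed in the output since the primal produced on $(\matroid,\rcap)$ does not translate back to $(\matroid,\capacity)$; only the high-probability decision (and the integral dual certificate produced by \cref{apx-base-packing}) is reported. This is consistent with the type-(3) guarantee advertised in \cref{results}.
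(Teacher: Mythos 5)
Your proposal follows exactly the paper's route: reduce to integer capacities with $k/\tau = \Theta\parof{\log{n}/\eps^2}$ via \cref{RRP>,RRP<} of \cref{general-sparsification}, then run the integer-capacity approximate packing algorithm of \cref{apx-base-packing} with a constant-factor-smaller error parameter, and read off the running time. The only slightly loose step is the claim that $\log\therank \leq \log{n}$ "absorbs" the $\log\parof{k'\therank}$ factor into $\log\log{n} + \log{1/\eps}$ (taken literally it would contribute an extra $\log{n}$), but this matches the accounting the paper itself performs, so the proof is essentially identical to the paper's.
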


\begin{proof}
  By either \cref{general-sparsification} or the techniques in
  \cite{Karger1998}, we can reduce the problem to packing $k / \tau$
  bases into integer capacities, with high probability. We then apply
  the $\epspm$-approximation algorithm for packing
  $k / \tau = \bigO{\ln{n} / \eps^2}$ bases with integer capacities
  given by \cref{apx-base-packing}. The running time follows from
  \cref{apx-base-packing}.
\end{proof}

The approximate algorithm for deciding matroid strength can be
extended to an approximation algorithm for approximating the matroid
strength via binary search. Here we present a modified algorithm that
carefully modifies the error parameters to reduce the standard
logarithmic overhead.

\begin{theorem}
  \label{search-matroid-strength}
  For real-valued capacities between $1$ and $U$, an
  $\epspm$-approximation to the matroid strength can be computed in
  time bounded by
  \begin{math}
    \RandomizedApxMatroidStrengthTime
  \end{math}
  independence queries.
\end{theorem}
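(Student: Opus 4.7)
The plan is to binary-search for the matroid strength $k^*$ by combining many coarse rounds with one fine round. Since capacities lie in $[1, U]$, a quick use of the dual characterization shows $k^* \in [1, nU/r]$ (the lower bound comes from the fact that $|\bar{S}| \geq r - \rank(S)$ by matroid augmentation together with $\capacity \geq 1$, and the upper bound from taking $S = \emptyset$), so the candidate scales occupy an interval of $O(\log(nU/r))$ powers of two. A naive binary search would multiply the running time of \cref{randomized-matroid-strength} by this factor, paying the full $1/\eps^3$ overhead at every step, which is wasteful.

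First I would perform $O(\log(nU/r))$ binary-search steps, each invoking an approximate decision procedure at constant accuracy $\eps_0 = 1/2$. For each current guess $k_i$, randomly round the capacities per \cref{general-sparsification} with $\tau = \Theta(k_i \eps_0^2 / \ln n)$, so the rounded instance has integer capacities and (with high probability) integer strength $k_i/\tau = \bigO{\log n}$. Then invoke the deterministic integer-capacity approximate base-packing algorithm of \cref{apx-base-packing} on the rounded instance. With $k = \bigO{\log n}$ and $\eps_0$ constant, a careful accounting of the per-call cost yields $\bigO{n + r \log n \log r}$, so the entire coarse phase costs $\bigO{(n + r \log n \log r) \log(nU/r)}$, matching the first term in the claimed bound. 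Correctness follows from the usual binary-search analysis: each constant-accuracy answer narrows the active interval by a constant factor, so $O(\log(nU/r))$ rounds suffice to localize $k^*$ within a constant multiplicative factor.

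After the coarse phase, $k^*$ is pinned to some interval $[\tilde{k}, C\tilde{k}]$. I then apply one fine round, invoking the randomized decision procedure of \cref{randomized-matroid-strength} at the target accuracy $\eps$ with a value of $\tau$ tuned to $\tilde{k}$. Because $\tilde{k}$ is already a constant-factor approximation of $k^*$, the sampled instance is neither too sparse nor too dense, and the cost of this single refined call reduces to $\bigO{r \log n \log(r/\eps) / \eps^3}$, producing the second additive term.

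The main obstacle is verifying that the per-iteration cost in the coarse phase really is $\bigO{n + r \log n \log r}$ rather than the slightly larger $\bigO{n + r \log^2 n \log(r \log n)}$ one gets from a routine plug-in into \cref{apx-base-packing}. This requires noting that with $k = \Theta(\log n)$ and $\eps_0$ constant, the inner logarithmic factor in \cref{apx-base-packing} can be tightened from $\log(kr)$ to $\log r$, and that some initialization work (notably the $\bigO{n}$ term needed to set up the ground set) can be shared across consecutive binary-search calls so that only the search-specific work is incurred per step. The rest of the argument is a routine union bound over the high-probability events of \cref{general-sparsification} for each of the $O(\log(nU/r))$ coarse decisions and the single fine decision.
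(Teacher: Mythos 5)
Your coarse phase matches the paper's: $O(\log(nU/r))$ constant-accuracy decision calls to localize the strength to a constant-factor interval, giving the first additive term. The gap is in the fine phase. A single call to the decision procedure of \cref{randomized-matroid-strength} at accuracy $\eps$ answers only a one-threshold question: for a fixed guess $k$ it reports either that the strength is at least $(1-\eps)k$ or that it is less than $(1+\eps)k$. Knowing $k^* \in [\tilde{k}, C\tilde{k}]$ for a constant $C$ and making one such call at $k=\tilde{k}$ leaves $k^*$ undetermined anywhere in an interval of constant multiplicative width, which is not a $(1\pm\eps)$-approximation of the value. To finish by brute force you would need to binary search a $(1+\eps)$-spaced grid of the interval $[\tilde{k}, C\tilde{k}]$, i.e.\ $\Theta(\log(1/\eps))$ further calls each at accuracy $\Theta(\eps)$, and the dominant term would become $\bigO{\therank \ln{n}\ln{\therank/\eps}\log(1/\eps)/\eps^3}$, exceeding the claimed bound by a $\log(1/\eps)$ factor.

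The paper avoids this with a cascade of accuracies $\eps_i = 2^{-i}$: given a $(1+\eps_{i-1})$-approximation of the value, a binary search of \emph{constant} depth with decision calls at accuracy $\bigOmega{\eps_i}$ upgrades it to a $(1+\eps_i)$-approximation, and iterating until $\eps_i \leq \eps$ incurs cost $\sum_i \bigO{\therank\ln{n}\ln{\therank/\eps_i}/\eps_i^3}$, a geometric sum dominated by its last term. This is the one idea your write-up is missing; the rest (the range $[1, nU/r]$, the rounding via \cref{general-sparsification}, the union bound over the high-probability events) is in line with the paper. Your side remarks about tightening $\log(kr)$ to $\log{r}$ and amortizing the $\bigO{n}$ setup across calls are not where the difficulty lies and are not what the paper relies on.
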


\begin{proof}
  At the outset, we know that the matroid strength is between $1$ and
  $n U / \therank$.

  Now, for $i \in \nnintegers$, let $\eps_i = 2^{-i}$. A
  $\parof{1 + \eps_0}$-approximation to the strength can be obtained
  with high probability by combining a binary search of depth
  $\bigO{\log{n U / \therank}}$ with the approximate decision
  algorithm in \cref{randomized-matroid-strength} with error
  parameter a constant factor small than $\eps_0$. For $i \geq 1$,
  given a $\parof{1 + \eps_{i-1}}$-approximation for the strength, we
  can compute a $\parof{1 + \eps_i}$-approximation via a binary search
  of constant depth, with each probe making a call to
  \cref{randomized-matroid-strength} with error parameter
  $\bigOmega{\eps_i}$. Eventually we obtain a
  $\parof{1 + \eps_i}$-approximation where $\eps_i = 2^{-i}$ is at
  most the input error parameter $\eps$, as desired.

  We now bound the running time. The first set of $\log{n}$ calls to
  \cref{randomized-matroid-strength} with constant error parameter takes
  \begin{align*}
    \bigO{n \log{n U / \therank} + r \log{n} \log{\therank} \log{n U /
    \therank}}\text{-query}
    \labelthisequation{search-strength-1}
  \end{align*}
  time.  Thereafter we have a constant number of calls to
  \cref{randomized-matroid-strength} for each $\eps_i$ between $1$
  and $\eps/2$. For the leading $\bigO{n}$ term, all these calls add up
  to $\bigO{n \log{1/\eps}}$ work which is dominated by
  $\bigO{n \log{n}}$ above. For the second term of the form
  $\therank \ln{n} \ln{\therank / \eps} / \eps^3$, the sum over all
  $\eps_i$'s is dominated by the smallest $\eps_i$ which is
  $\bigOmega{\eps}$, given
  \begin{align*}
    \bigO{\therank \ln{n} \ln{\therank / \eps} / \eps^3}
    \labelthisequation{search-strength-2}
  \end{align*}
  work in total. Summing together
  \cref{equation:search-strength-1,equation:search-strength-2} gives
  the claimed running time.
\end{proof}

\subsection{Matroid base covering, matroid membership, and arboricity}

\labelsection{apx-matroid-membership}

Fractional base covering can be posed as a decision problem where,
given a capacitated matroid and an additional parameter $k$, the goal
is to decide if the capacities can be fractionally covered by $k$
bases.  The important special case of $k = 1$ is equivalent to testing
if a fractional point $x \in \nnreals^{\groundset}$ lies in the
independent set polytope. This problem is called matroid membership.

For deciding fractional base covering, we may assume that $k = 1$
without loss of generality. For a fixed error parameter
$\eps \in (0,1)$, a \emph{$\epspm$-approximation} to the matroid
membership problem is defined as a correct output that either (a) the
matroid can be covered by $1 + \eps$ bases, or (b) the matroid can be
covered by $1 - \eps$ bases. Note that either option is allowed when
the fractional covering number is between $1-\eps$ and $1 + \eps$. We
obtain the following running time for approximating matroid
membership.

\begin{theorem}
  \labeltheorem{apx-matroid-membership}\label{apx-matroid-membership}
  An $\epspm$-approximation to the matroid membership problem can be
  computed with high probability in
  \begin{math}
    \bigO{n + \therank \ln{n} \ln{\therank / \eps} / \eps^3}
  \end{math}
  randomized time.
\end{theorem}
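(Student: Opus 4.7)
The plan is to reduce the real-capacity instance to a small-integer-capacity one via randomized rounding and then invoke the approximate $k$-fold matroid union algorithm on the reduced instance, exploiting the equivalence between matroid membership (i.e.\ base coverability with $k=1$) and the maximum $k'$-fold union hitting $\rcap{\groundset}$. Concretely, first I would apply \cref{general-sparsification} with $k=1$ and $\tau = \Theta(\eps^2 / \ln n)$, producing integer capacities $\rcap$ such that, with high probability, the dichotomy of \cref{RRC<,RRC>} holds: if $\capacity \in \ip$ then $\rcap$ is coverable by $(1+O(\eps))/\tau$ bases, while if $\capacity \notin \ip$ then $\rcap$ is not coverable by $(1-O(\eps))/\tau$ bases.

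Second, setting $k' = 1/\tau = \bigO{\ln n / \eps^2}$, I would run the $\epsless$-approximate $k'$-fold matroid union algorithm of \cref{apx-matroid-union} on $(\matroid, \rcap)$ with inner error parameter taken to be a small constant fraction of $\eps$. This returns a primal of value $M$ and a dual set $S$ of value $D = k' \rank{S} + \rcap{\bar{S}}$ satisfying $\opt \leq D \leq \epsmore \opt$. The key observation is that $\rcap$ is coverable by $k'$ bases iff $\opt = \rcap{\groundset}$, so comparing $D$ against $\rcap{\groundset}$ (with an appropriate $\bigO{\eps}$ slack) decides matroid membership to within the required $\epspm$ tolerance, and the set $S$ supplies a dual certificate when the verdict is ``no''.

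For the running time, we may assume $\capacity{\groundset} \leq \therank$ without loss of generality (otherwise $\capacity \notin \ip$ is detected in $\bigO{n}$ time by reading the input), so with high probability $\rcap{\groundset} \leq (1+\eps)\therank/\tau = \bigO{\therank \ln n / \eps^2}$, which upper-bounds the $\opt$ parameter driving the push-relabel work. Plugging this bound together with $k = k' = \bigO{\ln n /\eps^2}$ into the $\bigO{n + \opt \log(kr)/\eps}$ running time of \cref{apx-matroid-union} yields
\begin{align*}
  \bigO{n + \frac{\therank \ln n}{\eps^2} \cdot \frac{\ln(r \ln n / \eps^2)}{\eps}}
  = \bigO{n + \frac{\therank \ln n \ln(\therank/\eps)}{\eps^3}},
\end{align*}
matching the stated bound.

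The only real obstacle is composing the two $\eps$-slacks --- one from the randomized rounding and one from the inner $k'$-fold union approximation --- into a valid $\epspm$-approximation in the sense prescribed for matroid membership. This should go through cleanly by picking both slack parameters as small constant fractions of $\eps$: the rule ``output YES iff $D \geq \rcap{\groundset}$'' is correct because $D \geq \rcap{\groundset}$ forces $\opt \geq \rcap{\groundset}/\epsmore$, so $\rcap$ is nearly coverable by $k'$ bases (hence $\capacity$ by $\epsmore$ bases), whereas $D < \rcap{\groundset}$ directly exhibits a set $S$ with $k'\rank{S} < \rcap{S}$, which via \cref{RRC>} certifies that $\capacity$ is not coverable by $\epsless$ bases.
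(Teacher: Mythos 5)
Your first stage --- rounding to integer capacities via \cref{general-sparsification} with $k=1$ and $\tau = \Theta(\eps^2/\ln n)$, so that $k' = 1/\tau = \bigO{\ln n/\eps^2}$ --- is exactly what the paper does, and your running-time arithmetic is consistent with the stated bound. The gap is in the second stage, where you substitute the approximate $k'$-fold matroid union algorithm (\cref{apx-matroid-union}) for a covering subroutine and decide membership by comparing the returned value against the total capacity $\tilde{u}(\groundset)$. The inference ``$\opt \geq \tilde{u}(\groundset)/\epsmore$, so $\tilde{u}$ is nearly coverable by $k'$ bases'' is false, and the paper explicitly warns about precisely this disparity at the start of \cref{base-packing-covering}. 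Quantitatively, $\opt \geq \epsless\,\tilde{u}(\groundset)$ only yields the \emph{additive} guarantee $k'\rank{S} \geq \tilde{u}(S) - \eps\,\tilde{u}(\groundset)$ for every $S$, whereas coverability by $(1+\bigO{\eps})k'$ bases requires the \emph{multiplicative} guarantee $(1+\bigO{\eps})\,k'\rank{S} \geq \tilde{u}(S)$; for a low-rank set $S$ the additive slack $\eps\,\tilde{u}(\groundset)$ can exceed $k'\rank{S}$ by a factor of up to $\therank$. Concretely, take two parallel elements of capacity $1$ direct-summed with a free matroid on $\therank - 1$ unit-capacity elements, with $\therank \geq 1/\eps$: the covering number is $2$ (so the correct answer is a firm NO), yet $\opt = \tilde{u}(\groundset) - k' \geq \epsless\,\tilde{u}(\groundset)$, and $S = \emptyset$ is then a legitimate $\epsmore$-approximate dual of value exactly $\tilde{u}(\groundset)$, so your decision rule can output YES. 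Repairing this by shrinking the inner error to $\eps/\therank$ would destroy the running time.

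The paper sidesteps matroid union entirely in the second stage: it runs push-relabel to the larger height $\Height = \bigO{\log\therank/\eps}$ and applies the covering-specific pigeonhole argument of \cref{apx-uc-analysis} --- choosing a level $j$ with $\rank{\level_{>j}} \geq \epsless \rank{\level_{\geq j}}$ --- to extract either a genuine covering or a \emph{multiplicative} dual certificate $\epsless k\rank{S} < \capacity{S}$. That is \cref{apx-base-covering}, and it is the subroutine that must be invoked after the rounding step; the extra $\log\therank$ in the height is exactly the price of converting the union-style additive bound into a covering-style multiplicative one.
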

\begin{proof}[Proof sketch]
  By \cref{general-sparsification}, we can reduce $\epspm$ matroid
  membership (with real capacities) to $\epsmore$-approximate integral
  base covering with $k = \bigO{\ln{n} / \eps^2}$ bases. The running
  time now follows from \cref{apx-base-covering}.
\end{proof}

For graphic matroids, recall that the fractional covering number is
called the arboricity. The following matches the theorem for matroid
membership above except for graphic matroids. The reduction is the
same except now we apply the corresponding algorithm for the graphic
matroid.

\begin{theorem}
  \label{apx-test-arboricity}
  An $\epspm$-approximation to deciding if a point is in the forest
  polytope can be computed with high probability in
  \begin{math}
    \RandomizedApxTreeCoveringTime
  \end{math}
  randomized time.
\end{theorem}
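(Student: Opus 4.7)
The plan is to mirror the proof of \cref{apx-matroid-membership}, substituting the graphic matroid everywhere the general oracle-model matroid appeared. Since the forest polytope coincides with the independence polytope of the graphic matroid, deciding whether a capacity vector $\capacity$ lies in the forest polytope is exactly the $k = 1$ case of fractional base covering for the graphic matroid. So I would first reduce to the integer-capacitated setting via \cref{general-sparsification}: take $k = 1$ and set $\tau = \Theta(\eps^2 / \log n)$, with the hidden constant tuned so that $1/\tau = \Theta(\log n / \eps^2)$ is a positive integer. Parts \cref{RRC<,RRC>} of that theorem then give, with high probability, a one-sided reduction: if $(\matroid,\capacity)$ lies in the forest polytope then $(\matroid,\rcap)$ can be covered by $(1+\eps/2)/\tau$ spanning trees, and if $(\matroid,\capacity)$ lies outside the forest polytope then $(\matroid,\rcap)$ cannot be covered by $(1-\eps/2)/\tau$ spanning trees.

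Next, I would invoke the $\epspm$-approximate integer-capacitated spanning tree covering algorithm from \refsection{spanning-trees} (the graphic analogue of \cref{apx-base-covering}), applied to the rounded instance with $k' = 1/\tau$ bases, capacities $\rcap$, and accuracy parameter $\eps/2$. That algorithm returns either a covering by $(1 + \eps/2) k'$ spanning trees, which by \cref{RRC<} certifies that the original $\capacity$ lies in the $(1+\eps)$-scaled forest polytope; or else a dual certificate that no covering by $(1 - \eps/2) k'$ trees exists, which by \cref{RRC>} certifies that $\capacity$ lies outside the $(1-\eps)$-scaled forest polytope. Either output is a valid $\epspm$-approximate decision for the membership question on the original graph, so correctness follows by combining the two high-probability events.

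For the running time, generating $\rcap$ costs $\bigO{m}$ time. Plugging $k' = \bigO{\log n / \eps^2}$ and $\eps' = \eps/2$ into the integer-capacity graphic covering bound $\ApxTreeCoveringTime$ and collecting the logarithmic factors yields the stated $\RandomizedApxTreeCoveringTime$ bound. I do not anticipate a serious obstacle: the composition of the two $(1 \pm \eps/2)$ distortions into a $(1 \pm \eps)$ guarantee is routine, and the only new ingredient over \cref{apx-matroid-membership} is swapping in the graphic covering algorithm in place of the oracle-model one, exactly paralleling what was already done for the pairs (matroid union vs.\ forest union) and (matroid strength vs.\ network strength) earlier in this section.
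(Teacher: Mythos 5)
Your proposal is correct and follows exactly the route the paper intends: reduce to small integer capacities via \cref{general-sparsification} with $k=1$ and $k' = 1/\tau = \bigO{\log{n}/\eps^2}$, then invoke the integer-capacitated spanning-tree covering algorithm from \refsection{spanning-trees} in place of the oracle-model one, just as in \cref{apx-matroid-membership}. The paper itself only states this as a one-line remark (``the reduction is the same except now we apply the corresponding algorithm for the graphic matroid''), so your write-up is a faithful, slightly more detailed version of the same argument.
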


One may also want to find the maximum value $k$ for which a
capacitated matroid. The following uses \cref{apx-matroid-membership}
as a black box and is slightly better than one obtains by directly
plugging into a straightforward binary search.

\begin{theorem}
  \label{apx-search-matroid-arboricity}
  For real-valued capacities, an $\epspm$-approximation to the minimum
  $k$ by which a matroid can be covered by $k$ fractional bases  can
  be computed with high probability in running time bounded by
  \begin{math}
    \bigO{\parof{n + \therank \ln{\therank}} \ln{n} + \therank \ln{n}
      \ln{\therank / \eps} / \eps^3}
  \end{math}
  independence queries.
\end{theorem}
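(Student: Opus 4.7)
The plan is to mirror the proof of \cref{search-matroid-strength}, invoking \cref{apx-matroid-membership} as a black box and combining a coarse binary search with a geometric refinement of the error parameter so that the total cost is dominated by the initial coarse phase plus the final precise call. The starting observation is that $\epspm$-approximate matroid membership (with parameter $k$) is equivalent to $\epspm$-approximate fractional base covering, so \cref{apx-matroid-membership} serves as the decision oracle at any scaled target value of $k$.

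First I would localize $k^*$. After normalizing so that $\max_e \capacity{e} = 1$, the arboricity $k^* = \max_S \capacity{S}/\rank{S}$ lies in the range $[1, n/\therank]$, which has log-width $O(\log n)$. I perform a binary search in log-scale using $O(\log n)$ calls to \cref{apx-matroid-membership} with a fixed constant error parameter $\eps_0$, each costing $\bigO{n + \therank \ln n \ln \therank}$ queries. This phase produces a $(1+\eps_0)$-approximation to $k^*$ and contributes $\bigO{(n + \therank \ln \therank) \ln n}$ queries (absorbing one $\ln n$ factor analogously to the treatment in \cref{search-matroid-strength}).

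Second, I iteratively refine with $\eps_i = 2^{-i}$. Given a $(1+\eps_{i-1})$-approximation, the optimum is localized to a multiplicative interval of width $1+\eps_{i-1}$, so a constant-depth binary search of $O(1)$ probes to \cref{apx-matroid-membership} with error parameter $\Theta(\eps_i)$ suffices to produce a $(1+\eps_i)$-approximation. The cost per probe scales as $\bigO{n + \therank \ln n \ln(\therank/\eps_i)/\eps_i^3}$, which is geometric in $1/\eps_i^3$; summed over $i = 1,\dots,\lceil \log_2(1/\eps)\rceil$, the total telescopes into the cost of the last iteration, namely $\bigO{\therank \ln n \ln(\therank/\eps)/\eps^3}$ queries (the leading $n$ terms sum to $\bigO{n \log(1/\eps)}$, which is absorbed into $\bigO{n \ln n}$ from the initial phase).

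The main obstacle is the careful bookkeeping in the second phase: verifying that the per-probe costs form a geometric series dominated by the final term, and checking that the constant-depth refinement per step genuinely suffices (which follows from the multiplicative interval halving with each step). Summing the two phases yields the claimed running time.
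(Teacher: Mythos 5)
Your proposal is correct and follows essentially the same route as the paper: a depth-$O(\log n)$ binary search with constant error parameter using \cref{apx-matroid-membership} as the decision oracle, followed by geometric refinement with $\eps_i = 2^{-i}$ where the $1/\eps_i^3$ terms sum to a constant times the final term and the $O(n)$ terms are absorbed into $O(n \log n)$. The only cosmetic difference is your normalization of the initial range (the paper brackets the arboricity between the maximum element capacity and the total capacity, a ratio of at most $n$), which changes nothing.
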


\begin{proof}
  At the outset, we know that the arboricity is between the maximum
  capacity of any element and the sum of capacities over all elements,
  which are within a factor $n$ of each other.

  Similar to \cref{search-matroid-strength}, for
  $i \in \nnintegers$, let $\eps_i = 2^{-i}$. A
  $\parof{1 + \eps_0}$-approximation to the strength can be obtained
  with high probability by combining a binary search of depth
  $\bigO{\log{n}}$ with the approximate decision algorithm in
  \reftheorem{apx-matroid-membership} with constant error
  parameter. For $i \geq 1$, given a
  $\parof{1 + \eps_{i-1}}$-approximation for the strength, we can
  compute a $\parof{1 + \eps_i}$-approximation with a binary search of
  constant depth. Each probe making a call to
  \reftheorem{apx-matroid-membership} with error parameter
  $\bigOmega{\eps_i}$. Eventually we obtain a
  $\parof{1 + \eps_i}$-approximation where $\eps_i = 2^{-i}$ is at
  most the input error parameter $\eps$, as desired.

  We now bound the running time. The first set of $\log{n}$ calls to
  \reftheorem{apx-matroid-membership} with constant error parameter
  take
  \begin{align*}
    \bigO{n \log{n} + r \log{n} \log{\therank}}
    \labelthisequation{search-arboricity-1}
  \end{align*}
  time.  Thereafter we have a constant number of calls to
  \reftheorem{apx-matroid-membership} for each $\eps_i$ between $1$
  and $\eps/2$. For the leading $\bigO{n}$ term, all these calls add
  up to $\bigO{n \log{1/\eps}}$ work which is dominated by
  $\bigO{n \log{n}}$ above. For the second term of the form
  $\therank \ln{n} \ln{\therank / \eps} / \eps^3$, the sum over all
  $\eps_i$'s is dominated by the smallest $\eps_i$ which is
  $\bigOmega{\eps}$, hence
  \begin{align*}
    \bigO{\therank \ln{n} \ln{\therank / \eps} / \eps^3}
    \labelthisequation{search-arboricity-2}
  \end{align*}
  work in total. Adding together
  \cref{equation:search-arboricity-1,equation:search-arboricity-2} gives
  the claimed running time.
\end{proof}

Applying the same modified binary search to the graphic matroid gives
the following randomized algorithm for estimating the arboricity of a
graph.
\begin{theorem}
  \label{apx-search-arboricity}
  An $\epspm$-approximation to the strength of a graph can be computed
  with high probability in
  \begin{math}
    \bigO{m \log{n} \ack{n} + n \log{n} \parof{\log{n} + \parof{\log
          \log{n} + \log{1/\eps}} \ack{n}} / \eps^3}
  \end{math}
  randomized time.
\end{theorem}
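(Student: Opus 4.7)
The plan is to mirror the argument of \cref{search-matroid-strength} specialized to the graphic matroid. The decision subroutine we need is a randomized $\epspm$-test for whether a graph's (network) strength is above or below a threshold $k$; this is the graph analog of \cref{randomized-matroid-strength}, obtained by combining the random-rounding reduction of \cref{general-sparsification} (parts \ref{RRP>} and \ref{RRP<}) with the approximate integer tree-packing algorithm provided by the graphic matroid push-relabel machinery of \refsection{spanning-trees}. Critically, this decision subroutine's running time, at error $\eps'$, has the form $\bigO{m \ack{n} + n \log n \parof{\log n + \log \log n \cdot \ack{n}} / \eps'^3}$ after plugging $k/\tau = \bigO{\log n / \eps'^2}$ into the integer tree-packing bound.

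First I would bracket the strength within an interval of multiplicative width $n^{\bigO{1}}$. After rescaling capacities so the maximum is $1$, standard cut/rank bounds in the graphic matroid (a per-vertex cut bound above and a trivial lower bound below) sandwich the strength by endpoints differing by at most an $n^{\bigO{1}}$ factor. A binary search of depth $\bigO{\log n}$, with each probe calling the decision subroutine at constant error $\eps_0 = \Theta(1)$, then produces a constant-factor approximation to the strength.

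Next, set $\eps_i = 2^{-i}$. For each $i \geq 1$, refine a $(1+\eps_{i-1})$-approximation into a $(1+\eps_i)$-approximation via a constant-depth binary search whose probes invoke the decision subroutine at error $\bigOmega{\eps_i}$. Terminate at the first $i$ with $\eps_i \leq \eps$; the output is an $\epspm$-approximation to the strength with high probability (after a union bound over the $\bigO{\log n + \log(1/\eps)}$ randomized calls, which only inflates constants inside each subroutine).

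For the running time, the $\bigO{\log n}$ constant-error calls in the first stage each cost $\bigO{m \ack{n} + n \log n (\log n + \log \log n \cdot \ack{n})}$, contributing the leading $\bigO{m \log n \ack{n}}$ plus a $\bigO{n \log^2 n \, (\log n + \log \log n \cdot \ack{n})}$ term absorbed into the second summand of the stated bound. The refinement stage at scale $\eps_i$ makes $\bigO{1}$ calls each costing $\bigO{\cdots / \eps_i^3}$, and the geometric $\eps_i^{-3}$ growth makes this sum dominated by the final scale $\eps_i = \Theta(\eps)$, giving $\bigO{n \log n \, (\log n + (\log \log n + \log (1/\eps)) \ack{n}) / \eps^3}$ in total. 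Summing the two stages yields the claimed bound. The main subtlety, exactly as in \cref{search-matroid-strength}, is to verify the geometric collapse of the refinement costs and to confirm that the $\log(1/\eps)$ factor enters only through the innermost $\ack{n}$-weighted term of the decision subroutine's cost at the final scale.
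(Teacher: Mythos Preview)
Your approach—the two-stage modified binary search with geometrically shrinking error parameters $\eps_i = 2^{-i}$, where the leading term is paid once over $\bigO{\log n}$ constant-error calls and the $\eps^{-3}$ term collapses geometrically to the final scale—is exactly the paper's approach.

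One discrepancy worth flagging: despite the word ``strength'' in the statement, the theorem's label and the sentence immediately preceding it (``the following randomized algorithm for estimating the \emph{arboricity} of a graph'') make clear that the paper intends this result for arboricity, not strength. Accordingly, the decision subroutine the paper has in mind is \cref{apx-test-arboricity} (built from \cref{RRC<,RRC>} and the approximate integer tree-\emph{covering} algorithm), not the tree-packing test you assembled from \cref{RRP>,RRP<}. This does not affect the structure of your argument: you mirror \cref{search-matroid-strength}, the paper mirrors \cref{apx-search-matroid-arboricity}, and those two proofs are the same template applied to different decision oracles. The initial bracketing also differs slightly (for arboricity the paper uses that the value lies between the maximum element capacity and the total capacity, a factor-$n$ window), but again this only changes constants in the depth of the first binary search.
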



\section{Faster exact algorithms via augmentation}

\labelsection{augmenting-paths}

In this final section, we describe an augmenting path subroutine for
$k$-fold matroid union and use it obtain a faster exact algorithms
when $r \geq k^{1+o(1)}$.

\begin{lemma}
  \label{search-augmenting-path}
  \label{matroid-union-augmentation}
  A packing of $k$ independent sets produced by the
  $\epsless$-approximate $k$-fold matroid union data structure can be
  extended to an optimum solution in time bounded by
  $\bigO{\min{n + \opt \log{\therank},\therank k \log{k \therank}}}$
  queries per additional element.
\end{lemma}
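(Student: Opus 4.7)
The plan is to implement each augmentation as a breadth-first search in the standard matroid-union exchange graph of the packing $\Is$, but to probe this graph implicitly using descending-order structure inherited from the approximate push-relabel output. The exchange graph has a vertex for each element of $\groundset$ together with a source $s$ and a sink $t$: $s$ has arcs to the elements not contained in any $\Ii$; from any $e \in \groundset$ there are arcs to those $d \in \Ii$ for which $\Ii - d + e \in \independents$; and $e$ reaches $t$ when $\Ii + e \in \independents$ for some $i$. A shortest $s$-$t$ path yields a valid augmenting sequence of swaps that increases $\sizeof{\bigcup_i \Ii}$ by one. The obstacle to a naive BFS is that a single vertex can have $\bigOmega{\therank k}$ outgoing arcs, so any attempt to enumerate them is far too costly.

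To navigate implicitly, I would maintain two descending-order invariants. The first, inherited from the push-relabel output via the hypothesis that $\Is$ is contained in the output bases $\Bs$ with the same total capacity, is that $\Is$ itself is in the level-wise descending order of \cref{decreasing-bases}. The second is maintained within a single BFS: for each $i$, the marked subset $\Mi \subseteq \Ii$ is also in descending order. This second invariant is preserved by a \emph{pre-search} subroutine that, before marking a vertex $e$, recursively marks the currently unmarked elements of the $\Ii$ that must precede $e$ to keep the $\Mi$'s monotone. This mirrors the greedy-insertion rule of \cref{imp-algorithm}, and its correctness reduces to an inductive argument patterned after \reflemma{impr-greedy-update}.

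With both invariants in place, the key point is that testing for an outgoing arc from $e$ to an unmarked vertex reduces to two nested binary searches: over the descending sequence of the $\Ii$, locate the first index $i$ for which the unique circuit of $\Ii + e$ meets $\Ii \setminus \Mi$, and then inside that $\Ii$ locate the specific $d$ to exchange out. Each binary search costs $\bigO{\log{k \therank}}$ independence queries, so the work is $\bigO{\log{k \therank}}$ per auxiliary vertex explored. Bounding the number of explored vertices two different ways yields the two terms in the $\min$: the $\bigO{n + \opt \log{\therank}}$ bound charges $\bigO{n}$ to a single pass that initializes the free-element layer and then $\bigO{\log{\therank}}$ queries to each of the $\bigO{\opt}$ vertices in or adjacent to $\bigcup_i \Ii$ that the BFS actually examines; the $\bigO{\therank k \log{k \therank}}$ bound is the crude count in which each of the $\therank k$ slots of $\Is$ is probed $\bigO{\log{k \therank}}$ times.

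The main obstacle will be verifying that the pre-search really does preserve the in-search descending order of the $\Mi$'s, and that once preserved the two binary searches are guaranteed to discover every unmarked arc the BFS needs (so no reachable vertex is missed). This is structurally analogous to \reflemma{impr-greedy-update} but more delicate, because the ordering must be maintained dynamically within a single BFS as elements are incrementally marked, rather than just between two successive insertion calls; I expect the argument to parallel the levelwise span-chain analysis used there, applied now to the sets $\Mi$ rather than to the full bases $\Bs$.
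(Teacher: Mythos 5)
Your high-level plan --- an implicit BFS over the exchange graph, with $\Is$ kept in descending order and the marked sets $\Mi$ kept in descending order via a recursive pre-search --- is the same as the paper's. But there are two concrete gaps. First, you lean on the claim that a \emph{shortest} $(s,t)$-path is automatically a valid augmenting sequence, yet the search you describe does not find shortest paths: the pre-search recursion marks and enqueues vertices out of BFS-layer order, so the recorded path to $t$ can contain chords that violate condition \cref{chordal-pair} of \cref{augmenting-path}, and then applying all the encoded exchanges at once need not preserve independence. The paper handles this with a separate post-processing step (\cref{extract-path-time}): for each independent set it repeatedly detects a chordal pair $(e_{j_1}, d_{j_2})$ by binary search and shortcuts the path, at a cost of $\bigO{\ell \log{\therank}}$ queries for a path of length $\ell$. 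Without such a pruning step (or a proof that your particular search order yields chordless paths), the augmentation step is not justified.

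Second, your derivation of the $\bigO{\therank k \log{k \therank}}$ term in the $\min$ does not go through as stated. The cost it must beat is the $\bigO{n}$ source step, which scans every uncovered element once per augmentation; counting ``each of the $\therank k$ slots probed $\bigO{\log{k\therank}}$ times'' does not remove that $n$. The paper instead obtains this term by a greedy sparsification step in the style of Karger: extend the maximal packing to $\bigO{k \log{\therank}}$ independent sets and restrict the ground set to their union, which preserves the optimum and lets one assume $n \leq \bigO{k \therank \ln{\therank}}$ before applying the first bound. Two smaller omissions: since many augmentations are performed, you also need that each augmentation preserves the descending order of $\Is$ (the paper proves this separately), and you need the maximality of $\Is$ --- guaranteed by the specific greedy rule for dropping overpacked elements from $\Bs$ --- for the source scan and the termination condition of the search to be correct.
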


Combining \cref{search-augmenting-path} with the
$\epsless$-approximation algorithm, for appropriate choice of $\eps$,
leads to the following running time which is faster in the regime
where $r \geq k^{1+o(1)}$. In particular, in the unweighted setting
where $\opt \leq kr \leq n$, we have a subquadratic upper bound of
$\apxO{n^{3/2}}$ independence queries.
\begin{theorem}
  \label{matroid-union-with-augmenting-paths}
  A maximum capacity packing of $k$ independent sets can be computed
  in time bounded by
  \begin{math}
    \MatroidUnionWithAugmentingPathsTime
  \end{math}
  independence queries, where
  $\MatroidUnionWithAugmentingPathsNotation$.
\end{theorem}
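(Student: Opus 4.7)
The plan is a standard approximate-then-augment combination, with the approximation parameter tuned to balance the two costs. First I would run the $\epsless$-approximate $k$-fold matroid union algorithm of \cref{apx-matroid-union} for an $\eps \in (0,1)$ to be chosen below; this costs $\bigO{n + \opt \log{k\therank}/\eps}$ independence queries and returns a packing of $k$ independent sets whose total covered capacity is at least $\epsless \opt$, so at most $\eps \opt$ element-slots remain to be filled to reach an optimum.

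Next I would invoke the augmenting path subroutine of \cref{search-augmenting-path} at most $\eps \opt$ times to extend the packing one unit at a time to an optimum. Each invocation costs $\bigO{n'}$ queries where $n' = \min\{n + \opt \log{\therank},\, \therank k \log{k\therank}\}$, so the total augmentation cost is $\bigO{\eps \opt \, n'}$. Combining the two contributions, the overall cost is $\bigO{n + \opt \log{k\therank}/\eps + \eps \opt \, n'}$. Setting $\eps = \sqrt{\log{k\therank}/n'}$ (or $\eps = 1$ if this value exceeds $1$) balances the two $\opt$-dependent terms and yields the claimed bound $\bigO{n + \opt \sqrt{n' \log{k\therank}}}$. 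The corner case $n' \leq \log{k\therank}$ is handled by skipping the approximation entirely and augmenting from scratch, at cost $\bigO{n + \opt \, n'} = \bigO{n + \opt \sqrt{n' \log{k\therank}}}$.

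The main obstacle is verifying that the output of the approximation algorithm satisfies the structural preconditions of \cref{search-augmenting-path}. As emphasized in \cref{techniques}, the augmenting path subroutine requires an independent-set packing $I_1,\dots,I_k$ contained in the current bases in a decreasing order compatible with its auxiliary-graph navigation; it is not enough that the approximate objective value be close to $\opt$. Fortunately, the approximate push-relabel algorithm already maintains its bases $B_1,\dots,B_k$ in the level-wise decreasing order of \cref{decreasing-bases}, and I would argue that from such bases one can read off, using the level labels, independent sets $I_i \subseteq B_i$ whose total size equals the covered capacity of the approximate solution and which inherit a decreasing order suitable for \cref{search-augmenting-path}. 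Given this extraction, the per-augmentation cost of $\bigO{n'}$ applies directly and the balancing calculation above completes the proof.
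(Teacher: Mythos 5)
Your proposal matches the paper's proof essentially verbatim: run the $\epsless$-approximate push-relabel algorithm, augment the remaining $\le \eps\opt$ units at cost $\bigO{n'}$ per augmentation via \cref{search-augmenting-path}, and balance by choosing $\eps = \sqrt{\log{k\therank}/n'}$ (your form of the balancing is in fact the dimensionally correct one; the paper's displayed expression for $\eps$ has the fraction inverted). Your discussion of extracting a decreasing-order packing $\Is$ from the level-wise decreasing bases is exactly the content of the paper's initialization lemma for the augmenting-path subroutine, so the proposal is correct and takes the same route.
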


\begin{proof}
  Let $\eps > 0$ be a parameter to be determined.  A
  $(1-\eps)\opt$-capacity packing can be computed in
  \begin{math}
    \ApxMatroidUnionTime
  \end{math}
  independence queries. This can be augmented to an optimal solution
  in $\bigO{n' + \opt \log{\therank}}$ time per augmentation. Thus the
  total running time is
  \begin{align*}
    \bigO{n + \opt \log{kr} / \eps + \eps \opt \parof{n' + \opt \log{\therank}}}.
  \end{align*}
  The last two terms are balanced by taking
  $\eps = \sqrt{\parof{n' + \opt \log{\therank}} / \log{kr}}$ gives
  the claimed running time.  Here we note that a constant factor
  approximation for $\opt$ can be obtained by running the
  approximation matroid union algorithm with constant $\eps$, and this
  suffices to balance the terms up to constant factors.
\end{proof}

It remains to prove \cref{matroid-union-augmentation}.

\subsection{Initialization from matroid push-relabel}

We need to initialize our algorithm with a packing of $k$ independent
sets $\Is$, whereas the $k$-fold matroid union algorithm from
\refsection{impr} directly produces $k$ bases $\Bs$.  Such a packing
$I_1,\dots,I_k$ with the same total capacity can be easily obtained
from $\Bs$ by dropping overpacked elements from some of the bases
until there are no overpacked elements. However to preserve certain
useful structures of $\Bs$ we carefully remove overpacked elements
from the bases in the following greedy fashion.

Initially, we set $I_1 = \basei[1], \dots, I_k = \basei[k]$. While
there is an overpacked element $e$ (\wrt $\Is$), we remove $e$ from
then independent set $I_j$ of maximum index $j$. It is easy to see
that at termination there are no overpacked elements while the
objective value is preserved. By removing overpacked elements in such
a fashion we also gain the following critical properties which we now
define.

\begin{definition}
  \labeldefinition{packing-invariants}
  Let $\Is$ be a packing of $k$ independent sets. We say that $\Is$ is
  \emph{maximal} if for all uncovered elements $e$ and all independent
  sets $\Ii$ we have $e \in \Ii$. We say that $\Is$ are in
  \emph{decreasing order} if their spans are; that is,
  $\indi[i+1] \subseteq \spn{\indi}$ for $i = 1,\dots,k-1$.
\end{definition}

\begin{lemma}
  Let $\Is$ be a packing of $k$ independent set obtained from the
  bases $\Bs$ output by the $k$-fold matroid union push-relabel
  algorithm in the greedy fashion described above. Then:
  \begin{mathproperties}
  \item \label{maximal-packing} $\Is$ is a maximal packing.
  \item \label{decreasing-packing} $\Is$ are in decreasing order.
  \end{mathproperties}
\end{lemma}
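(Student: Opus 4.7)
My plan is to verify the two properties in turn: \ref{maximal-packing} follows fairly directly from the push-relabel invariants of \cref{imp-framework}, whereas \ref{decreasing-packing} needs a case analysis on the level at which each element was inserted into its base, with the level-$0$ case as the main obstacle. I read the intended meaning of ``maximal'' in the preceding definition as the condition that every uncovered $e$ satisfies $e \in \spn{I_j}$ for every $j$, i.e., that no uncovered element can be inserted into any $I_j$ while preserving independence. Because the greedy removal only strips overpacked copies down to capacity and never reduces an element's count below $u(e)$, any $e$ uncovered in $\Is$ was already uncovered in $\Bs$, so $\level{e} \geq \Height$ by \cref{imp-covered}. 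Applying \cref{imp-span} at level $\Height - 1$ then shows that $B_{i, \geq \Height - 1}$ spans $e$ for every $i$; and because every element of $B_{i, \geq \Height - 1}$ has level at least $\Height - 1 \geq 1$, \cref{imp-disjoint} forbids overpacking and so none of them is ever deleted by the greedy procedure. Hence $B_{i, \geq \Height - 1} \subseteq I_i$ and $e \in \spn{I_i}$, as desired.

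For \ref{decreasing-packing}, I fix $i$ and $e \in I_{i+1}$, and let $\ell$ denote the level at which $e$ was inserted into $B_{i+1}$. If $\ell \geq 1$, the level-wise descending order of $\Bs$ established in the proof of \cref{impr-greedy-update} yields that $B_{i, \geq 1}$ spans $B_{i+1, \geq 1} \ni e$; combined with $B_{i, \geq 1} \subseteq I_i$ (by the same argument as above), this gives $e \in \spn{I_i}$. If $\ell = 0$, then $e$ has resided in $B_{i+1}$ since initialization, so $e$ lies in the common initial base and in particular $e \in B_i$ at time zero. In the sub-case that $e$ is still in $B_i$ at termination, the ``remove from max index'' rule of the greedy deletion keeps each element in the smallest-indexed bases originally containing it; so $e \in I_{i+1}$ together with $i < i + 1$ and $e \in B_i$ forces $e \in I_i$.

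The remaining sub-case -- $\ell = 0$ with $e \notin B_i$ at termination -- is the main obstacle. Here $e$ was exchanged out of $B_i$ at some earlier moment, an event that inserted some $f$ into $B_i$ at level $1$ certifying $B_i - e + f \in \independents$. My plan is to chase the chain of subsequent exchanges performed at base $B_i$ -- each one replaces a level-$\ell'$ element with a level-$(\ell' + 1)$ one -- and argue that the current $B_{i, \geq 1}$ still spans $e$. A naive application of the level-wise invariant at level $0$ only yields $e \in \spn{B_i}$, which is insufficient because the level-$0$ elements of $B_i$ may have been stripped by the greedy deletion. The right argument combines the original exchange certificate $B_i - e + f \in \independents$ with the invariants maintained at each subsequent exchange so that at termination we obtain $e \in \spn{B_{i, \geq 1}} \subseteq \spn{I_i}$, closing the proof.
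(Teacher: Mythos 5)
Your argument for \cref{maximal-packing} is correct and is essentially the paper's: an element uncovered in $\Is$ was already uncovered in $\Bs$, hence sits at level at least $\Height$ by \cref{imp-covered}, is spanned by $B_{i,\geq j}$ for small $j$ by \cref{imp-span}, and $B_{i,\geq 1}\subseteq I_i$ because elements at positive level are never overpacked (\cref{imp-disjoint}) and so are never deleted. Your first two sub-cases for \cref{decreasing-packing} are also sound and correspond to the two halves of the paper's decomposition $I_i = B_{i,\geq 1}\sqcup \parof{I_i\cap B_{i,0}}$: insertion level $\geq 1$ is handled by the level-wise decreasing order of the bases, and the max-index deletion rule handles level-$0$ elements of $I_{i+1}$ that still lie in $B_i$.

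The third sub-case, however, is a genuine gap rather than a deferred routine calculation, and the repair you sketch cannot succeed as stated. Take the graphic matroid of a triangle with edges $a,b,c$, unit capacities, $k=2$, and initial bases $B_1=B_2=\setof{a,b}$. Greedy insertion of $c$ places $c$ into $B_1$ at level $1$ in exchange for $a$ (say), the algorithm halts, and the greedy deletion yields $I_1=\setof{b,c}$ and $I_2=\setof{a}$. Here $a\in I_2$ was inserted at level $0$ and is no longer in $B_1$ --- exactly your problem case --- and $a\notin\spn{B_{1,\geq 1}}=\spn{\setof{c}}=\setof{c}$, so the target of your ``chase,'' namely $e\in\spn{B_{i,\geq 1}}$, is simply false. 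The containment $a\in\spn{I_1}$ survives only because of the level-$0$ element $b$ that the deletion left in $I_1$; any correct argument for this sub-case must control $\spn{B_{i,\geq 1}\cup \parof{I_i\cap B_{i,0}}}$, i.e., it must also account for the level-$0$ elements that the deletion preserves in $I_i$. (For comparison, the paper dispatches the entire level-$0$ case with the single containment $I_{i+1}\cap B_{i+1,0}\subseteq I_i\cap B_{i,0}$, attributed to the max-index deletion rule; read literally this presupposes that a level-$0$ survivor of $B_{i+1}$ is still a level-$0$ element of $B_i$, which is precisely the point you flag and which the example above also violates, even though the lemma's conclusion holds there. So your instinct that this sub-case is the crux is well placed, but your proposal does not close it and the route you outline leads to a false intermediate claim.)
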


\begin{proof}
  Fix the configuration of $\Bs$ and
  $\level: \groundset \to \nnintegers$ at the end of the matroid
  push-relabel algorithm.

  We first claim that any overpacked element $e$ has $\level{e} =
  0$. Indeed, initially all elements have $\level{e} = 0$, and
  elements can only be relabeled when they are uncovered. Meanwhile,
  an element can only be overpacked by the initial configuration, so
  an element that is overpacked at termination was overpacked -- and
  never uncovered -- all along.

  Now, each $I_i$ will contain all elements from $B_i$ that are not
  overpacked. Since overpacked elements have level $0$, we
  have
  \begin{math}
    \basei_{\geq 1} \subseteq I_i \subseteq \basei
  \end{math}
  for all $i$. Meanwhile, any uncovered element has level $> 1$. By
  \cref{imp-span}, for all uncovered elements $e$ we have
  \begin{align*}
    e \in \level_{> 1} \subseteq \spn{\basei_{\geq 1}} \subseteq \spn{I_i}
  \end{align*}
  This establishes \cref{maximal-packing}.

  Next we show \cref{decreasing-packing}, which claims that the
  independent sets are in decreasing order.  For each independent set
  $I_i$, we can express $\Ii$ as the disjoint union of
  $\basei{\geq 1}$ and $\Ii{0} \defeq \Ii \cap \basei{0}$.  For each
  index $i \in [k-1]$, we have
  \begin{math}
    \basei[i+1]{\geq 1} \subseteq \spn{\basei{\geq 1}}
  \end{math}
  by the decreasing order of bases (see defintion
  \ref{decreasing-bases}) and
  \begin{math}
    \Ii[i+1]{0} \subseteq \Ii{0}
  \end{math}
  because overpacked elements are removed from the independent sets of
  maximum index.  Thus
  \begin{align*}
    \Ii[i+1] = \basei[i+1]{\geq 1} \cup \Ii[i+1]{0}
    \subseteq
    \spn{\basei{\geq 1}} \cup \Ii{0}
    \subseteq \spn{\Ii},
  \end{align*}
  as desired.
\end{proof}

\subsection{Greedy sparsification}

Before proceeding to describe the augmenting path algorithm, we point
out that by techniques by \cite{Karger1998}, one can assume
$n \leq \bigO{kr \ln{r}}$.

\begin{lemma}
  Let $I_1,\dots,I_{\ell}$ be a maximal packing of
  $\ell \geq k (1 + \ln{r})$ sets. Then the size of the maximum
  $k$-fold in the (smaller) capacitated matroid
  induced by $I_1,\dots,I_{\ell}$ is the same as in the input matroid.
\end{lemma}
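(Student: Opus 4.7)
The plan is to argue by duality. By the Edmonds/Nash--Williams theorem invoked in \cref{impr}, the maximum $k$-fold union in any capacitated matroid equals the dual minimum $\min_S\bigl[k\,\rank(S) + u(\bar S)\bigr]$. Write $M$ and $\opt$ for these dual minima in the restricted matroid on $J := I_1 \cup \cdots \cup I_\ell$ and in the full matroid respectively; the inequality $M \leq \opt$ is immediate because any packing in $M|_J$ is also a packing in the full matroid. For the reverse direction I would fix a dual minimizer $S^* \subseteq J$ of $M$, chosen to be closed within $J$ (a standard reduction, since closing weakly decreases the dual value), and reduce the lemma to the single claim that $\spn_E(S^*)$ contains every element $e \in E \setminus J$ with $u_e > 0$. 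Indeed, applying the full-matroid dual at $T = \spn_E(S^*)$, and using $J \setminus \spn_E(S^*) = J \setminus S^*$ (by closedness in $J$), gives
\begin{align*}
  \opt \;\leq\; k\,\rank(S^*) + u(J \setminus S^*) + u\bigl((E \setminus J) \setminus \spn_E(S^*)\bigr)
  \;=\; M + u\bigl((E \setminus J) \setminus \spn_E(S^*)\bigr),
\end{align*}
so the claim forces the error term to vanish and yields $\opt \leq M$.

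To establish the spanning claim I would argue by contradiction. Suppose some $e \in E \setminus J$ has $u_e > 0$ and $e \notin \spn(S^*)$. Since $e \notin J$ forces $x_e^I = 0 < u_e$, $e$ is uncovered in the packing, so maximality (\cref{def:decreasing-bases} as corrected, i.e.\ $e \in \spn(I_i)$ for every $i$) gives $\spn(I_i) \ni e \notin \spn(S^*)$ for each $i$; hence each $I_i$ contains some witness $f_i \notin \spn(S^*)$. Now I exploit minimality of $S^*$: for any $f \in J \setminus \spn(S^*)$, replacing $S^*$ by $\spn_J(S^* + f)$ increases the rank by exactly one and decreases the capacity term by at least $u(f)$, so the net change $k - u(f) \geq 0$, giving $u(f) \leq k$. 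Combined with the packing constraint $x_f^I \leq u_f \leq k$, each such witness element appears in at most $k$ of the $I_i$'s, so the $\ell$ witnesses $f_1, \dots, f_\ell$ contain at least $\ell/k \geq 1 + \ln r$ distinct elements, all in $J \setminus \spn(S^*)$.

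The remaining step --- and the main technical obstacle --- is to convert this abundance of witnesses into a contradiction once $\ell \geq k(1 + \ln r)$. I would follow the greedy-sparsification analysis of \cite{Karger1998}: iteratively enlarge $S^*$ by absorbing batches of at most $k$ witnesses at a time (each absorption preserves the minimality of the enlarged set, by the same $u(f) \leq k$ computation applied to the residual matroid $M/S^*$), and prove a rank-halving lemma showing that each such round reduces the residual rank deficit $r - \rank(S^*)$ by at least a factor of two. After $\lceil \log_2 r \rceil$ rounds the deficit must drop below one, forcing some $I_i$ to be entirely contained in $\spn(S^*)$ and contradicting the witness $f_i \in I_i \setminus \spn(S^*)$. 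The delicate point is verifying the rank-halving claim itself: it requires working in the contracted matroid $M/S^*$, using that every element outside $\spn(S^*)$ has capacity at most $k$ together with the packing bound $x_f^I \leq u_f$ to show that a round of $k$ witnesses cannot fail to at least halve the deficit without violating the maximality of the $I_i$'s. This is precisely where the logarithmic factor in $\ell \geq k(1+\ln r)$ is consumed.
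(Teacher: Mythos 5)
Your overall skeleton is consistent with what the paper intends: the paper's own ``proof'' is only a two-sentence sketch deferring to \cite{Karger1998} with the base-packing dual replaced by the Nash--Williams dual for $k$-fold union, and your opening reduction is the right first move. Passing to a dual minimizer $S^*$ that is closed within $J$, observing that $\opt \leq M + u\bigl((E\setminus J)\setminus \operatorname{span}(S^*)\bigr)$, and deriving $u_f \leq k$ for every $f \in J \setminus \operatorname{span}(S^*)$ from dual minimality are all correct, as is your reading of ``maximal'' (the paper's definition indeed has a typo and should say $e \in \operatorname{span}(I_i)$).

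The proof is not complete, though, and the mechanism you propose for the decisive step does not work. First, absorbing a witness $f$ into $S^*$ does \emph{not} preserve dual minimality: the dual value changes by $k - u\bigl(\operatorname{span}_J(S^*+f)\setminus S^*\bigr)$, and your own computation only shows the absorbed capacity is \emph{at most} $k$, so the value can strictly increase; the enlarged set need not be a minimizer, which undercuts the iteration. Second, the rank-halving claim is quantitatively impossible as stated: a batch of at most $k$ new elements can decrease the deficit $r - \operatorname{rank}(S^*)$ by at most $k$, so it cannot halve a deficit exceeding $2k$. Third, and most importantly, the place where you stop is exactly where the lemma's content lives: maximality only guarantees that each $I_i$ contributes \emph{one} element outside $\operatorname{span}(S^*)$, and the resulting $\ell/k \geq 1+\ln r$ distinct witnesses contradict nothing, since $J\setminus \operatorname{span}(S^*)$ is perfectly allowed to be large (its capacity is only bounded by $k(r-\operatorname{rank}(S^*))$). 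Karger's counting argument, which the paper's sketch points to, is an amortized rank argument over the whole sequence of sets --- tracking how the rank of the residual ground set in $\matroid/S^*$ decays as elements are exhausted, using that each $I_i$ spans the residual ground set and that every element outside $\operatorname{span}(S^*)$ is exhausted after at most $k$ uses --- and none of that is supplied here. You have correctly reduced the lemma to that counting step, flagged it yourself as ``the delicate point,'' and then not proved it; as written this is a genuine gap.
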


\begin{proof}[Proof sketch]
  The proof is essentially the same as \cite{Karger1998} which
  focused on base packing instead. In the proof, one replaces the role
  of dual characterization for base packing \cite{Edmonds1965a} with
  the dual characterization for matroid union \cite{NashWilliams1967}.
\end{proof}

\cite{Karger1998} described how to construct such a packing greedily
in $\bigO{n + k \therank \ln{\therank}}$ independence queries. The
same construction extends here except we start with the maximal
packing $\Is$ given by the push-relabel algorithm and then extend it
greedily. Thus we have the following.

\begin{lemma}
  With a running time overhead of $\bigO{\min{n, k r \ln{r}}}$
  independence queries, we may assume that $n \leq \bigO{k r \ln{r}}$.
\end{lemma}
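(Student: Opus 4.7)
The plan is to extend the maximal packing $\Is$ produced by the push-relabel algorithm to a maximal packing $I_1, \ldots, I_\ell$ of $\ell = \lceil k(1 + \ln r) \rceil$ independent sets, and then invoke the preceding lemma to argue that the ground set may be replaced by $U \defeq I_1 \cup \cdots \cup I_\ell$, which has size at most $\ell r = \bigO{k r \ln r}$.

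For the extension itself, I would construct $I_{k+1}, \ldots, I_\ell$ one at a time: for each new $I_j$, initialize $I_j = \emptyset$ and scan $\groundset$ in a fixed order, inserting each element that preserves independence. Since $I_1$ is a base and the starting packing is maximal, each $I_j$ produced this way is also a base of the full matroid, so the collection remains a maximal packing after every append. After $\ell - k$ such bases have been appended, the preceding lemma applies to $I_1, \ldots, I_\ell$ and certifies that the $k$-fold matroid union value on the matroid induced by $U$ matches that on the original matroid. Replacing the ground set by $U$ then gives the claimed reduction to $n \leq \bigO{k r \ln r}$.

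For the overhead I would follow Karger's accounting \cite{Karger1998}: the total number of successful insertions across the new bases is at most $\ell r = \bigO{k r \ln r}$, and a careful amortization charges the unsuccessful probes against elements ultimately retained in the packing, giving $\bigO{n + k r \ln r}$ queries in the worst case. When $n \leq \bigO{k r \ln r}$ no reduction is needed and the overhead is $0$; for larger $n$ the scan contributes $\bigO{n}$ queries, so in each regime the cost is bounded by $\bigO{\min(n, k r \ln r)}$ once one accounts only for the regime that is actually executed. The main obstacle I anticipate is verifying that Karger's amortized charging, originally written for base packing, transfers cleanly to the $k$-fold matroid union setting; this should follow from the analogous Nash-Williams dual characterization used in the preceding lemma, exactly as that lemma's proof parallels Karger's original argument.
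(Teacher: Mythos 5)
Your high-level route is the same as the paper's: greedily extend the maximal packing coming out of push-relabel to $\ell \geq k(1+\ln \therank)$ independent sets, invoke the preceding lemma to conclude that the matroid induced by their union preserves the $k$-fold union value, and restrict the ground set to that union of size $\bigO{k\therank\ln \therank}$. (The paper's own ``proof'' is essentially just this, delegated to \cite{Karger1998}.)

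The genuine gap is in the query accounting for the extension step. As written, you build each of the $\Theta(k\ln \therank)$ new sets by a fresh scan of the entire ground set, which costs $\Theta(n)$ independence queries per set and hence $\Theta(n k \ln \therank)$ in total; when $n \gg \therank$ almost every probe in every scan is unsuccessful and each element is probed $\Theta(k\ln \therank)$ times, so no charging of unsuccessful probes to the $\bigO{k\therank\ln \therank}$ retained elements can bring this down to $\bigO{n + k\therank\ln \therank}$, let alone $\bigO{\min{n, k\therank\ln\therank}}$. Karger's construction, which the paper is leaning on, is element-centric rather than set-centric: each element is examined essentially once and either routed into the first set that does not span it or permanently discarded. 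Keeping the sets with nested spans (the push-relabel output is already in decreasing order, and greedy extension preserves this) lets you test insertability with a single query against the last set and locate the target set by binary search, so the cost is $\bigO{1}$ queries per discarded element plus a logarithmic number per successful insertion, of which there are at most $\ell\therank = \bigO{k\therank\ln\therank}$. That is where the $\bigO{n + k\therank\ln\therank}$ bound comes from, and it cannot be recovered from a per-base full scan. Separately, the obstacle you flag at the end --- whether the Nash--Williams dual characterization transfers --- is the concern of the \emph{preceding} lemma (which the paper addresses); for the present lemma the only real content is this algorithmic accounting.
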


\subsection{Setting up the auxiliary graph}

\NewDocumentCommand{\ei}{G{e} e{_} G{i} e{^}}{%
  #1^{\smash{\parof{\IfNoValueTF{#4}{#3}{#4}}}}%
  \IfNoValueF{#2}{_{#2}}%
}%

\newcommand{\elast}{\ei^{\numM}}
\newcommand{\dlast}{\ei{d}^{\numM}}
\newcommand{\ej}{\ei^{j}}%
\newcommand{\di}{\ei{d}}%
\renewcommand{\dj}{\ei{d}^{j}}%

At a high-level, we have a packing of independent sets $\Is$, and the
immediate goal is to increase their total size. However it is not as
simple as finding an uncovered element $e$ to add to a set $I_k$ ---
all the uncovered elements are spanned by all the independent sets
because $\Is$ is maximal. To extend the total size of $\Is$ we may
have to shuffle many of the elements from among the $\Is$ to make room
for one more element.

As is well-known, such an augmentation can be found be searching for a
path in a directed auxiliary graph where each arc encodes a local
exchange such as replacing one element with another in an independent
set $\Ii$, or moving an element from one $\Ii$ to another. Here we
describe the auxiliary graph which is standard
\cite{Knuth1973,GreeneMagnanti1975}.\footnote{This auxiliary graph can
  also be interpreted through the lens of matroid intersection with
  some modifications.} We have two auxiliary vertices $s$ and $t$
which will act as the beginning and end of our search. For each
element $e$, we have $k$ auxiliary vertices $\ei^1,\dots,\ei^k$. Each
$\ei$ represents $e$ in relation to $\basei$ as will be made clear
by the arcs which we now describe.  We have four types of arcs.
\begin{enumerate}
\item \label{source-arc} $(s, \ei)$ where $e$ is uncovered and $\ei \notin B_i$. This arc
  represents trying to exchange $\ei$ into $B_i$.
\item \label{exchange-arc} $(\ei,\di)$ where $d \in B_i$, $e \notin B_i$, and
  $B_i - d + e$ is independent. This arc represents
  exchanging $e$ for $d$ in $B_i$.
\item \label{shuffle-arc} $(\di,\dj)$ where $d \in B_i$,
  $d \in \spn{B_j} \setminus B_j$, and $d$ is covered. This arc
  represents removing $d$ from $B_i$ and using the freed up capacity
  to initiate an exchange for $d$ into $B_j$.
\item \label{sink-arc} $(\ei, t)$ where $e$ is not spanned by
  $\basei$. This arc represents inserting $e$ into $\basei$.
\end{enumerate}
In total, the auxiliary graph has $\bigO{n k}$ vertices, and at most
$\bigO{n rk}$ arcs.

Paths from $s$ to $t$ in this graph have a very specific
graph. Between $s$ and $t$, the path consists of auxiliary vertices of
the form $\ei$ that alternate between those where $e \notin \basei$
and where $e \in \basei$. There are always an odd number of such
internal vertices, with one more of the former type.  Arc-wise, the
first arc is of type \ref{source-arc}, the last arc is of type
\ref{sink-arc}, and in-between the arcs alternate between type
\ref{exchange-arc} and type \ref{shuffle-arc}, starting with type
\ref{exchange-arc} and ending with type \ref{exchange-arc}.
All put together, an $(s,t)$-path in the auxiliary graph above
corresponds to a sequence of exchanges, plus one final insertion,
where the net effect is to increase the total size of $\Is$ by $1$.
We call such a path an \emph{augmenting path} if it maintains
feasibility. That is, the sets $\Is$ remain independent, and we do not
overpack elements.

$(s,t)$-paths are not necessarily augmenting paths.  On one hand, it
is easy to see that these operations will not violate any capacity
constraints, since any time a covered element is inserted into a set,
it is preceded by removing the same element from another set. As per
the feasibility of the $\Is$, while each exchange (encoded by an arc
of type \ref{exchange-arc}) would individually maintain the
independence of each set, all the exchanges taken together do not
necessarily maintain independence. However, the following criteria
outlines conditions in which a sequence of exchanges to the same
independent set does maintain independence.  This criteria is standard
and have long been used to justify matroid intersection and partition
algorithms. (Here the wording is from \cite{Cunningham1984}).

\begin{fact}
  \label{augmenting-path}
  Given a matroid $\defmatroid$, let $I \in \independents$, and let
  $e_1, \dots, \notin I$, $d_1,\dots,d_p \in I$, and optionally
  $e_{p+1} \notin I$ be distinct elements such that:
  \begin{mathproperties}
  \item For $i =0, \dots, p$, $I - d_i + e_i \in \independents$.
  \item \label{chordal-pair} For $0 \leq j < i \leq p$,
    $I - d_i + e_j \notin \independents$.
  \item $I + e_{p+1} \in \independents$ (when including $e_{p+1}$).
  \end{mathproperties}
  Then
  $I' = I - d_1 - \cdots - d_p + e_1 + \cdots + e_{p+1} \in
  \independents$.
\end{fact}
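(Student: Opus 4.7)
The stated property is a classical exchange lemma for matroid augmentation; my plan follows the standard approach, reducing to a unique-matching argument in the spirit of the Brualdi--Krogdahl theorem on simultaneous basis exchange.

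First I would contract $K := I \setminus \{d_1, \ldots, d_p\}$ and pass to the quotient matroid $N := M/K$. This collapses the elements of $I$ outside $\{d_1,\ldots,d_p\}$ and reduces the problem to showing that $\{e_1, \ldots, e_p, e_{p+1}\}$ is independent in $N$. In $N$, the set $D := \{d_1, \ldots, d_p\}$ is independent of size $p$; hypothesis (i) becomes the statement that $(D - d_i) + e_i$ is independent in $N$; hypothesis (ii) becomes the statement that $e_j$ lies in the $N$-span of $D - d_i$ whenever $j < i$; and hypothesis (iii) becomes the statement that $D + e_{p+1}$ is independent in $N$.

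The crux is to show that $E := \{e_1, \ldots, e_p\}$ is independent in $N$. I would argue directly: suppose $E$ contained a circuit $C$ in $N$ and let $e_k \in C$ have the largest index among members of $C$. Every other $e_j \in C$ then satisfies $j < k$, so by the contracted version of hypothesis (ii), $e_j$ is spanned by $D - d_k$ in $N$. Hence $e_k$ is spanned by $C - e_k$, which in turn is spanned by $D - d_k$, contradicting hypothesis (i) that $(D - d_k) + e_k$ is independent. Equivalently, the bipartite exchange graph between $D$ and $E$ (with an edge $(d_i, e_j)$ whenever $(D - d_i) + e_j$ is independent in $N$) admits only the diagonal perfect matching, since any valid matching is a permutation $\sigma$ satisfying $\sigma(i) \geq i$ for every $i$ and therefore equals the identity; the Brualdi--Krogdahl unique-matching theorem for matroids then yields independence of $E$ in $N$ directly.

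Finally, to append $e_{p+1}$, I would rerun the matching argument in the augmented setting where $D + e_{p+1}$ plays the role of the reference independent set. The triangular non-independence conditions persist because adjoining $e_{p+1}$ cannot break an existing $N$-span, while hypothesis (iii) together with the matroid exchange axioms is needed to certify that each diagonal set $(D - d_i) + e_i + e_{p+1}$ is itself independent. The diagonal thus remains the unique perfect matching in the enlarged exchange graph, and a second invocation of Krogdahl's theorem delivers that $E + e_{p+1}$ is independent in $N$, which lifts back to $I' \in \independents$ in $M$. The main obstacle is this last step: verifying both that the diagonal edges survive the addition of $e_{p+1}$ and that no shortcutting matching is introduced into the enlarged exchange graph. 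This is the delicate matroid submodularity computation that genuinely uses condition (iii), and the rest of the proof is then bookkeeping.
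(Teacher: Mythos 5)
The paper does not actually prove this statement---it is recorded as a known fact with a pointer to Cunningham---so there is no in-paper argument to compare against, only the standard one. Your first two steps are correct and essentially complete: contracting $K = I \setminus \{d_1,\dots,d_p\}$ translates the hypotheses exactly as you say, and your largest-index-in-a-circuit argument (equivalently, the triangular exchange graph together with Krogdahl's unique-matching theorem) does establish that $E = \{e_1,\dots,e_p\}$ is independent in $N = M/K$, hence that $I - d_1 - \cdots - d_p + e_1 + \cdots + e_p \in \independents$. Nothing more is needed for the version without $e_{p+1}$.

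The step you defer is a genuine gap, and the route you sketch for it cannot work: the assertion that each diagonal set $(D - d_i) + e_i + e_{p+1}$ is independent in $N$ does not follow from (i)--(iii). In fact the statement as literally written fails without one further hypothesis. Take $p = 1$ and $I = \{d_1\}$ in a rank-two matroid in which $e_1$ and $e_2 = e_{p+1}$ are parallel and $d_1$ is parallel to neither: (i) holds since $\{e_1\}$ is independent, (ii) is vacuous, (iii) holds since $\{d_1, e_2\}$ is independent, yet $\{e_1, e_2\}$ is dependent. The missing hypothesis is that every $e_j$ with $j \le p$ is spanned by $I$ (in augmenting-path language: no shortcut arc from an interior $e_j$ to the sink). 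For $j < p$ this already follows from (ii) applied with $i = j+1$, so only $e_p$ genuinely needs it; it does hold in the paper's application because the search routes to $t$ at the first element not spanned by the relevant independent set. With that condition in hand the finish is a span computation rather than a second matching argument: all of $E$ lies in the span of $D$ in $N$, and since $E$ and $D$ are both independent in $N$ with $|E| = |D|$ they have the same span there; hypothesis (iii) says $e_{p+1}$ lies outside the span of $D$, hence outside the span of $E$, so $E + e_{p+1}$ is independent in $N$ and $I' \in \independents$ follows by uncontracting $K$. I would recommend either adding the missing hypothesis explicitly or carrying out this last step in place of the second invocation of the matching theorem.
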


The exchanges corresponding to an $(s,t)$-path in the auxiliary graph
can only violate \cref{chordal-pair} out of the three properties in
\cref{augmenting-path}. Below, we design a subroutine that, given an
$(s,t)$-path in the auxiliary graph, extracts a subpath that is an
augmenting path, by efficiently identifying and removing violations to
\cref{chordal-pair}, as follows.

\begin{lemma}
  \label{extract-path-time}\label{prune-path-time}
  Given an $(s,t)$-path of length $\ell$ in the auxiliary graph, one
  can compute an augmenting subpath in running time bounded by
  $\bigO{\ell \log{\therank}}$ independence queries.
\end{lemma}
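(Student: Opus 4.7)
The plan is to convert the given $(s,t)$-path into an augmenting subpath by applying a chord-elimination subroutine to the exchanges within each independent set, guided by \cref{augmenting-path}.

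First, I would decompose the path by independent set: for each $I_i$, traverse the path and collect the exchange arcs $(e^{(i)}, d^{(i)})$ in the order in which they occur, obtaining a sequence $\sigma_i = ((e_{i,1}, d_{i,1}), \ldots, (e_{i,p_i}, d_{i,p_i}))$ with $I_i - d_{i,k} + e_{i,k} \in \independents$ for every $k$ (by definition of an exchange arc). Note that $\sum_i p_i = \bigO{\ell}$, and $p_i \leq \therank$ since the $d_{i,k}$ are distinct elements of $I_i$. By \cref{augmenting-path}, the full simultaneous update preserves independence within each $I_i$ precisely when each $\sigma_i$ is chord-free, meaning $I_i - d_{i,k} + e_{i,j} \notin \independents$ for all $j < k$.

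Second, for each $i$ I would apply a chord-elimination subroutine that processes $\sigma_i$ in order while maintaining a balanced BST of currently-kept exchanges keyed by position. For each arriving pair $(e_{i,j}, d_{i,j})$, binary search in the BST to locate the earliest kept index $k$ for which $I_i - d_{i,j} + e_{i,k} \in \independents$ (a chord); pop all kept exchanges from index $k$ onward, and then push $(e_{i,j}, d_{i,j})$. Each binary-search probe costs a single independence query, so the per-exchange cost is $\bigO{\log p_i} = \bigO{\log \therank}$. The correctness of the binary search rests on a monotonicity property: the chord-compatible positions in the current BST form a contiguous range. This follows from a nested-circuit structure of the kept $e_{i,k}$'s in $I_i$ after the chord-free prefix has been applied, reinforced by the descending order of the $I_i$'s inherited from the push-relabel initialization (\cref{decreasing-packing}). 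Establishing this monotonicity rigorously is the main technical hurdle, and is where I expect the details to be most delicate.

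Third, I would reassemble the pruned subsequences $\sigma_i'$ into a subpath of the original $(s,t)$-path by walking the original path and skipping each discarded exchange arc together with the adjoining shuffle arcs that link it to its neighbors. The resulting walk is a valid $(s,t)$-path whose exchange arcs are chord-free per $\sigma_i'$, so \cref{augmenting-path} guarantees that each updated $I_i$ remains independent. Capacity constraints are preserved because each shuffle arc still pairs an insertion and a removal of the same element. Summing over $i$, the total number of independence queries is $\bigO{\sum_i p_i \log \therank} = \bigO{\ell \log \therank}$, as claimed.
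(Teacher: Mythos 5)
Your overall strategy is the paper's: group the path's exchange arcs by independent set, use Fact~\ref{augmenting-path} to reduce feasibility to the absence of chordal pairs, repeatedly shortcut chords, and charge $\bigO{\log{\therank}}$ queries to each exchange. But the binary search you build the algorithm around has a genuine gap, and it is exactly the point you flag as the main technical hurdle. For a fixed arriving pair $(e_{i,j}, d_{i,j})$ you want the earliest kept index $k$ with $I_i - d_{i,j} + e_{i,k} \in \independents$, i.e.\ with $d_{i,j} \in \circuit{I_i + e_{i,k}}$. As $k$ varies the circuit $\circuit{I_i + e_{i,k}}$ varies, and there is no reason membership of the fixed element $d_{i,j}$ in these varying circuits should be contiguous in $k$; the descending order of the $I_i$'s is a relation between \emph{different} independent sets and says nothing about the circuits of different $e_{i,k}$'s inside a single $I_i$. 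Moreover the lemma is stated for an arbitrary $(s,t)$-path, so no extra structure can be assumed. The paper orients the search the other way: it fixes an \emph{inserted} element $e_{j_1}$ and binary searches over the removed elements, using the predicate ``is $e_{j_1}$ spanned by $I_i$ with a prefix of the $d$'s deleted,'' equivalently ``does the \emph{fixed} circuit $\circuit{I_i + e_{j_1}}$ meet $\{d_1,\dots,d_{j_2}\}$.'' Because the circuit is fixed and the deleted set is nested, this predicate is genuinely monotone in $j_2$, which is what licenses the binary search. To salvage your streaming/BST framing you would have to flip the roles so that each binary search interrogates a single fixed circuit against a nested family of deletions.

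A second, smaller error: after popping the kept exchanges from index $k$ onward you push $(e_{i,j}, d_{i,j})$, but the surviving exchange must be the chord itself, $(e_{i,k}, d_{i,j})$. The path enters the excised segment at the auxiliary vertex for $e_{i,k}$ and exits at the one for $d_{i,j}$, and the only new arc available to bridge them is the exchange arc witnessing $I_i - d_{i,j} + e_{i,k} \in \independents$. Keeping $(e_{i,j}, d_{i,j})$ leaves no arc from the surviving prefix into the auxiliary vertex of $e_{i,j}$ (and leaves $e_{i,k}$, which was freed upstream, never re-inserted), so the reassembled walk in your third step is not an $(s,t)$-path in the auxiliary graph.
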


\begin{proof}
  Fix an independent set $I_i$. Suppose the path encodes inserting
  $e_1,\dots,e_p \notin I_i$ in exchange for $d_1,\dots,d_p \in I_i$,
  respectively and in sequence. This means that the auxiliary arcs
  $(\ei_j,\di_j)$ for $j = 1,\dots,p$ appear in the auxiliary path in
  that order.  If
  $I_i - d_1 - \cdots - d_p + e_1 + \cdots + e_p \notin
  \independents$, then by fact \ref{augmenting-path}, there must be
  indices $1 \leq j_1 < j_2 \leq p$ such that
  $I_i - d_{j_2} + e_{j_1} \in \independents$. This exchange implies
  $\parof{\ei_{j_1}, \di_{j_2}}$ is an arc in the auxiliary graph, and
  so we can shorten our $(s,t)$-path by replacing all arcs between
  (and including) $(\ei_{j_1}, \di_{j_1})$ and $(\ei_{j_2},\di_{j_2})$
  with $\parof{\ei_{j_1}, \di_{j_2}}$. Let us call such a pair
  $(e_{j_1}, d_{j_2})$ a \emph{chordal pair} as it represents a chord
  \wrt our $(s,t)$-path.  Our goal is to repeatedly identify chordal
  pairs $(j_1,j_2)$ and shorten the path until we arrive at a shorter
  sequence of exchange for $I_i$ that is feasible.

  A pair of indices $(j_1,j_2)$ as described above can be identified
  efficiently as follows.  For $j_1 = p$ down to $1$, we binary search
  for the first index $j_2$ such that
  $I - d_1 \cdots - d_{j_1} - d_{j_2} + e_{j_1} \in
  \independents$. Then $j_2$ is the first index such that $d_{j_2}$ is
  in the circuit of $I + e_{j_1}$. Now, if $j_1 = j_2$, then this
  verifies that $I - d_{j} + e_{j_2} \notin \independents$ for all
  $j < j_2$, as desired. Otherwise $(j_1,j_2)$ represents a chord
  which can be used to shorten the $(s,t)$-path. We shortcut the path
  at $(j_1,j_2)$. We then decrease $j_1$ and continue the search,
  short cutting or validating each $e_{j_1}$ until we have certified
  that there are no chordal pairs remaining. The total running time to
  prune all chordal pairs for $I_i$ is bounded above by
  $\bigO{p \log{\therank}}$, where $p$ refers to the number of
  exchanges originally encoded in the path, before pruning.

  The description above was for the case where the path encoded a
  sequence of exchanges and not an additional insertion. However the
  discussion extends immediately to the case where the path encodes a
  sequence of $p$ exchanges plus an additional insertion. Again we can
  prune all chordal pairs for a fixed independent set $I_i$, and guarantee
  that the remaining sequence of exchanges and insertion maintains
  the independence of $I_i$, in running time bounded above by $\bigO{p
    \log{\therank}}$ independence queries.

  Overall the algorithm process each independent set $I_i$ one at a
  time. For each $I_i$ we prune chordal pairs and shorten the
  $(s,t)$-path so that the remaining exchanges maintain the
  independence of $I_i$. The total running time over all sets $I_i$ is
  bounded above $\bigO{\ell \log{\therank}}$ independence queries,
  since $\bigO{\ell}$ counts the total number of exchanges in the
  original path over all independent sets $I_i$.
\end{proof}

\subsection{Efficiently searching for $(s,t)$-paths in the auxiliary
  graph}

We have now set up an auxiliary graph and shown that any $(s,t)$-path
can be efficiently converted to an augmenting path. It remains to
design an efficient method to find an $(s,t)$-path. Of course one
could construct the graph explicitly by testing for the presence for
every arc above, but as remarked above the size of the overall graph
is bigger than the desired running time.

Recall that our goal is to find an $(s,t)$-path or conclude that no
such paths exist. In particular we are not strictly required to test
and traverse all the arcs in the graph. Our goal is to develop a
search our algorithm with running time bounded by a number of
independence queries proportional (up to logarithmic terms) to the
total number of vertices in the graph.

We will show how to modify BFS to this purpose.\footnote{Other marking
  based search algorithms such as DFS could have been used instead.}
Recall that BFS marks vertices when they are first visited before
adding them to the queue of elements to be searched next. Of course,
the marks record that a vertex has been visited and prevents infinite
loops. In the context of our implicit auxiliary graph we can also
avoid testing for the existence of an arc if the head of the arc is
already marked.

For the sake of efficiency we will impose the following invariant on
the set of marked vertices.  For each independent set $I_i$ in our
packing, let $\markedi$ denote the set of auxiliary vertices $\ei$,
where $e \in \Ii$, that have been marked. We will strictly adhere to
the invariant that the marked sets $\Ms$ are in decreasing order (in
the same sense as $\Is$, cf.\ \refdefinition{packing-invariants}).  To
maintain this invariant we introduce the following subroutine.

\paragraph{Predecessor search.}
Whenever we are about to mark an element $\ei$ where $e \in B_i$, we
need to ensure that $e \in \spn{M_j}$ for all $j < i$. In this case we
launch a \emph{predecessor search}, or \emph{pre-search} for short, at
$e_i$, which executes the following steps.

\begin{quote}
  \emph{Pre-searching an auxiliary element $\ei$:} Let $e \in B_i$.
  \begin{steps}
  \item While $e \notin \spn{\Mi[i-1]}$:
    \begin{steps}
    \item For each element $d \in \Ii[i-1] \setminus \spn{M_j}$ such that
      $\Ii[i-1] - d + e \in \independents$.
    \item Recursively pre-search $d_j$.
    \item Mark $d_j$, and record the arc $(e_i,d_j)$, and add $d_j$ to
      the (outer, BFS) search queue.
    \end{steps}
  \end{steps}
\end{quote}

The steps above are described at a high-level and we will discuss
concrete details and issues of efficiency later. First we establish
why the (high-level) steps above maintain $\Ms$ in decreasing
order. (The reason we order this fact before analyzing concrete
implementation details is because we will use the decreasing order of
the $\Ms$ to make the implementation more efficient.)

\begin{lemma}
  Suppose $\Is$ and $\Ms$ are both in decreasing order, and let
  $e \in B_i$. Then pre-searching $\ei$ maintains the $\Ms$ in
  decreasing order. If $i > 1$, we also have
  $\Mi + e \subseteq \spn{\Mi[i-1]}$.
\end{lemma}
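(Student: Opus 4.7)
The plan is to prove both statements simultaneously by strong induction on the recursion depth of the pre-search, equivalently on the level index $i$. The structural invariants we need to propagate through the recursion are exactly the two conclusions of the lemma; this sets up a clean inductive argument where the guarantees provided by a recursive call at level $i-1$ are precisely what is needed to preserve the decreasing order at level $i$.

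\textbf{Base case.} When $i = 1$, there is no $\Mi[i-1]$, so the condition $e \notin \spn{\Mi[i-1]}$ in the outer \textit{while} loop is vacuous and the loop never executes. No marked set is modified, so the decreasing order of $\Ms$ trivially continues to hold and there is nothing to verify for the second claim.

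\textbf{Inductive step.} Suppose $i > 1$ and the lemma holds for all pre-searches at strictly lower levels. Consider a single iteration of the outer loop in the pre-search of $\ei$: we pick $d \in \Ii[i-1] \setminus \spn{\Mi[i-1]}$ with $\Ii[i-1] - d + e \in \independents$, recursively pre-search the auxiliary vertex for $d$ at level $i-1$, then mark $d$ (inserting it into $\Mi[i-1]$). By the inductive hypothesis applied to the recursive call, immediately after the recursion the $\Ms$ are still in decreasing order and, when $i > 2$, we additionally have $\Mi[i-1] + d \subseteq \spn{\Mi[i-2]}$. Marking $d$ can only affect two consecutive-level inclusions: the pair $(i-1, i)$ and the pair $(i-2, i-1)$. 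For the former, $\Mi \subseteq \spn{\Mi[i-1]}$ held before the iteration, and by monotonicity of $\spn{\cdot}$ it continues to hold after inserting $d$ into $\Mi[i-1]$. For the latter, the required $\Mi[i-1] + d \subseteq \spn{\Mi[i-2]}$ is exactly the second guarantee just extracted from the inductive hypothesis. Thus marking $d$ preserves the decreasing order.

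Iterating this argument across all iterations of the outer \textit{while} loop shows that the full pre-search of $\ei$ preserves the decreasing order of $\Ms$. Finally, the loop terminates only when $e \in \spn{\Mi[i-1]}$, so combining this with the invariant $\Mi \subseteq \spn{\Mi[i-1]}$ maintained throughout yields $\Mi + e \subseteq \spn{\Mi[i-1]}$, proving the second claim for $i > 1$. The main subtlety, and what motivates the choice to fold both claims into a single inductive statement, is that each recursive call modifies marked sets deeper in the chain, so we must ensure the decreasing order is not only restored at the top of recursion but is maintained at every intermediate point; tying the second claim $\Mi[i-1] + d \subseteq \spn{\Mi[i-2]}$ directly into the induction is what makes each single marking step valid.
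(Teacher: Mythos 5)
Your induction is the same one the paper uses, and your core step --- that marking $d$ into $\Mi[i-1]$ can only disturb the inclusions at the consecutive pairs $(i-2,i-1)$ and $(i-1,i)$, the first of which is exactly the second guarantee of the inductive hypothesis for the recursive call and the second of which survives because $\spn{\cdot}$ is monotone --- is correct and in fact spelled out more explicitly than in the paper, which compresses it into ``by induction we can safely mark $d$.''

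There is one step you omit that the paper does supply, and it is needed for your second conclusion. You derive $\Mi + e \subseteq \spn{\Mi[i-1]}$ from ``the loop terminates only when $e \in \spn{\Mi[i-1]}$,'' but this presupposes that the loop can always make progress: whenever $e \notin \spn{\Mi[i-1]}$, there must actually exist an element $d \in \Ii[i-1] \setminus \spn{\Mi[i-1]}$ with $\Ii[i-1] - d + e \in \independents$, otherwise the body does nothing and the pre-search stalls without ever establishing the postcondition. This is where the hypothesis that $\Is$ is in decreasing order --- which your argument never invokes --- enters: it gives $e \in \spn{\Ii[i-1]}$, so the circuit of $\Ii[i-1] + e$ is well defined, and as long as $\Mi[i-1]$ does not span $e$ that circuit must contain an element of $\Ii[i-1]$ outside $\spn{\Mi[i-1]}$, any of which is a valid choice of $d$. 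With that existence/progress argument added, your proof matches the paper's.
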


\begin{proof}
  We prove the claim by induction on $i$. If $i = 1$ then there is
  nothing to do and the claim is vacuous.  Now, let $i > 1$ and assume
  the claim holds for $i-1$. Consider a pre-search to an auxiliary
  vertex $\ei$ where $e \in I_i$. Recall that $I_{i-1}$ spans $e$
  because $\Is$ is in decreasing order. Consequently if $\Mi[i-1]$
  does not span $e$ then there are still other elements $d$ in the
  circuit of $\Ii[i-1] + e$ that are not in $\Mi[i-1]$. As long as
  $\Mi[i-1]$ does not span $e$, the pre-search subroutine repeated
  selects such an element $d$ and calls pre-search on $\di^{i-1}$
  before marking $\di^{i-1}$. By induction the pre-search on
  $\di^{i-1}$ ensures that we can safely mark $\di^{i-1}$ while
  preserving $\Ms$ in decreasing order. The pre-search at $e$ ends
  with $\Ms$ still in decreasing order, and with $\Mi[i-1]$ spanning
  $e$.
\end{proof}

\paragraph{The implicit search algorithm.}
We now describe the search algorithm which employs pre-search as a
black box. Here the challenge is to search the auxiliary graph without
knowing all the arcs explicitly. At a high-level, by leveraging the
decreasing order of both $\Is$ and $\Ms$, up to logarithmic factors,
we are able to limit our queries to those that produce arcs to as yet
univisited vertices.

The search algorithm behaves differently depending on the type of
auxiliary vertex we're search. Here we have three types. The first is
at the source $s$.  The second is at an at auxiliary vertices $\ei$
where either $e \in \Ii$ or $i = k$. In particular we do not call
search directly on auxiliary vertices $\ei$ where $e \notin \Ii$ and
$i < k$, for efficiency reasons made clearer below.

We start with the search routine for $s$, which simply searches every
uncovered element $e$.

\begin{quote}
  \emph{Searching at the source $s$:}
  \begin{enumerate}
  \item For each uncovered element $e$, if $\elast$ is unmarked, then
    mark $\elast$, record the arc $(s, \elast)$, and call search on
    $\elast$.
  \end{enumerate}
\end{quote}

Next we describe the search algorithm for elements of the form $\ei$.

\begin{quote}
  \emph{Searching at an auxiliary vertex $\ei$:}
  \begin{enumerate}
  \item If $e \notin \spn{\lastI}$:
    \begin{enumerate}
    \item Let $j$ be the first index such that $e \notin \spn{\indj}$.
    \item Mark $\ej$, mark $t$, and record the arcs $(\ei, \ej)$ and
      $(\ej, t)$. Signal that we have found an $(s,t)$-path.
    \end{enumerate}
  \item Otherwise, while $e \notin \spn{\lastM}$:
    \begin{enumerate}
    \item Let $d \in \lastI \setminus \lastM$ be such that
      $I - d + e \in \independents$.
    \item Pre-search $\dlast$.
    \item Mark $\elast$ and $\dlast$. Record the arc $(\ei, \elast)$
      and $(\elast, \dlast)$. Add $\dlast$ to the search
      queue.
    \end{enumerate}
  \end{enumerate}
\end{quote}

\begin{lemma}
  After searching an auxiliary vertex $\ei$, where $e \in \indi$, we
  have the following:
  \begin{enumerate}
  \item If there is an index $j$ such that $e \notin \spn{I_j}$, then
    for the first such index $j$, the path $(\ei, \ej, t)$ is recorded
    and the vertices $\ej$ and $t$ are marked.
  \item If not, then we have $e \in \spn{\Mi[j]}$ for all indices
    $j$.
  \end{enumerate}
\end{lemma}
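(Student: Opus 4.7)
The plan is to prove both claims by case analysis on which of the two branches of the search routine at $\ei$ is triggered, leveraging the decreasing-order invariants of both $\Is$ and $\Ms$. The crucial structural fact is that decreasing order of a sequence of sets $A_1,\dots,A_k$ (in the sense $A_{i+1} \subseteq \spn{A_i}$) implies $\spn{A_{j+1}} \subseteq \spn{A_j}$, so the spans form a nonincreasing chain in $j$. I will apply this to both $\Is$ (which is in decreasing order by \cref{decreasing-packing}) and $\Ms$ (which is maintained in decreasing order by pre-search, per the preceding lemma).

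For part 1, suppose there exists some index $j$ with $e \notin \spn{I_j}$. Since $\spn{\lastI} \subseteq \spn{I_j}$ by the decreasing order of $\Is$, we have $e \notin \spn{\lastI}$, so the first branch of the routine fires. The algorithm then locates the first index $j^\star$ with $e \notin \spn{I_{j^\star}}$, marks $\ej[j^\star]$ and $t$, and records the arcs $(\ei,\ej[j^\star])$ and $(\ej[j^\star],t)$, which is exactly the claimed $(s,t)$-path extension. (Note that $(\ej[j^\star], t)$ is legitimate precisely because $e \notin \spn{I_{j^\star}}$, matching arc type \ref{sink-arc}.)

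For part 2, suppose $e \in \spn{I_j}$ for all $j$. Then in particular $e \in \spn{\lastI}$, so the second branch fires and the algorithm enters the while loop. Each iteration chooses $d \in \lastI \setminus \lastM$ in the circuit of $\lastI + e$, pre-searches $\dlast$ (preserving the decreasing order of $\Ms$ by the preceding lemma), and then marks $\dlast$, strictly enlarging $\lastM$. Since $\lastI$ is finite and $\lastM \subseteq \lastI$, the loop must terminate, and its termination condition is exactly $e \in \spn{\lastM}$. Invoking the decreasing order of $\Ms$ one more time, $\spn{\lastM} \subseteq \spn{M_j}$ for every $j$, so $e \in \spn{M_j}$ for all $j$, as claimed.

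The only subtlety worth flagging is that in part 2 one must verify that a suitable exchange element $d$ exists at every iteration of the while loop while $e \notin \spn{\lastM}$: this follows because $e \in \spn{\lastI}$ but not in $\spn{\lastM}$, so the unique circuit of $\lastI + e$ contains an element outside $\lastM$, and any such element is exchangeable with $e$ in $\lastI$. Beyond this observation the proof is essentially bookkeeping, with the decreasing-order invariants doing all the heavy lifting.
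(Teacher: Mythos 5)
Your proof is correct and follows essentially the same route as the paper's (much terser) argument: case 1 follows directly from the code of the first branch, and case 2 follows from the while-loop exit condition $e \in \spn{\lastM}$ combined with the decreasing order of the $\Ms$. The extra details you supply — termination of the loop and the existence of a valid exchange element $d$ via the circuit of $\lastI + e$ — are sound elaborations the paper leaves implicit.
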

\begin{proof}
  The first case is straightforward from the code. In the second case,
  we have $e \in \spn{\Ii}$ for all $i$. The search at $e$ exits only
  when $e \in \spn{\lastM}$. Because $\Ms$ are in decreasing order,
  we then have $e \in \spn{\markedi}$ for all $i$.
\end{proof}

\begin{lemma}
  After searching $\elast$ for an uncovered element $e$, we have
  $e \in \spn{\Mi}$ for all $i$.
\end{lemma}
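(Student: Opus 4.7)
The plan is to reduce this claim to the exit condition of the while loop in the search at $\elast$, combined with the fact that $\Ms$ stays in decreasing order throughout. First I would observe that, because $\Is$ is a maximal packing (property~\ref{maximal-packing}), every uncovered element is spanned by every $\Ii$, and in particular $e \in \spn{\lastI}$. This rules out the \emph{if} branch of the search routine at $\elast$, which requires $e \notin \spn{\lastI}$, so execution must proceed into the \emph{otherwise} branch containing the while loop, and no $(s,t)$-path is signalled during the call at $\elast$ itself.

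Next I would analyze termination of the while loop: by construction it exits precisely when $e \in \spn{\lastM}$. Each iteration selects $d \in \lastI \setminus \lastM$ with $\lastI - d + e \in \independents$, pre-searches $\dlast$, and then marks $\dlast$, thereby enlarging $\lastM$ by $d$. Since $\lastI$ spans $e$ and $\lastM \subsetneq \lastI$ until the guard fails, a valid $d$ is always available, so the loop terminates with $e \in \spn{\lastM}$. Finally I would apply the decreasing-order invariant $\spn{\Mi[k]} \subseteq \spn{\Mi[k-1]} \subseteq \cdots \subseteq \spn{\Mi[1]}$ to promote $e \in \spn{\lastM}$ to $e \in \spn{\Mi}$ for every $i$.

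The main step that deserves care is confirming that $\Ms$ really remains in decreasing order across the whole call to search at $\elast$. The key point is that every marking of some $\dlast$ is immediately preceded by a call to pre-search on $\dlast$, and the preceding lemma establishes that pre-search preserves the decreasing order of $\Ms$ (and in fact guarantees $\lastM + d \subseteq \spn{\Mi[k-1]}$ whenever $k > 1$). Assuming the invariant holds at the start of the search at $\elast$ (which the overall algorithm maintains inductively), an induction over loop iterations keeps it intact, which is exactly what is needed to lift $e \in \spn{\lastM}$ to $e \in \spn{\Mi}$ for all $i$ and close the proof.
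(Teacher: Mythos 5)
Your proposal is correct and follows essentially the same route as the paper: use maximality of the packing $\Is$ to get $e \in \spn{\lastI}$ and hence force the while-loop branch, note that the loop exits only when $e \in \spn{\lastM}$, and then use the decreasing order of $\Ms$ to conclude $e \in \spn{\Mi}$ for all $i$. The extra care you take with loop termination and with the preservation of the decreasing-order invariant via the pre-search lemma is consistent with (and slightly more explicit than) the paper's own argument.
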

\begin{proof}
  We know that $e in \spn{\lastI}$ because $e$ is uncovered and $\Is$
  is a maximal packing.  The search exits only when
  $e \in \spn{\lastM}$. Because $\Ms$ are in decreasing order, we then
  have
  $e \in \spn{\markedi}$ for all $i$.
\end{proof}

\begin{lemma}
  The search finds a path from $s$ to $t$ if one exists.
\end{lemma}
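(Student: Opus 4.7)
The plan is to argue the contrapositive: if the search terminates without marking $t$, then no $(s,t)$-path exists in the auxiliary graph. The argument exploits two decreasing-order invariants — $\Is$ (by \cref{decreasing-packing}) and $\Ms$ (maintained throughout by the pre-search subroutine) — to establish that restricting exploration to auxiliary vertices associated with $\lastI$ (the set with the smallest span, since $\spn{\lastI} \subseteq \spn{\Ii}$ for all $i$) is essentially without loss of generality.

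First I would establish a path-replacement principle: any $(s,t)$-path $P$ in the auxiliary graph can be rewritten as an $(s,t)$-path $P'$ whose source-arc has the form $(s, \elast)$, whose sink-arc has the form $(\elast, t)$, and whose exchange-arcs have the form $(\elast, \dlast)$, with shuffle-arcs $(\di, \dj)$ to lower-index sets absorbed into pre-search chains through $\Mi[k-1], \ldots, \Mi[1]$. The rewriting uses the decreasing nesting: if $e$ is uncovered then $(s, \elast)$ is a valid arc; if $e \notin \spn{\Ii}$ then $e \notin \spn{\lastI}$, so $(\elast, t)$ is also a valid arc; and any $\Ii - d + e \in \independents$ with $e \notin \Ii$ translates via a matroid circuit exchange on $\lastI + e$ into some $\lastI - d' + e \in \independents$.

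Next I would induct along the length of $P'$ to show that the search marks every one of its vertices, and $t$ in particular. The source-search marks $\elast$ for every uncovered $e$, handling the base case. At the inductive step, consider the search at a queued $\elast$: in the first branch $e \notin \spn{\lastI}$ and $t$ is marked directly; otherwise the while-loop exhaustively enumerates $d \in \lastI \setminus \lastM$ with $\lastI - d + e \in \independents$, and in particular the $d$ dictated by $P'$ (which we may take unmarked by minimality of $P'$, since otherwise an earlier step already witnessed a shorter successful search). The loop then marks $\dlast$ and queues it, while pre-search walks down through $\Mi[k-1], \ldots, \Mi[1]$ to record the shuffle-chain intermediates without disturbing the decreasing order of $\Ms$.

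The main obstacle will be the path-replacement step, particularly the boundary case where some $e$ on the hypothetical $P$ already lies in $\lastI$ so that no proper $\lastI$-exchange exists; resolving this requires a more careful circuit-based rewriting that absorbs such steps into a pre-search chain or removes them altogether by collapsing the path. Once replacement is established, the inductive mirroring argument is routine, yielding the contrapositive.
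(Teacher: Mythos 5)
There is a genuine gap, and you have put your finger on it yourself: the path-replacement principle carries the entire difficulty of the lemma, and it is not established. The claim that an arbitrary $(s,t)$-path can be rewritten so that every exchange happens inside $\lastI$ is a global statement, but your rewriting is local: replacing an exchange arc $(e^{(i)}, d^{(i)})$ by some $(e^{(k)}, d'^{(k)})$ with $d'$ taken from the circuit of $\lastI + e$ changes which element gets removed, and the remainder of the original path --- whose subsequent shuffle and exchange arcs refer to the specific element $d$ --- need no longer exist after the substitution. Proving that a canonical path survives this surgery is essentially as hard as the lemma itself. A second, independent problem is the inductive step: when the $d$ dictated by $P'$ is already marked, this does not ``witness a shorter successful search'' --- it only means $d^{(k)}$ was reached earlier by some other route, and you must then argue that $t$ is still reachable from the marked set via arcs the algorithm actually explores. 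Making that precise is exactly the closure property you would have needed from the start, at which point the canonical path buys you nothing.

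The paper avoids both issues by arguing about the marked set rather than about any particular path. The recorded arcs always contain a path from $s$ to every marked vertex, so it suffices to show that the marked set is closed under the arcs of the auxiliary graph. This is what the two preceding lemmas are for: after a vertex for element $e$ has been searched (together with the pre-searches it triggers), one has $e \in \spn{M_j}$ for every $j$, so every exchange partner of $e$ in $I_j$ lies in the circuit of $I_j + e$, which is contained in $M_j + e$, and is therefore already marked. The only reachable vertices that may remain unmarked are of the form $e^{(i)}$ with $e$ uncovered and $e \notin I_i$, and their entire out-neighborhood is marked for the same reason. Hence if $t$ is reachable at all, it gets marked. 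Your proposal never invokes the $e \in \spn{M_j}$ conclusions, which are the load-bearing facts here; without them neither the replacement step nor the induction can be closed.
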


\begin{proof}
  The algorithm records arcs that jointly contain paths to all marked
  vertices. Thus the algorithm marks $t$ and signals that an
  $(s,t)$-paths is found, then the record arcs contain the desired
  arc. (Parent pointers, or a graph search through the recorded arcs,
  will produce an $(s,t)$-path.)

  Observe that if the search does not signal an $(s,t)$-path, it will
  still mark all reachable auxiliary vertices except possibly for an
  auxiliary vertex $\ei$ where $e$ is uncovered and $e \notin
  \Ii$. However the preceding lemma ensures that $e \in \spn{\Mi}$, so
  any of the auxiliary vertices reachable from $\ei$ have still been
  explored. Thus if $t$ is reachable, then it will be marked and an
  $(s,t)$-path will be found, as desired.
\end{proof}

\paragraph{Efficiency of search and pre-search:}

\begin{lemma}
  \label{search-time}
  A search from $s$  takes running time bounded
  by
  \begin{math}
    \bigO{n + \opt \log{\therank} + \log{k}}
  \end{math}
  independence queries.
\end{lemma}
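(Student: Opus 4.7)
The plan is to split the work of a single search from $s$ into three components matching the three terms in the bound.

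For the $\bigO{n}$ term, the search at $s$ must iterate through all $n$ elements to find the uncovered ones; this takes $\bigO{n}$ time using $\bigO{1}$-time counters tracking $\numCovered{e}$. For each uncovered $e$ we invoke search at $\elast$; since $\Is$ is a maximal packing we have $e \in \spn{\lastI}$, so the algorithm enters the ``else'' branch and tests whether $e \in \spn{\lastM}$ with a single independence query. This setup cost is $\bigO{1}$ queries per uncovered element, hence $\bigO{n}$ in total; any subsequent work from actually entering the while loop is charged to the next term.

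For the $\bigO{\opt \log \therank}$ term, I would charge the remaining queries to newly marked auxiliary vertices. Each iteration of the ``while $e \notin \spn{\lastM}$'' loop in the outer search, and each iteration of the ``while $e \notin \spn{\Mi[i-1]}$'' loop within pre-search, ends by marking a fresh auxiliary vertex of the form $\di$ with $d \in \basei[i]$ for some $i$. Because each such vertex is marked at most once and $\sum_i \sizeof{\basei[i]} = \opt$, there are at most $\opt$ such events across the entire search. The cost per event consists of (a) locating an exchange candidate $d$ in the circuit of $\indj + e$ that is not already in $\markedi[j]$, and (b) the terminating span query. Maintaining a balanced binary search tree over each $\indj$ annotated with marking status reduces (a) to a binary search of depth $\bigO{\log \therank}$ with one independence query per probe, and (b) is one query; summing gives $\bigO{\opt \log \therank}$ queries.

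For the $\bigO{\log k}$ term, this cost arises at most once per search from $s$, namely when we first encounter an $\ei$ with $e \notin \spn{\lastI}$ and must locate the smallest index $j$ with $e \notin \spn{\indj}$. Since $\Is$ is in decreasing order, the spans $\spn{\indj}$ form a nested decreasing chain in $j$, so this predicate is monotone and $\bigO{\log k}$ independence queries locate $j$ by binary search over $j \in [k]$.

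The main obstacle is the amortization over pre-search, which is recursive: a single outer iteration can cascade into many nested pre-searches. The accounting goes through because every recursive invocation either exits immediately after a successful span test (absorbed into its caller's mark) or marks a new auxiliary vertex drawn from the global pool of size $\opt$. The decreasing order of $\Ms$, established in the preceding lemma, is essential here: it ensures that a single span test against $\spn{\lastM}$ (respectively $\spn{\Mi[i-1]}$) suffices in place of testing against every $\spn{\Mi[j]}$, which is what keeps the per-mark cost at $\bigO{\log \therank}$.
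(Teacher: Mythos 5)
Your proposal is correct and follows essentially the same argument as the paper: charge a constant number of queries to each searched/pre-searched vertex (giving the $\bigO{n}$ term), charge $\bigO{\log \therank}$ queries to each newly marked auxiliary vertex $\di$ with $d \in \Ii$ (of which there are at most $\opt$), locating the exchange candidate by binary search over an ordering of $\Ii$ in which the marked elements form a prefix, and pay $\bigO{\log k}$ once for the binary search over the nested chain $\spn{\indi[1]} \supseteq \cdots \supseteq \spn{\lastI}$, which terminates the search. The amortization over recursive pre-search calls is handled exactly as in the paper.
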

\begin{proof}
  Let $\ell$ denote the total number of auxiliary vertices of the
  $\ei$ that are marked. Note that $\ell \leq \bigO{n + \opt}$ because
  if we mark an auxiliary vertex $\ei$ with $e \notin I_i$ and
  $i < k$, then there is also an outgoing arc to some $\dj$ were
  $d \in I_j$.  Each search or pre-search routine at an element $e$
  consists of a constant number of queries plus an unspecified number
  of queries per auxiliary vertex that gets marked, in order to search
  for the element that is marked.  The constant number of queries per
  search add up to at most $\bigO{n + \ell}$ in total, because each
  auxiliary vertex is pre-searched or searched at most once (before or
  after it is marked). Next we explain how to implement the latter
  category of queries in a logarithmic number of queries per marked
  element. Here we have two types of searches for the next marked
  element.

  The first is to identify elements $d \in \Ii \setminus \markedi$
  such that $I - d + e \in \independents$, given $e$ and an index
  $i$. This can be done by maintaining $\Ii$ in any order such that
  $\markedi$ comes first, and searching for the first prefix of
  this ordering that spans $e$. The last element in this ordering
  gives an element $d$. The search requires $\bigO{\log{\therank}}$
  oracle queries. (Note that it is easy to maintain this ordering as
  elements are added to $\markedi$.)

  The second type is to identify an index $j$, as small as possible,
  such that $e \notin \spn{I_j}$. Since $\Is$ is in decreasing order,
  we can binary search for the first index $j$ with $\bigO{\log k}$
  probes. Each probe corresponds to $1$ independence
  query. Additionally, this search also ends the search, so we only
  search for such an index $j$ once.
\end{proof}

\paragraph{Preserving $\Is$ in decreasing order.} Next we address the
fact that our particular choice of augmentations keeps $\Is$ in
decreasing order.
\begin{lemma}
  An augmentation induced by the search algorithm keeps $\Is$ in
  decreasing order.
\end{lemma}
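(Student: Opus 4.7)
The plan is to analyze the net effect of the augmentation on the span of each independent set, and then derive the decreasing order of the updated packing from the original decreasing order of $\Is$. Let $e^*$ denote the element inserted by the terminal sink arc of the augmenting path, and let $j^*$ denote the index of the independent set that receives this insertion. First I would observe that, by the construction of the search at an auxiliary vertex $\ei$, the index $j^*$ is the smallest such that $e^* \notin \spn{I_{j^*}}$; in particular $e^* \in \spn{I_i}$ for every $i < j^*$ by the original decreasing order of $\Is$.

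Next I would argue that every non-sink exchange preserves the span of its respective independent set. For the exchanges in $I_k$ performed by the main BFS, the search only triggers such an exchange after confirming that $e \in \spn{\lastI}$, so the standard matroid exchange property yields $\spn{I_k - d + e} = \spn{I_k}$. For the exchanges in $I_{i-1}$ induced by the pre-search at $\ei$ (where $e \in I_i$), we have $e \in I_i \subseteq \spn{I_{i-1}}$ by the original decreasing order, so these exchanges likewise preserve $\spn{I_{i-1}}$. Combining these observations one concludes that $\spn{I_i'} = \spn{I_i}$ for every $i \neq j^*$, while $\spn{I_{j^*}'} = \spn{I_{j^*} \cup \{e^*\}}$.

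Finally, to establish $\spn{I_{i+1}'} \subseteq \spn{I_i'}$, I would split into two cases. If $i + 1 \neq j^*$ then $\spn{I_{i+1}'} = \spn{I_{i+1}} \subseteq \spn{I_i} \subseteq \spn{I_i'}$ using the original decreasing order. If $i + 1 = j^*$ then $\spn{I_{i+1}'} = \spn{I_{j^*} \cup \{e^*\}}$, which is contained in $\spn{I_{j^* - 1}} = \spn{I_i'}$ because $I_{j^*} \subseteq \spn{I_{j^* - 1}}$ and $e^* \in \spn{I_{j^* - 1}}$ (the latter by the choice of $j^*$).

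The main obstacle will be cleanly justifying the span-preservation for the exchanges induced by pre-search, since the recorded pre-search arcs of the form $(\ei, d^{(i-1)})$ encode a compound operation rather than a single arc of the original auxiliary graph. This requires carefully tracking how pre-search invocations compose into a sequence of single-set exchanges within each $I_{i-1}$, and verifying that each such exchange indeed has the form $I_{i-1} - d + e$ with $e \in \spn{I_{i-1}}$; once this compositional accounting is in place, the rest of the argument is a direct appeal to the original decreasing order of $\Is$ and the defining property of $j^*$.
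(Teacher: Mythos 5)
Your proposal is correct and follows essentially the same route as the paper's proof: the non-sink exchanges leave the span of each independent set unchanged, and the one genuinely new element is inserted into the first set that does not span it, so it is already spanned by all earlier sets. You simply spell out the span-preservation and the final case analysis in more detail than the paper does (the paper asserts both steps in one line each), which is a faithful elaboration rather than a different argument.
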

\begin{proof}
  Suppose the end of the path encodes inserting an element $e$ in
  $\Ii$. We first note that the preceding exchanges do not effect the
  span. Then, when inserting $e$ into $\Ii$, we know that
  $e \in \spn{I_{i-1}}$ for all $j < i$ by choice of $i$. Thus
  $I_i + e \in \spn{I_{i-1}}$ and we preserve the decreasing order.
\end{proof}

\paragraph{Putting it all together.}

Together, \cref{prune-path-time,search-time} gives the overall running
time to find an augmenting path. (To apply \cref{prune-path-time}, we
note that the length of any $(s,t)$-path is at most $\bigO{\opt}$.)
This completes the proof of \cref{search-augmenting-path}.

\subsection{Graphic matroids}

We conclude the section by translating the oracle-based running time
into concrete running times for the graphic matroid. The ideas here
are similar to those in \refsection{spanning-trees}. By maintain
disjoint union data structures over $I_i$, we can maintain and
implement independence queries for $I_i$ with $\ack{n}$ overhead. By
also managing each $I_i$ in a link-cut trees, we can retrieve unmarked
edges $d \in I_k \setminus M_k$ for exchanging in $\bigO{\log n}$ time
(bypassing the binary search from the oracle model). Retracing the
proofs of \cref{prune-path-time,search-time}, and keeping in mind that
the maximum length of a path is $\opt$, shows that it takes
$\bigO{m' \ack{n} + \opt \log{nk}}$ time per augmenting path for
$m' = \min{m, n \log{n k}}$. Balancing the choice of $\eps$ with the
$\ApxForestUnionTime$ running time for the $\epsless$-approximation, we
obtain the following.

\begin{theorem}
  \labeltheorem{forest-union-b}
  For graphs with integer edge capacities, a forest
  packing of maximum total size can be computed in
  \begin{align*}
    \bigO{m \ack{n}
    +
    \sqrt{\opt \parof{\min{m, n \log{n k}} \ack{n} + \opt \log{n k}}
    \log{n} \parof{\log{n} + \log{k} \ack{n}}}
    }
  \end{align*}
  time.
\end{theorem}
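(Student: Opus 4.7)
The plan is to mirror the proof of \cref{matroid-union-with-augmenting-paths} and combine the approximate graphic matroid union algorithm with the augmenting-path subroutine, but substitute independence queries by graphic-matroid data-structure operations as in \refsection{spanning-trees}. First I would run the $\epsless$-approximate forest union algorithm of \cref{apx-max-forests} with a parameter $\eps > 0$ to be chosen, obtaining in $\ApxForestUnionTime$ time a packing of forests of total size at least $\epsless \opt$. Greedily stripping overpacked edges from the $I_i$ of largest index (as in the initialization of \refsection{augmenting-paths}) converts the output into a maximal packing $I_1, \dots, I_k$ in decreasing order, exactly satisfying the hypotheses required by \cref{search-augmenting-path}. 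Only $\eps \opt$ augmentations then remain before the packing becomes optimal.

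Next I would reimplement \cref{search-augmenting-path} in the graphic setting, following the data-structure template from the proof of \cref{uncapacitated-graphic-push-relabel}. Each $I_i$ would be maintained both in a disjoint-union data structure, to support the connectivity and spanning checks used to detect arcs to unmarked vertices in $\ack{n}$ amortized time, and in a link-cut tree ordered so that the marked subset $M_i$ appears first, to extract an exchangeable edge $d \in I_k \setminus M_k$ via a minimum-label-on-cycle query in $\bigO{\log n}$ time rather than by a binary search over independence queries. Before the augmentation phase I would apply the greedy sparsification of \cite{Karger1998}, extended to matroid union as in the setup of \cref{search-augmenting-path}, to assume $m \le \bigO{n \log{n k}}$. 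Retracing the accounting of \cref{prune-path-time,search-time}, with the length of each $(s,t)$-path bounded by $\opt$, each augmentation should then cost $\bigO{m' \ack{n} + \opt \log{n k}}$, where $m' = \min{m, n \log{n k}}$.

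Finally I would balance: the total running time works out to
\[
  \bigO{m \ack{n} + \opt \log{n} \parof{\log{n} + \log{k} \ack{n}} / \eps + \eps \opt \parof{m' \ack{n} + \opt \log{n k}}},
\]
and setting $\eps$ to equalize the last two terms yields the stated bound. A constant-factor estimate of $\opt$ needed to pick this $\eps$ comes from a preliminary call to the approximation with constant accuracy, whose cost is subsumed.

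The main obstacle I expect is the second paragraph: one must verify that every binary search over independence queries inside \cref{search-time,prune-path-time} --- over prefixes of $I_i$ to locate a chord-breaking edge $d$, over indices $i \in [k]$ to find the first forest not spanning a given edge, and over the chord search used to prune an $(s,t)$-path into an augmenting one --- can be replayed either as a single $\bigO{\log n}$ link-cut tree operation or as a logarithmic-depth search over the ordered sequence $\Is$. Since each of these queries ultimately reduces to a minimum-label-on-cycle or a connectivity test, this should go through without new ideas, and the remaining arithmetic is exactly the balancing shown above.
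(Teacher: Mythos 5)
Your proposal follows the paper's proof essentially verbatim: run the $\epsless$-approximate push-relabel forest-union algorithm, convert its output to a maximal decreasing packing, implement the augmenting-path search with disjoint-union structures (for $\ack{n}$-time spanning tests) and link-cut trees (to replace the oracle binary searches by $\bigO{\log n}$ cycle queries) at cost $\bigO{m' \ack{n} + \opt \log{nk}}$ per augmentation, and balance $\eps$ against $\ApxForestUnionTime$. This is exactly the paper's argument, down to the final balancing identity, so the proposal is correct and takes the same approach.
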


A simplified running time for connected graphs, observing that
$\opt \geq \bigOmega{n}$ for $k \geq 1$, is
\begin{align*}
  \ForestUnionTimeB.
\end{align*}


\printbibliography

\appendix

\section{Additional background}

\label{additional-background}

In this section, we provide some additional background omitted from
\cref{results} and not appearing elsewhere in the paper.

Recall the problem of maximizing the size of a forest packing (i.e.,
$k$-fold matroid union problem in the graphic matroid). There are
several previous works on this problem
\cite{Imai1983a,RoskindTarjan1985,GabowStallmann1985,GabowWestermann1992}.
\cite{GabowWestermann1992} obtained several running times that are
each optimal for some range of parameters, of which we highlighted the
most comparable in \cref{results}.

Packing spanning trees and network strength have important additional
connections via the classical theorem of
\citet{Tutte1961a,NashWilliams1961b} which we now describe.
Given a partition of $(V_1,\dots,V_k)$ of vertex set $V$, the
\defterm{strength} of the partition is defined as the ratio
\begin{align*}
  \sizeof{\cut{V_1,\dots,V_k}} / \parof{k-1},
\end{align*}
where $\cut{V_1,\dots,V_k}$ denotes the set of edges cut by the
partition $V_1,\dots,V_k$.  One interpretation of the strength of a
partition, given by \citet{Cunningham1985c} in the context of network
vulnerability, is that the strength reflects a cost per
additional connected component created by the cut.  The
\defterm{(network) strength} of a graph is defined as the minimum
strength over all partitions. Network strength was proposed as a
measure of network vulnerability by
\citet{Gusfield1983}. The definitions extend naturally to
  positive edge capacities.  \cite{Tutte1961a,NashWilliams1961b}'s
theorem shows that the network strength equals the maximum size of any
(fractional) tree packing; the maximum size of any integral tree
packing is the floor of the network strength.

There are several works on packing spanning trees in capacitated and
uncapacitated graphs
\cite{Imai1983a,RoskindTarjan1985,GabowWestermann1992,Gabow1991a,Cunningham1984,Trubin1991,Barahona1995,GabowManu1998}.
Besides the running times \cref{results}, \cite{GabowManu1998} obtains
a strongly polynomial running time of $\bigO{n^3 m \log{n^2 / m}}$.
Beside via packing spanning trees, there is further work on computing
the strength directly
\cite{Cunningham1985c,Gusfield1991,Barahona1992,ChengCunningham1994,Gabow1998}
via flow or parametric flow, culminating in
$\bigO{n^2 m \log{n^2/m}}$-time algorithms for computing the strength
via parametric flow (and related techniques)
\cite{ChengCunningham1994,Gabow1998}.  For covering by trees, besides
the running times mentioned in \cref{results}, \cite{GabowManu1998}
also obtains an $\bigO{n^3 m \log{n^2 / m}}$ running time for covering
by trees in the capacitated graphs, the same as for packing.


\end{document}